\documentclass[11pt,envcountsect,orivec]{llncs}

\usepackage{graphicx,color}
\usepackage{latexsym,amsfonts}
\usepackage{amsmath}
\usepackage{amssymb}
\usepackage{bbm}
\usepackage{array}

\usepackage{algorithm}
\usepackage{algorithmicx}
\usepackage[noend]{algpseudocode}
\usepackage{tikz}
\usepackage{url}
\usepackage{fullpage}
\usepackage{enumitem}

\usepackage{thmtools}
\usepackage{thm-restate}

\definecolor{blue}{RGB}{0,82,147} 
\definecolor{red}{RGB}{202,033,063}
\definecolor{green}{RGB}{98,158,31}

\usepackage[breaklinks=true]{hyperref} 
\hypersetup{
	colorlinks=true,
	citecolor=green,
	linkcolor=blue,  
	urlcolor=black,
}
\newcounter{lpnumber} \setcounter{lpnumber}{0}

\newcounter{dummy}
\makeatletter
\newcommand\myitem[1][]{\item[#1]\refstepcounter{dummy}\def\@currentlabel{#1}}
\makeatother

\newcommand{\myproblem}[1]{{\sf{#1}}}
\newcommand{\wt}{\mathsf{wt}}
\newcommand{\Nbr}{\mathsf{Nbr}}
\newcommand{\vote}{\mathsf{vote}}
\renewcommand{\k}{t}

\title{\Large \texorpdfstring{The popular assignment problem: \\ when cardinality is more important than popularity}{The popular assignment problem: when cardinality is more important than popularity}}
\author{Telikepalli Kavitha\inst{1} \and Tam\'{a}s Kir\'{a}ly\inst{2} \and Jannik Matuschke\inst{3} \and \\ Ildik\'{o} Schlotter\inst{4,5} \and Ulrike Schmidt-Kraepelin\inst{6}}

\institute{Tata Institute of Fundamental Research, Mumbai, India; \email{kavitha@tifr.res.in} 
\and E\"otv\"os Lor\'and University, Budapest, Hungary; \email{tamas.kiraly@ttk.elte.hu} 
\and Research Center for Operations Management, KU Leuven, Leuven, Belgium; \email{jannik.matuschke@kuleuven.be} 
\and Centre for Economic and Regional Studies, Budapest, Hungary; \email{schlotter.ildiko@krtk.hun-ren.hu} 
\and  Budapest University of Technology and Economics, Budapest, Hungary; 
\and Simons Laufer Mathematical Sciences Institute (SLMath), Berkeley, California, United States; \email{uschmidt@slmath.org}
}

\begin{document}

\date{}
\pagestyle{plain}
\maketitle
\begin{abstract}
We consider a matching problem in a bipartite graph $G=(A\cup B,E)$ where nodes in $A$ are
agents having preferences in partial order over their neighbors, while nodes in $B$ are objects without preferences. 
We propose a polynomial-time combinatorial algorithm based on LP duality that finds a maximum matching or \emph{assignment} in $G$  that is popular among all maximum matchings, if there exists one. Our algorithm can also be used to achieve a trade-off between popularity and cardinality by imposing a penalty on unmatched nodes in $A$.

We also provide an $O^*(|E|^k)$ algorithm that finds an assignment whose unpopularity margin is at most~$k$; this algorithm is essentially optimal, since the problem is $\mathsf{NP}$-complete and $\mathsf{W}_l[1]$-hard with parameter~$k$. 
We also prove that finding a popular assignment of minimum cost when each edge has an associated binary cost is $\mathsf{NP}$-hard, even if agents have strict preferences. 
By contrast, we propose a polynomial-time algorithm for the variant of the popular assignment problem
with forced/forbidden edges. Finally, we present an application in the context of housing markets.
\end{abstract}

\section{Introduction}
\label{sec:intro}
We consider a matching problem in a bipartite graph $G = (A \cup B,E)$ with {\em one-sided} preferences. Nodes in~$A$, also
called {\em agents}, have preferences in partial order over their neighbors, while nodes in~$B$, also called {\em objects}, have no preferences.
The preference order $\succ_a$ of an agent~$a \in A$ over its neighboring objects in~$G$ is a \emph{partial order} if $\succ_a$ is irreflexive and transitive.
The model of one-sided preferences is often called the {\em house allocation} problem as it arises in campus housing allocation in universities~\cite{AS98}. The fact that
preferences are {\em one-sided} here makes this model very different from the {\em marriage} problem introduced by Gale and Shapley~\cite{GS62} in 1962, where
{\em all} nodes have preferences over their neighbors.

Usually one seeks a matching in $G$ that is {\em optimal} in some sense. 
Popularity is a well-studied notion of optimality in the model of one-sided preferences. Any pair of matchings, 
say $M$ and $N$, can be compared by holding an election between them where agents are voters. Every agent prefers the matching where she gets assigned 
a more preferred partner, and being unmatched is her worst choice. 
Let $\phi(M,N)$ be the number of agents who prefer $M$ to~$N$. Then we 
say that $M$ is \emph{more popular than}~$N$ if $\phi(M,N) > \phi(N,M)$.
Let us write $\Delta(M,N) = \phi(M,N) - \phi(N,M)$. 

\begin{definition}
  \label{def:popular-matching}
  A matching $M$ is {\em popular} if there is no matching more popular than $M$, i.e., $\Delta(M,N) \ge 0$ for all matchings $N$ in $G$.
\end{definition}  

The \myproblem{popular matching} problem involves deciding if $G$ admits a popular matching, and finding one if so. 
This is a well-studied problem from 2005, 
and  there is an efficient algorithm to solve it
in the case when each agent's preference order is a 
\emph{weak ranking}~\cite{AIKM07}, i.e., when the indifference\footnote{We say that agent~$a \in A$ is \emph{indifferent} between objects~$b$ and~$b'$ if $b \not \succ_a b'$ and $b' \not \succ_a b$.} relation is transitive.
Our goal in this paper is to solve the following natural generalization of the \myproblem{popular matching} problem when agents have partial order preferences (these generalize weak rankings).

\paragraph{\bf The popular assignment problem.}
Consider applications where the {\em size} of the matching is of primary importance. It is natural that as many students as possible be assigned campus housing. Another application is in assigning final year medical and nursing students to hospitals during emergencies (such as a pandemic) to overcome staff shortage~\cite{ET21}. Preferences of these students are important but the size of the matching is more important, since we want to augment human resources as much as possible. Thus what we seek is not a popular matching but a popular {\em maximum matching}, i.e., among maximum matchings, a best one. 
Our approach to prioritize the cardinality of the matching is in stark contrast with most existing results in the area of popular matchings, 
where the foremost requirement is usually popularity.

By augmenting $G$ with dummy agents and artificial objects (see Section~\ref{sec:prelims}), we can assume that~$G$ admits a perfect matching, i.e., an \emph{assignment}. So
our problem becomes the popular perfect matching problem---we will call this the \myproblem{popular assignment} problem in $G$.
In other words, we seek  an assignment of objects to agents such that every agent is assigned an object and, roughly speaking, there is no assignment 
that makes more agents happier than it makes unhappier.

\begin{definition}
\label{def:assignment}
A perfect matching $M$ is a {\em popular assignment} if there is no perfect matching in~$G$ that is more popular than $M$, i.e.,
$\Delta(M,N) \ge 0$ for all perfect matchings~$N$ in~$G$.
\end{definition}

Thus, a popular assignment is a {\em weak Condorcet winner}~\cite{Con85,condorcet} where all perfect matchings are candidates and agents are voters.
Weak Condorcet winners need not exist in a general voting instance; in our setting as well, a popular assignment need not exist in $G$.
Consider the following simple example where $A = \{a_1,a_2,a_3\}$ and $B = \{b_1,b_2,b_3\}$, and $G$
is the complete bipartite graph~$K_{3,3}$, i.e., every agent and object are adjacent. Suppose that every agent has the same (strict) preference ordering:
$b_1 \succ b_2 \succ b_3$, i.e., $b_i$ is the $i$-th choice for $i = 1,2,3$. 
For any assignment here, there exists a more popular assignment, e.g., $\{(a_1,b_3),(a_2,b_1),(a_3,b_2)\}$ is more popular than 
$\{(a_1,b_1),(a_2,b_2),(a_3,b_3)\}$. Therefore, this instance has no popular assignment.

\paragraph{\bf The \myproblem{popular assignment} problem.}
Given a bipartite graph $G = (A \cup B, E)$ where every $a \in A$ has preferences in partial order
over her neighbors, does $G$ admit a popular assignment? If so, find one.

\smallskip

It is easy to show instances  that admit popular assignments but do not have any popular matching (see Section~\ref{app:no-pop-assignment}). Interestingly, an algorithm for the \myproblem{popular assignment} problem also solves the \myproblem{popular matching} problem. By augmenting the given instance with artificial {\em worst-choice} objects and some dummy agents, we can construct an instance $G'$ on at most twice as many nodes as in $G$ such that~$G$ admits a popular matching if and only if $G'$ admits a popular assignment (this simple reduction is given in Section~\ref{app:pop-matching-reduction}). Thus, the \myproblem{popular assignment} problem generalizes the \myproblem{popular matching} problem. 
Furthermore, we introduce a family of problems that interpolates between these two problems by introducing a penalty (parameterized by~$\k$) for unmatched agents. In particular, the popular matching problem with penalty $\k=1$ equals the \myproblem{popular matching} problem, while the same problem with $\k=n$ corresponds to the \myproblem{popular assignment} problem. In fact, we can reduce all of these problems to the \myproblem{popular assignment} problem (see Section~\ref{sec:k-level-algo}).

By adjusting the usage of worst-choice objects appropriately, an algorithm for \myproblem{popular assignment} can solve 
the following more general variant of  both the \myproblem{popular matching} problem and the \myproblem{popular assignment} problem, and thus opens possibilities to a wide spectrum of applications.

\paragraph{\bf Popularity with diversity.} 
Consider instances~$G = (A \cup B, E)$ where every agent has one of $k$ colors associated with it, and we are interested in only those (not necessarily perfect) matchings that match for every $i \in \{1,\ldots,k\}$, 
$c_i$ agents of color~$i$, where $s_i \le c_i \le t_i$ for some given bounds~$s_i$ and~$t_i$, i.e., only those matchings that satisfy these lower and upper bounds for every color are admissible. We seek a matching that is popular {\em within} the set of admissible matchings (see Section~\ref{app:pop-matching-reduction} for a reduction to \myproblem{popular assignment}).

Public housing programs constitute an application where such problems arise.
For example, in Singapore, $70\%$ of the residential real estate is managed by a public housing program which promotes ethnic diversity by imposing quotas on each housing block and ethnic group. Motivated by this market, Benabbou et al. \cite{BCH+18a} study a similar model with cardinal utilities.

\subsection{Our contribution}
Our first result is that the \myproblem{popular assignment} problem (with general partial order preferences) can be solved in polynomial time. Let $|A| = |B| = n$ and $|E| = m$.

\begin{theorem}
  \label{thm:algo}
  The \myproblem{popular assignment} problem in $G = (A \cup B, E)$ can be solved in $O(m\cdot n^{5/2})$ time. 
\end{theorem}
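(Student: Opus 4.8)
\emph{Reduction to an LP-duality certificate.} First I would fix a candidate perfect matching $M$ and reformulate popularity as a max-weight matching condition. For an edge $(a,b) \in E$ set $\wt_M(a,b) = +1$ if $b \succ_a M(a)$, $\wt_M(a,b) = -1$ if $M(a) \succ_a b$, and $\wt_M(a,b) = 0$ otherwise (the agent is indifferent between, or cannot compare, $b$ and $M(a)$). Then for every perfect matching $N$ one has $\Delta(N,M) = \sum_{(a,b) \in N} \wt_M(a,b)$, and since $M$ itself has weight $0$, Definition~\ref{def:assignment} says that $M$ is a popular assignment if and only if the maximum weight of a perfect matching under $\wt_M$ is exactly $0$. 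As the bipartite perfect-matching polytope is integral, LP duality turns this into the existence of a dual vector $y \in \mathbb{R}^{A \cup B}$ with $y_a + y_b \ge \wt_M(a,b)$ for all $(a,b) \in E$ and $\sum_{v} y_v \le 0$; complementary slackness forces $y_a + y_b = 0 = \wt_M(a,M(a))$ on the edges of $M$. So the whole task reduces to finding an assignment $M$ and a dual $y$ satisfying these conditions simultaneously.

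\emph{Discretizing the dual via a level function.} The combinatorial heart is to show that, whenever a popular assignment exists, one exists together with an optimal dual whose entries are integers in a bounded range, and that the agent-side duals can be encoded by a \emph{level function} $\ell \colon A \to \{0,1,\dots,n-1\}$, with the object-side duals determined from $\ell$. Intuitively $\ell(a)$ records how far down her preference order agent $a$ has been pushed: at level $i$ she is willing to be matched only to objects that are maximal (with respect to $\succ_a$) among those still available to her given the higher-priority commitments encoded by the other levels. Under this encoding, dual feasibility and tightness translate into a purely combinatorial condition on the subgraph $G_\ell$ of \emph{tight} edges, so that finding the certifying dual becomes the problem of finding a level function for which $G_\ell$ admits a perfect matching.

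\emph{The algorithm.} I would start from $\ell \equiv 0$ and maintain $G_\ell$. In each round I attempt to compute a perfect matching of $G_\ell$ by Hopcroft--Karp. If one exists, it is a popular assignment and $\ell$ yields the required dual $y$. If Hall's condition fails, I locate a deficient set of objects and raise by one the level of (at least one) offending agent, which enlarges the set of objects she finds acceptable, and I repeat. Levels are monotonically nondecreasing and bounded by $n-1$; if an increase would exceed this bound I declare that no popular assignment exists. Since each of the $n$ agents is promoted at most $n-1$ times, there are $O(n^2)$ rounds, and each recomputes a matching in $O(m\sqrt{n})$ time, giving the claimed $O(m \cdot n^{5/2})$ running time.

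\emph{Main obstacle.} The delicate part is establishing correctness of the level scheme for general partial orders. I expect three things to require real work: (i) that raising a level strictly enlarges the relevant acceptable set and never destroys demand that was previously satisfiable, so that progress is genuine and the $O(n^2)$ bound holds; (ii) that the terminal tight-edge matching together with $\ell$ indeed satisfies $y_a + y_b \ge \wt_M(a,b)$ on \emph{every} edge and has total dual value $0$; and (iii) the converse direction, that if the bounded process fails then a Hall/deficiency witness certifies, via the same LP duality, that no popular assignment exists. The recurring subtlety is incomparability: since a vote of $0$ arises both from indifference and from incomparable objects, the definitions of ``maximal acceptable object'' and of tightness must be reconciled so that $G_\ell$ captures complementary slackness exactly. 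This is where I expect most of the technical effort to concentrate.
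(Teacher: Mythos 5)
Your first paragraph is exactly the paper's starting point (the weight function $\wt_M$, integrality of the bipartite perfect-matching polytope, and reduction of popularity to the existence of a bounded integral dual; cf.\ Theorem~\ref{thm:certificate}), and your algorithmic skeleton---iteratively build a level-induced subgraph, run Hopcroft--Karp, raise levels on failure, reject once a level would exceed $n-1$, for $O(n^2)$ rounds of $O(m\sqrt{n})$ each---is also the paper's (Section~\ref{sec:algo}). But you attach the level function to the \emph{agents}, and this is not a cosmetic slip. With one-sided preferences, $\wt_M(a,b)$ compares $b$ against $M(a)$ using $a$'s preferences; hence, once duals are fixed on the \emph{objects}, one can decide agent by agent, without knowing $M$, which partners are compatible with dual feasibility. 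That is precisely how the paper defines $G_\ell$ from $\ell:B\to\mathbb{N}$: an agent may be matched to an undominated neighbor on her highest level $\ell^*(a)$, or to a neighbor one level below that dominates all her top-level neighbors (Lemma~\ref{lem:G-ell}). Agent-side duals, by contrast, are pinned down only \emph{through the matching} (complementary slackness gives $\alpha_a=-\alpha_{M(a)}$), so a ``subgraph of tight edges'' defined from agent levels is circular: whether $(a,b)$ can be tight depends on the dual of $b$, i.e., on the level of $b$'s partner, i.e., on the matching you are trying to construct. Concretely, any scheme of the form ``at level $i$ agent $a$ accepts the maximal objects among those not yet exhausted'' already fails on the paper's $K_{3,3}$ no-instance (all agents rank $b_1\succ b_2\succ b_3$): pushing agents down their orders eventually makes a perfect matching of accepted edges appear, which your algorithm would output, yet no popular assignment exists; the paper's object-level algorithm instead drives some $\ell(b)$ up to $n$ and correctly rejects.

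Second, your correctness plan rests on a premise that is false even in the working (object-level) version of the scheme: raising levels does \emph{not} monotonically enlarge the induced subgraph, and it can destroy previously satisfiable demand. When an unmatched object's level rises, edges of neighboring agents can disappear from $G_\ell$ (for instance, the risen object may become an agent's unique top-level neighbor, severing her edges to objects two levels down that she was previously allowed); objects matched in the previous round can thereby become isolated. So progress cannot be argued via growing acceptable sets; termination is just the trivial potential $\sum_{b\in B}\ell(b)\le n^2$, and the real content is soundness of each raise: Lemma~\ref{lem:alpha-vs-ell} shows, via an augmenting-path argument in $M\oplus M^\star$, that any object left unmatched by a maximum matching of $G_\ell$ satisfies $\ell(b_0)<|\alpha_{b_0}|$ for \emph{every} dual certificate $\vec{\alpha}$ of \emph{every} popular assignment $M^\star$; induction (Theorem~\ref{thm:pop2}) then guarantees the algorithm's levels never overtake any certificate, which is what justifies rejection at level $n$. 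Relatedly, the paper raises \emph{all} unmatched objects, whereas you raise ``at least one offending agent'' per deficient set, with no selection rule and no safety lemma for that choice. In short, your items (i)--(iii) are not routine verifications to be filled in---they constitute the theorem---and the monotonicity statement (i) that your plan relies on is false.
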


Note that Theorem~\ref{thm:algo} implies that the \myproblem{popular matching} problem with general partial order preferences can be solved in polynomial time, using the reduction described in Section~\ref{app:pop-matching-reduction}. 
Curiously, the popular matching algorithm in \cite{AIKM07} does not generalize to partial order preferences; we include an example that illustrates this in Section~\ref{app:characterization-of-pop-matchings}.
To the best of our knowledge, no polynomial-time algorithm for solving the \myproblem{popular matching} problem with partial order preferences has been known prior to Theorem~\ref{thm:algo}.

Partial order preferences arise naturally, e.g., when agents have cardinal utilities for the objects that then induce \emph{semiorder} preferences.
This means that
for any agent, a pair of objects with widely differing utilities get compared by their utilities, while a pair of objects with utilities that are within a small threshold 
are deemed equivalent, thus the agent is indifferent between two such objects.\footnote{Yu Yokoi posed the \myproblem{popular matching} problem with semiorder preferences as an open problem in a talk on ``Approximation Algorithms for Matroidal and Cardinal Generalizations of Stable Matching'' at RIMS, Kyoto in May 2023 (she was not aware that the conference version of our paper~\cite{KKMSS-soda2022} solves the \myproblem{popular matching} problem with partial order preferences).}  Observe that the indifference relation here need not be transitive.

When a popular assignment does not exist in~$G$, a natural extension is to ask for an {\em almost} popular assignment, i.e., an assignment with {\em low}
unpopularity. How do we measure the unpopularity of an assignment? A well-known measure is the {\em unpopularity margin}~\cite{McC08} 
defined for any assignment~$M$ as $\mu(M) = \max_N (\phi(N,M) - \phi(M,N))=\max_N \Delta (N,M)$,
where the maximum is taken over all assignments, that is, all perfect matchings~$N$ in~$G$. Thus $\mu(M)$ is the maximum margin by which another assignment defeats~$M$. 

An assignment~$M$ is popular if and only if $\mu(M) = 0$. Let the \myproblem{$k$-unpopularity margin} problem be the problem of deciding if $G$ admits an 
assignment with unpopularity margin at most~$k$. We generalize Theorem~\ref{thm:algo} to show the following result.

\begin{theorem}
\label{thm:unpopmargin-algo-xp}
For any $k \in \mathbb{Z}_{\ge 0}$, the \myproblem{$k$-unpopularity margin} problem in $G=(A \cup B,E)$ can be solved in $O(m^{k+1} \cdot n^{5/2})$ time.
\end{theorem}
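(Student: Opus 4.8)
The plan is to generalize the LP-duality certificate that underlies Theorem~\ref{thm:algo}. I would first recall (or re-derive) that for any assignment~$M$, the value $\mu(M)=\max_N \Delta(N,M)$ is exactly the maximum weight of a perfect matching in~$G$ under the edge weights $\wt_M(a,b)\in\{-1,0,+1\}$ recording the vote of agent~$a$ when comparing object~$b$ against her partner~$M(a)$; indeed $\Delta(N,M)=\sum_{a\in A}\wt_M(a,N(a))$. Since the bipartite perfect-matching polytope is integral, LP duality yields
\[
  \mu(M)\;=\;\min\Big\{\textstyle\sum_{v\in A\cup B} y_v \ :\ y_a+y_b\ge \wt_M(a,b)\ \text{for every }(a,b)\in E\Big\},
\]
with the minimum attained by an \emph{integral} potential vector~$y$. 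Because $\wt_M(a,M(a))=0$, every matched edge satisfies $y_a+y_{M(a)}\ge 0$, so writing $s_a:=y_a+y_{M(a)}\ge 0$ and using that $M$ is perfect gives $\sum_v y_v=\sum_{a\in A}s_a$. Hence $\mu(M)\le k$ holds if and only if there is an integral potential feasible for $\wt_M$ whose matched-edge slacks are nonnegative and sum to at most~$k$. For $k=0$ this recovers the ``all matched edges tight'' certificate behind Theorem~\ref{thm:algo}; for $k>0$ it merely permits a budget of~$k$ to be distributed among the matched edges.

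Since the $s_a$ are nonnegative integers summing to at most~$k$, at most~$k$ agents have $s_a>0$. The algorithm would therefore enumerate the possible ``exceptional'' matched edges: I would guess an ordered tuple $(e_1,\dots,e_j)$ of $j\le k$ edges, interpreting each $e_i=(a_i,b_i)$ as a matched edge carrying one unit of slack, with repetitions encoding larger slack on the same edge. There are $\sum_{j\le k} m^j=O(m^{k})$ such tuples. For a fixed guess I would force $e_1,\dots,e_j$ into the assignment, relax the tightness requirement on exactly these edges by the guessed amount, and require every other matched edge to be tight. The resulting feasibility question is a mild variant of the one solved for Theorem~\ref{thm:algo} on a graph of the same size, and can be answered in $O(m\cdot n^{5/2})$ time. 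An assignment with $\mu(M)\le k$ exists if and only if some guess succeeds, giving the claimed $O(m^{k+1}\cdot n^{5/2})$ bound.

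The hard part will be the reduction in the last step: showing that, once we fix \emph{which} matched edges carry slack and \emph{how much}, searching for a compatible assignment together with an integral potential/level function is equivalent to a genuine popular-assignment instance (or a controlled variant) that the machinery of Theorem~\ref{thm:algo} solves without loss. Concretely, one must verify that relaxing the edge constraints at the $\le k$ exceptional agents by their guessed slack amounts to a purely local modification of those agents' admissible levels, and that this preserves the structural properties on which Theorem~\ref{thm:algo} relies---integrality of the dual, a polynomial bound on the level values, and the correctness of the level-by-level matching construction. A secondary point to check is that restricting each potential to a polynomial range loses no optimal certificate, so that the guess space is genuinely $O(m^{k})$ and each subproblem stays within the stated time bound.
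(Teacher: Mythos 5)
Your LP set-up is sound and coincides with the paper's: $\mu(M)\le k$ holds iff there is an integral dual potential whose matched-edge slacks (the paper calls them \emph{loads}) are nonnegative and sum to at most~$k$, and enumerating the $O(m^k)$ ways to distribute this budget over edges, forcing the guessed edges into the assignment and forbidding the other edges of their agents, is exactly the structure of Algorithm~\ref{alg:popassignmargin}. However, the step you defer as ``the hard part'' is precisely where the technical content of the paper's proof lies, and it is not a routine perturbation of Theorem~\ref{thm:algo}. Two concrete pieces are missing. First, the relaxed subproblem is not well defined until you say which edges an agent carrying slack may be matched along: the paper introduces \emph{$\lambda$-feasibility}, a three-case condition on levels generalizing conditions (i)--(ii) defining $G_\ell$, which characterizes exactly when an edge $(a,b)$ can carry load $\lambda$ against a level function $\ell$ (equivalently, when setting $\alpha_a=\ell(b)+\lambda$ keeps all dual constraints at $a$ satisfied), and this yields the graph $G_{\ell,\lambda}$ in which the maximum matchings are computed. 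Calling this a ``purely local modification of those agents' admissible levels'' skips over the fact that the set of admissible partners of an overloaded agent changes in a level-dependent way that has to be pinned down before the algorithm even makes sense.

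Second, and more seriously, correctness of the level-raising loop under the correct guess (the paper's ``significant branch'') needs a case analysis that genuinely departs from the proof of Theorem~\ref{thm:algo}. For an unmatched object $b$ not incident to any overloaded edge, the augmenting-path argument (Lemma~\ref{lem:alpha-vs-ell-general}) still works, but only after observing that agents incident to forced edges never lie on the relevant path in $M\oplus M^\star$; without that observation the lemma does not apply. For an unmatched object $b$ that \emph{is} incident to an overloaded (hence forced) edge $(a,b)\in M^\star$, the augmenting-path argument fails outright: $b$ being unmatched means the forced edge itself is missing from $G_{\ell,\lambda}$, and there is no alternating-path structure to exploit. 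The paper instead argues by monotonicity: $(a,b)$ is $\lambda(a,b)$-feasible with respect to $\ell_{\vec\alpha}$ (Proposition~\ref{prop:load+feasibility}) but not with respect to $\ell$, and since raising the levels of objects other than $b$ can never make $(a,b)$ become $\lambda$-feasible, the level of $b$ must be strictly below $|\alpha_b|$. Without these two arguments your claim that each subproblem ``can be answered in $O(m\cdot n^{5/2})$ time'' with the right answer is an assertion, not a proof. (Your secondary point---that integral dual certificates can be confined to a polynomial range---is indeed required and is the content of Theorem~\ref{thm:dual-cert-unpopular}, proved by a shifting argument showing the object values form a consecutive range, much as in the $k=0$ case.)
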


So when $k = \Theta(1)$, this is a polynomial-time algorithm to find a $k$-unpopularity margin assignment 
if $G$ admits one.
Via the reduction described in Section~\ref{app:pop-matching-reduction}, which preserves the unpopularity margin of any matching, Theorem~\ref{thm:unpopmargin-algo-xp} extends to the \myproblem{$k$-unpopularity margin} version of the \myproblem{popular matching} problem, i.e., to decide if $G$ admits a matching with unpopularity margin at most~$k$.
Though the measure of unpopularity margin was introduced in 2008, no polynomial-time algorithm was known so far to solve the $k$-unpopularity margin matching
problem for any $k \ge 1$.

Rather than the exponential dependency on the parameter~$k$ in Theorem~\ref{thm:unpopmargin-algo-xp}, 
can we solve the \myproblem{$k$-unpopularity margin} problem in polynomial time? 
Or at least can we achieve a running time of the form~$f(k)\mathsf{poly}(m,n)$ for some function~$f$ so that the degree of the polynomial is independent of~$k$?
That is, can we get a \emph{fixed-parameter tractable} algorithm with parameter~$k$?
The following hardness result shows that the algorithm of Theorem~\ref{thm:unpopmargin-algo-xp} is 
essentially optimal for the \myproblem{$k$-unpopularity margin} problem. See Section~\ref{hardness:k-unpop-margin} for the definition of
$\mathsf{W}_l[1]$-hardness.

\begin{theorem}
\label{thm:unpopmargin-hardness}
The \myproblem{$k$-unpopularity margin}  problem is $\mathsf{W}_l[1]$-hard with parameter~$k$ when agents' preferences are weak rankings, 
and it is $\mathsf{NP}$-complete 
even if preferences are strict rankings.
\end{theorem}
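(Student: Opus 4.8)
The plan is to settle membership first and then establish the two hardness claims by reduction. For membership in $\mathsf{NP}$, a certificate is simply an assignment $M$, and it suffices to compute $\mu(M)=\max_N \Delta(N,M)$ in polynomial time and check $\mu(M)\le k$. Assigning to each edge $(a,b)$ a weight $w_{ab}\in\{-1,0,+1\}$ recording whether $a$ prefers $b$ to $M(a)$, is indifferent, or prefers $M(a)$, we have $\Delta(N,M)=\sum_{(a,b)\in N} w_{ab}$, so $\mu(M)$ is the value of a maximum-weight perfect matching and is computable in polynomial time. Equivalently, decomposing $M\oplus N$ into $M$-alternating cycles shows that $\Delta(N,M)$ is the sum of the net votes of these cycles, so a defeating assignment $N$ is a packing of vertex-disjoint positive-gain alternating cycles. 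This last view is what guides the reductions: in the constructed instance our task is to make the maximum such packing encode the target combinatorial object.

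For the $\mathsf{W}_l[1]$-hardness with weak rankings I would reduce from a problem that is $\mathsf{W}[1]$-hard parameterized by solution size, such as \myproblem{Multicolored Independent Set}. Given a graph $H$ and a target size $k$, I would build a vertex-selection gadget (agents and objects, with indifference used to make certain objects interchangeable) for each color class, plus a conflict gadget for each edge of $H$. The preferences would be calibrated so that every assignment $M$ picks one vertex per color class, i.e.\ a transversal $S$, and so that each edge of $H$ with both endpoints in $S$ makes exactly one extra positive alternating cycle available to the adversary, while the conflict gadgets are laid out so that these cycles are vertex-disjoint and can all be packed simultaneously. With the votes set so that a conflict-free selection has a fixed baseline margin, the bound $\mu(M)\le k$ (for the appropriate constant-shifted $k$) will hold if and only if $S$ is an independent set of the required size. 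Since the produced instance has parameter depending only on $k$, this is a valid parameterized reduction and yields $\mathsf{W}_l[1]$-hardness; because the source problem is also $\mathsf{NP}$-complete and the reduction runs in polynomial time, the same construction already establishes $\mathsf{NP}$-hardness for weak rankings.

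It remains to strengthen the $\mathsf{NP}$-hardness to strict rankings, which is where the real work lies. Here indifference is forbidden, so every tie in the weak-ranking construction must be simulated by a strict order together with auxiliary agents and objects that restore the symmetry the ties provided: parallel copies that force the relevant alternating cycles to carry equal net votes no matter how the former tie is resolved. Since these tie-breaking gadgets introduce new structure the adversary might exploit, I would re-run the cycle-packing analysis to confirm that $\mu(M)$ still equals the baseline plus the number of violated edges, adjusting vote multiplicities as needed and routing every gadget through worst-choice and dummy objects of the kind introduced in Section~\ref{sec:prelims} so that the adversary's only remaining freedom is whether or not to traverse a conflict cycle. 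I expect this strict-ranking adaptation to be the main obstacle: the tie-breaking must neither create spurious cheap defeating cycles (which would make honest instances look bad) nor destroy the cycles that certify a bad selection. These gadgets will blow up the parameter, which is exactly why the strict case gives $\mathsf{NP}$-hardness but not the parameterized $\mathsf{W}_l[1]$-hardness obtained in the weak-ranking construction.
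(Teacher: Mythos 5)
Your $\mathsf{NP}$-membership argument is correct and matches the paper's (compute $\mu(M)$ as the optimum of \ref{LP1}, i.e., a maximum-weight perfect matching), and your plan for the strict-rankings part is the same in spirit as the paper's Lemma~\ref{lem:reduction-weak-to-strict}: simulate each tie by auxiliary agents and objects at the cost of a parameter shift $q=O(n)$, which is exactly why strict rankings yield only $\mathsf{NP}$-hardness. The genuine gap is in the core reduction for weak rankings: what you give is a design intention, not a construction, and the step you defer is precisely the hard one. To show that a good selection yields margin at most the threshold, you must bound $\Delta(N,M)$ over \emph{all} perfect matchings $N$ --- equivalently over all packings of positive alternating cycles --- and your sketch offers no mechanism for this upper bound. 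The paper obtains it by exhibiting an explicit feasible solution to \ref{LP2} of value $k$ (a dual certificate, Lemma~\ref{lem:unpopmargin-reduction-dir1}), and for the converse direction it exploits the certificate's structure (overloaded edges) together with a counting argument. Note also that the paper's construction works in the opposite way to yours: the margin budget pays for the \emph{selection} itself (each selected vertex forces an overloaded edge in its gadget, giving $|S|+|E_0|\le\mu(M)\le k$), and the clique structure is then forced by counting how many edge-objects must be absorbed; this is exactly why the paper introduces the bespoke intermediate problem \myproblem{CliqueHog} --- the two extra outgoing edges per clique vertex are needed to make the counting close --- rather than reducing from \myproblem{Clique} or an independent-set variant directly. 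Your inverted design (``selection is free, each conflict adds a disjoint $+1$ cycle'') may or may not be realizable, but nothing in the proposal establishes that every low-margin assignment encodes a valid selection, that conflict cycles are the only extra positive cycles available to the adversary, or that they can always be packed disjointly.

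Separately, your justification of $\mathsf{W}_l[1]$-hardness is wrong as stated. ``The produced instance has parameter depending only on $k$'' is the condition for an ordinary FPT-reduction and would only transfer $\mathsf{W}[1]$-hardness; $\mathsf{W}_l[1]$-hardness requires a \emph{linear} FPT-reduction --- the new parameter must be bounded linearly in the old one --- and the source problem must itself be $\mathsf{W}_l[1]$-hard, whereas you invoke only the $\mathsf{W}[1]$-hardness of \myproblem{Multicolored Independent Set}. This matters because the entire point of this part of Theorem~\ref{thm:unpopmargin-hardness} is the $f(k)|I|^{o(k)}$ lower bound matching Theorem~\ref{thm:unpopmargin-algo-xp}, which follows from the results of Chen et al.~\cite{chen-w1-vs-eth-stronger} only under linear reductions. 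Concretely, you would need to (i) start from a problem with a known linear reduction chain from \myproblem{wcnf 2sat$^-$} (the paper uses \myproblem{Clique} for this reason), and (ii) verify that your margin threshold --- baseline plus shift --- is $O(k)$; if the baseline accumulates contributions per pair of color classes or per gadget it could be $\Theta(k^2)$, which would void the $\mathsf{W}_l[1]$ claim even though the construction would remain a valid FPT-reduction.
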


\paragraph{\bf Popularity with penalty for unmatched agents.}
We also 
consider a model that allows for a trade-off between popularity and cardinality by imposing a \emph{penalty} for unmatched agents. A matching $M$ in $G = (A \cup B, E)$ is \emph{popular with penalty $\k$} if it does not lose a pairwise 
election against
any matching $N$ in $G$, where votes of agents that are matched in $M$ but unmatched in $N$ (or vice versa) are counted with weight~$\k$; 
see Section~\ref{sec:k-level-algo} for a formal definition. 
If $M$ is a matching that is popular with penalty $\k$, then $|M| \ge \frac{\k}{\k+1}|M_{\max}|$ 
where $M_{\max}$ is a maximum matching in $G$.

We prove that a {\em matching} that is popular with penalty $\k$, if it exists, can be found via an appropriately constructed instance of the \myproblem{popular assignment} problem (without penalty). Moreover, we show that a truncated variant of our algorithm for the \myproblem{popular assignment} problem can find an {\em assignment} that is popular with penalty~$\k$ (when compared against any matching), if such an assignment exists.

\paragraph{\bf Minimum-cost popular assignment.} We next consider the \myproblem{minimum-cost popular assignment} problem: 
suppose that, in addition to our instance $G=(A \cup B,E)$ of the \myproblem{popular assignment} problem 
where every agent has preferences in partial order
over her adjacent objects,
there is a cost function~$c:E \rightarrow \mathbb{R}$ on the edges and a budget $\beta \in \mathbb{R}$, and we want to decide whether
$G$ admits a popular assignment whose sum of edge costs is at most $\beta$.
Computing a minimum-cost popular assignment efficiently would also imply an efficient algorithm for finding a popular assignment with \emph{forced/forbidden} edges. We show the following hardness result.

\begin{theorem}
\label{thm:mincost-nphard-01}
The \myproblem{minimum-cost popular assignment} problem is $\mathsf{NP}$-complete, even if each edge cost is in $\{0,1\}$
and agents have strict preferences.
\end{theorem}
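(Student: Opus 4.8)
The plan is to establish both membership in \textsf{NP} and \textsf{NP}-hardness, the latter by a reduction from a covering problem, keeping all preferences strict and all costs in $\{0,1\}$.

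For membership, I would guess an edge set $M\subseteq E$ and check in polynomial time that (i)~$M$ is a perfect matching, (ii)~$c(M)\le\beta$, and (iii)~$M$ is a popular assignment. Only (iii) is non-trivial, but it is polynomial: $M$ is popular iff $\mu(M)=0$, and $\mu(M)=\max_N\Delta(N,M)$ over perfect matchings $N$ can be computed as a maximum-weight perfect matching, where each edge $ab$ gets weight $w(a,b)=+1$ if $a$ prefers $b$ to $M(a)$, weight $-1$ if $a$ prefers $M(a)$ to $b$, and weight $0$ when $b=M(a)$ (under strict preferences these are the only cases). Since $N=M$ already yields total weight $0$, the maximum is always nonnegative, so $M$ is popular iff this maximum equals $0$. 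Hence the problem lies in \textsf{NP}.

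For hardness, the guiding observation (implicit in the excerpt) is that a popular assignment can always be found in polynomial time and that forbidding edges is tractable; the difficulty must therefore stem from the budget, which caps how many cost-$1$ edges a popular assignment may use. This points to a reduction from a covering problem such as \myproblem{Vertex Cover}, with $\beta$ playing the role of the cover size. Given a graph $H=(V_H,E_H)$ and an integer $\kappa$, I would build $G$ so that each vertex $v\in V_H$ gets a gadget admitting exactly two popular configurations: a \emph{cheap} one of cost $0$ (encoding $v\notin\text{cover}$) and an \emph{expensive} one of cost $1$ (encoding $v\in\text{cover}$), with every mixed or malformed configuration rendered non-popular. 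Each edge $e=uv\in E_H$ would get a gadget that can be completed into a popular assignment only if the gadget of $u$ or of $v$ is in its expensive state; otherwise an agent of the edge gadget would have a strictly preferred object available, producing an assignment that defeats the candidate. Setting $\beta=\kappa$, a popular assignment of cost at most $\kappa$ then exists iff $H$ has a vertex cover of size at most $\kappa$.

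The crux — and the step I expect to be hardest — is the gadget analysis: I must show that the popular assignments of $G$ correspond \emph{exactly} to vertex covers, with no spurious popular assignment circumventing the intended binary choices, and that the cost of each popular assignment equals the number of vertex gadgets in their expensive state. Concretely I would (a) use the LP-dual/level-function certificate underlying Theorem~\ref{thm:algo} to characterize precisely which perfect matchings of each gadget are popular, (b) argue that popularity is sufficiently \emph{local}, so that a perfect matching of $G$ is popular iff its restriction to every gadget is a popular configuration and the states are globally consistent with the cover condition, and (c) verify the claimed cost accounting. Part (b) is the delicate point: I must design the gadgets so that every improving alternating path or cycle stays confined to a single gadget, ruling out a global switching argument that stitches together partial improvements across several gadgets to defeat an otherwise-popular candidate.
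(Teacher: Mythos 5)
Your $\mathsf{NP}$-membership argument is correct and matches the paper's observation that $\mu(M)$ is the optimum of a maximum-weight perfect matching problem, hence polynomial-time computable. Your hardness plan also starts on the same track as the paper: the paper does reduce from \myproblem{Vertex Cover} (via the stronger Theorem~\ref{thm:housealloc-nphard}), with vertex gadgets admitting exactly two admissible configurations (a cheap one and an expensive one), edge gadgets making a binary choice of which endpoint covers them, and the budget capping the number of expensive vertex gadgets. (The exact costs differ---in the paper the two configurations of a size-$s$ gadget cost $2s$ and $3s$ rather than $0$ and $1$, and $\beta$ is a correspondingly scaled expression rather than $\kappa$ itself---but that is cosmetic.)

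The genuine flaw is in your step (b). You propose to design the gadgets so that \emph{every} improving alternating path or cycle stays confined to a single gadget. If you achieved this, popularity would become a gadget-separable property: a perfect matching would be popular if and only if its restriction to each gadget admits no improving cycle inside that gadget. But then the set of popular assignments would be a product of independent local choices, \emph{every} combination of gadget states would be popular, and in particular the all-cheap assignment (cost $0$, encoding the empty cover) would be popular for every input graph $H$---so the reduction could never distinguish yes- from no-instances. The cover condition can only be enforced by popularity through \emph{cross-gadget} interactions, which is exactly what your design goal forbids. The paper resolves this tension in the opposite way: it introduces inter-gadget edges that never belong to any popular assignment within budget, but whose dual feasibility constraints $\alpha_a + \alpha_b \ge \wt_M(a,b)$ couple the state of each edge gadget to the states of its two vertex gadgets; a huge ``level-setting'' gadget pins down the relevant dual values (e.g.\ forcing $\alpha_{b_\ell^{-3}}=-3$), and when an edge of $H$ is uncovered, no feasible dual certificate exists---equivalently, there is a more popular assignment whose improving cycle necessarily crosses several gadgets. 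So the delicate point you identified should not be ``ruled out'' but embraced and controlled via the LP-duality machinery of Theorem~\ref{thm:certificate}; without that, your reduction cannot work as stated.
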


This result reveals a surprising difference between the \myproblem{popular assignment} problem and the \myproblem{popular matching} problem; 
for the latter problem, arbitrary linear cost functions can be minimized efficiently when preferences are
weak rankings~\cite{AIKM07}.
The $\mathsf{NP}$-hardness for the \myproblem{minimum-cost popular assignment} problem implies that given a set of desired edges,
it is $\mathsf{NP}$-hard (even for strict preferences) to find a popular assignment that contains the maximum number of desired edges.
Interestingly, in spite of the above hardness result, 
we show the following positive result that
the \myproblem{popular assignment} problem with partial order preferences and forced/forbidden edges is tractable. 
Note that the assignment $M$ must be popular among \emph{all} assignments, not only those adhering to the forced and forbidden edge constraints.
Thus, although the {\em desired} edges variant is hard, the {\em forced} edges variant is easy.

\begin{theorem}
\label{thm:forced-forbidden}
Given a set $F^+ \subseteq  E$ of forced edges and another set $F^- \subseteq  E$ of forbidden edges, we can determine in polynomial
time if there exists a popular assignment $M$ in $G = (A \cup B, E)$ such that $F^+ \subseteq M$ and $F^- \cap M = \emptyset$.
\end{theorem}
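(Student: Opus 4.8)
The plan is to reduce the problem to finding a constrained perfect matching in a single fixed auxiliary graph, exploiting the LP-duality characterization that underlies Theorem~\ref{thm:algo}. I expect the proof of Theorem~\ref{thm:algo} to yield a dual certificate --- a node labeling $y \colon A \cup B \to \mathbb{Z}$ satisfying certain feasibility constraints --- together with an associated subgraph $G_y$ of \emph{tight} edges (those whose reduced cost with respect to $y$ vanishes), such that a perfect matching $M$ is a popular assignment if and only if $M \subseteq G_y$, i.e., $M$ uses only tight edges. If $G$ admits no feasible dual, then it has no popular assignment and we answer ``no'' immediately.

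First I would compute such a certificate $y$ and the tight subgraph $G_y$ in polynomial time via the algorithm of Theorem~\ref{thm:algo}. Granting the structural fact that the set of popular assignments equals the set of perfect matchings of this single graph $G_y$, the forced and forbidden constraints translate directly into constraints on perfect matchings of $G_y$. A forced edge $e \in F^+$ can belong to a popular assignment only if $e \in G_y$, so if $F^+ \not\subseteq G_y$ we answer ``no''; likewise, if $F^+$ is not itself a matching (some node lies on two forced edges), no assignment can contain it. Otherwise I delete the forbidden edges $F^-$ from $G_y$, remove the endpoints of the forced edges $F^+$ (thereby committing those edges to the matching), and test whether the remaining bipartite graph admits a perfect matching on the uncovered nodes. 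Such a matching exists if and only if $G$ has a popular assignment $M$ with $F^+ \subseteq M$ and $F^- \cap M = \emptyset$; this final test takes $O(m \sqrt{n})$ time by standard bipartite matching, so the whole procedure runs in polynomial time, dominated by the computation of $y$.

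The main obstacle is precisely the structural claim that \emph{all} popular assignments are captured by one common dual certificate, so that fixing a single $G_y$ discards no popular assignment. This is where the real work lies: I would extract from the proof of Theorem~\ref{thm:algo} a canonical (say, pointwise-extremal) feasible dual $y$ and argue, via complementary slackness, that any perfect matching $N$ that is popular must be tight with respect to this canonical $y$ --- equivalently, that a popular $N$ using a non-tight edge would contradict the optimality certified by $y$. Once this canonicity is established, the reduction to constrained bipartite matching is routine, which yields the claimed polynomial-time algorithm.
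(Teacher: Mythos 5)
Your proposal has a genuine gap, and it is fatal: the structural claim you rely on --- that there is a single, polynomial-time computable dual certificate $y$ whose tight (or level-induced) subgraph $G_y$ has as its perfect matchings \emph{exactly} the popular assignments --- is false. One inclusion does hold, essentially by the complementary-slackness argument you sketch: any two popular assignments tie with each other (if $M,N$ are both popular then $\Delta(N,M)\le 0$ and $\Delta(M,N)=-\Delta(N,M)\le 0$, so $\Delta(N,M)=0$), hence every popular assignment is an optimal solution of \ref{LP1} for the weights $\wt_M$ and lies in the tight subgraph of any one certificate. But the reverse inclusion fails, and, worse, the concrete candidate you propose (the pointwise-minimal dual produced by the algorithm of Theorem~\ref{thm:algo}, i.e.\ the graph $G_\ell$ for the minimal level function) already violates the \emph{first} inclusion. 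Counterexample: $A=\{a_1,a_2,a_3\}$, $B=\{b_1,b_2,b_3\}$ with $a_1\colon b_1\succ b_2$; $a_2\colon b_1\succ b_2\succ b_3$; $a_3\colon b_2\succ b_3$ (each agent adjacent exactly to the listed objects). There are three assignments: $N_1=\{(a_1,b_1),(a_2,b_2),(a_3,b_3)\}$, $N_2=\{(a_1,b_1),(a_2,b_3),(a_3,b_2)\}$, $N_3=\{(a_1,b_2),(a_2,b_1),(a_3,b_3)\}$; here $N_1$ and $N_2$ are popular while $N_3$ is beaten by $N_2$. Algorithm~\ref{alg:popassign} terminates with $\ell(b_1)=\ell(b_2)=0$, $\ell(b_3)=1$ and returns $N_2$; the edge $(a_2,b_2)$ is \emph{not} in $G_\ell$ (it is dominated by $(a_2,b_1)$ at level $0$), so the popular assignment $N_1$ is not a perfect matching in $G_\ell$. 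With forced edge $F^+=\{(a_2,b_2)\}$ your algorithm answers ``no,'' yet the correct answer is ``yes.'' Popular assignments genuinely require \emph{different} level functions ($N_1$ needs $\ell=(0,1,2)$), and the non-monotone definition of $G_\ell$ means no single level function captures them all. If you instead use the tight subgraph of an arbitrary certificate, you get the opposite failure: e.g.\ the certificate $\alpha_{b}=(0,-1,-2)$, $\alpha_a=(0,1,2)$ for $N_1$ makes \emph{every} edge tight, so the tight subgraph contains the non-popular $N_3$, producing false positives. Note that Lemma~\ref{lem:G-ell} (``perfect matching in $G_\ell$ $\Rightarrow$ popular'') is a statement about $G_\ell$, not about tight subgraphs.

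There is also a complexity-theoretic reason your plan cannot be repaired by a cleverer choice of canonical certificate: if one could compute in polynomial time a graph whose perfect matchings are exactly the popular assignments, then \myproblem{minimum-cost popular assignment} would reduce to a min-cost perfect matching computation in that graph, contradicting Theorem~\ref{thm:mincost-nphard-01} unless $\mathsf{P}=\mathsf{NP}$ (the paper makes exactly this point when contrasting forbidden edges with costs). The paper's actual proof of Theorem~\ref{thm:forced-forbidden} avoids any fixed certificate: it first reduces forced edges to forbidden ones (forcing $(a,b)$ means forbidding all other edges at $a$), and then re-runs the level-raising search, computing at each iteration a maximum matching in $G_\ell - F$ rather than in $G_\ell$ (Algorithm~\ref{alg:popassignforbidden}). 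The dual certificate being searched for thus \emph{depends on} $F$, and the levels may rise beyond the $F$-free minimal ones; correctness follows by the same induction as Theorem~\ref{thm:pop2}, using Lemma~\ref{lem:alpha-vs-ell-general} and the observation that an augmenting path in $M\oplus M^\star$ avoids $F$ since both matchings do. This $F$-dependence of the certificate is precisely what your one-shot reduction to constrained matching misses.
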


Thus the \myproblem{popular assignment} problem is reminiscent of the well-known \myproblem{stable roommates} problem\footnote{This problem asks for a stable matching 
in a general graph (which need not be bipartite) with strict preferences.};
in a roommates instance,  
finding a stable matching  can be solved in polynomial time~\cite{Irv85} even with forced/forbidden edges~\cite{FIM07}, however
finding a minimum-cost stable matching 
is $\mathsf{NP}$-hard~\cite{Fed92}.

\paragraph{\bf Application to housing markets.}
Finally, in Corollary~\ref{prop:pop_allocation} we present an application of the algorithm of Theorem~\ref{thm:algo} 
that allows us to find popular allocations in so-called \emph{housing markets}, a model defined by Shapley and Scarf~\cite{SS74};
see Section~\ref{sec:housing} for all definitions concerning housing markets. 
We contrast this with Theorem~\ref{thm:housealloc-nphard}, a strengthening of Theorem~\ref{thm:mincost-nphard-01},
showing that it is $\mathsf{NP}$-hard to find a popular allocation that maximizes the number of trading agents in the housing market.

\paragraph{\bf Conference version.}
A preliminary version of this paper appeared in the proceedings of SODA 2022~\cite{KKMSS-soda2022}.
This version of the paper goes beyond the conference version in the following two points: (i) Section~\ref{sec:2-level-algo}, Section~\ref{sec:k-level-algo}, and Section~\ref{sec:housing} are new and (ii) Theorem~\ref{thm:mincost-nphard-01} is proved in its full generality; in fact, we prove a stronger version of Theorem~\ref{thm:mincost-nphard-01}, namely Theorem~\ref{thm:housealloc-nphard}. The conference version~\cite{KKMSS-soda2022} contained the proof of a weaker result which stated that the \myproblem{minimum-cost popular assignment} problem is $\mathsf{NP}$-complete when edge costs are in $\{0,1,+\infty\}$. 

\subsection{Background}
The notion of popularity in a marriage instance (where preferences are {\em two-sided} and strict) was introduced by G{\"a}rdenfors~\cite{Gar75} in 1975. 
Popular matchings always exist in such an instance, since any stable matching is popular~\cite{Gar75}. When preferences are one-sided, popular matchings 
need not exist. 
For the case when preferences are weak rankings, a simple and  clean combinatorial characterization of popular matchings (see Section~\ref{app:characterization-of-pop-matchings}) was given in \cite{AIKM07}, 
leading to an $O(m\sqrt{n})$ time algorithm~\cite{AIKM07} for the \myproblem{popular matching} problem.  
By contrast, a combinatorial characterization of popular assignments (even for strict rankings) remains elusive.
Finding a minimum unpopularity margin matching was proved to be $\mathsf{NP}$-hard~\cite{McC08}.

In the last fifteen years, popularity has been a widely studied concept. 
Researchers have considered extensions of the \myproblem{popular matching} problem  
where one aims for a popular matching satisfying some additional optimality criteria such as rank-maximality or fairness~\cite{KN09,MI11}, 
or where the notion of popularity is adjusted to incorporate capacitated objects or weighted agents~\cite{Mestre14,SM10}.
Another variant of the popular matching problem was considered in \cite{CHK17} where nodes in $A$ have strict preferences and nodes in~$B$, i.e., 
objects, have no preferences, however each object cares to be matched to any of its neighbors. We refer to \cite{Cseh17} for a survey on results in this area. 

Among the literature on popular matchings, only a few studies have considered a setting that focuses on popularity
within a restricted set of admissible solutions.
The paper that comes closest to our work is \cite{Kav14} which considered the popular maximum matching problem
in a marriage instance (where preferences are two-sided and strict). It was shown there that a popular maximum matching always exists in a marriage instance and
one such matching can be computed in $O(mn)$ time; the following size-popularity trade-off in a marriage instance was also shown in \cite{Kav14}: for any integer $t\ge 2$, there exists a matching $M_t$ with $|M_t| \ge \frac{t}{t+1}|M_{\max}|$ and {\em unpopularity factor} at most $t-1$.\footnote{The unpopularity factor of a matching $M$ is $\max_N\phi(N,M)/\phi(M,N)$, where the maximum is taken over all matchings $N \ne M$ in $G$~\cite{McC08}.} 
Very recently, it was shown in \cite{Kav20} that a minimum-cost popular maximum matching in a marriage
instance can be computed in polynomial time. These results use the rich machinery of stable matchings in a marriage instance~\cite{GS62,Rot92}.
In contrast to these positive results for popular maximum matchings, computing an {\em almost-}stable maximum matching (one with the least number of blocking edges) 
in a marriage instance is $\mathsf{NP}$-hard~\cite{BMM10}. 

\subsection{Techniques} 
Our popular assignment algorithm is based on LP duality. We show that a matching~$M$ is a popular assignment if and only if it 
has a \emph{dual certificate} 
$\vec{\alpha} \in \{0, \pm 1,\ldots, \pm (n-1)\}^{2n}$ fulfilling certain constraints induced by the matching $M$. 
Our algorithm (see Section~\ref{sec:algo}) can be viewed as a search for such a dual certificate.
It associates a {\em level} $\ell(b)$ with every $b \in B$. 
This level function $\ell$ guides us in constructing a subgraph $G_{\ell}$ of $G$. 
If $G_{\ell}$ contains a perfect matching, then this matching is a popular assignment in $G$ and the levels determine a corresponding dual certificate.
If $G_{\ell}$ has no perfect matching, then we increase some $\ell$-values and update $G_{\ell}$ accordingly, until eventually $G_{\ell}$ contains a perfect matching or the level of an object increases beyond the permitted range, in which case we prove that no popular assignment exists.

The LP method for popular matchings was introduced in \cite{KMN09} and dual certificates for popular (maximum) matchings in marriage instances were shown in \cite{Kav16,Kav20}.
However, dual certificates for popular matchings in instances with one-sided preferences have not been investigated so far. The existence of simple dual certificates 
for popular assignments is easy to show (see Section~\ref{sec:prelims}), but this does not automatically imply polynomial-time solvability. The main novelty of our paper lies in showing a combinatorial algorithm to search for dual certificates in an instance $G$ and in 
using this approach to solve the \myproblem{popular assignment} problem, as well as several related problems, in polynomial time.

\paragraph{\bf Our other results.}
Our algorithm for the variant of the \myproblem{popular assignment} problem with {\em forced/ forbidden} edges (see Section~\ref{sec:edge-restrictions}) 
is a natural extension of the above algorithm where
certain edges are excluded. The \myproblem{$k$-unpopularity margin} algorithm (see Section~\ref{sec:min-unpopular}) associates a {\em load} with every edge such that the total load is at most $k$ and the overloaded edges are treated as {\em forced} edges. 
Our $W_l[1]$-hardness result shows that this $O^*(m^k)$ algorithm for the \myproblem{$k$-unpopularity margin} problem is essentially optimal, i.e., it is highly unlikely that this problem admits an $f(k)m^{o(k)}$ algorithm for any computable function~$f$. 

\section{Preliminaries}
For any $v \in A\cup B$, let $\Nbr_G(v)$ denote the set of neighbors of $v$ in $G$, and~$\delta(v)$ the set of edges incident to $v$.
For any $X \subseteq A\cup B$, we let $\Nbr_G(X)= \cup_{v \in X} \Nbr_G(v)$; 
we may omit the subscript $G$ if it is clear from the context. 
For any set $X$ of vertices (or edges) in~$G$, let~$G-X$ be the subgraph of~$G$ 
obtained by deleting the vertices (or edges, respectively) of $X$ from~$G$.
For an edge set $F \subseteq E$, we denote by $G[F]$ the subgraph of~$G$ restricted to edges of~$F$, that is, $G[F]=(A \cup B,F)$.
For a matching~$M$ in~$G$ and a node~$v$ matched in~$M$, we denote the partner of~$v$ by~$M(v)$.

The preferences of node $a \in A$
over its neighbors are given by a strict partial order~$\succ_a$, so $b \succ_a b'$ means that $a$ prefers $b$ to $b'$. We
use $b \sim_a b'$ to denote that $a$ is indifferent between $b$ and $b'$, i.e., neither $b \succ_a b'$ nor $b' \succ_a b$ holds. 
As mentioned earlier, the relation $\succ_a$ is a \emph{weak ranking} if $\sim_a$ is transitive. In this case, $\sim_a$ is an equivalence relation and there is a strict order on the equivalence classes. When each equivalence class has size~1, we call it a \emph{strict ranking} or a \emph{strict preference}.

We will also use the notation 
$[k]=\{1,2,\dots, k\}$ 
 for any positive integer~$k$.

\subsection{A characterization of popular matchings from \texorpdfstring{\cite{AIKM07}}{Abraham et al. (2007)} for weak rankings}\label{app:characterization-of-pop-matchings}
In order to characterize popular matchings 
in the case when agents' preferences are weak rankings, 
as done in \cite{AIKM07}, it will be convenient to add artificial worst-choice or last
resort objects to the given instance~$G = (A \cup B, E)$. So $B = B \cup \{l(a): a \in A\}$, i.e., corresponding to each~$a \in A$, a node~$l(a)$ 
gets added to~$B$ and we set this node $l(a)$ as the worst-choice object for~$a$. Thus we have $E = E \cup \{(a,l(a)): a \in A\}$.

We say that an edge $(a,b)$ \emph{dominates} an edge $(a,b')$, if $b \succ_a b'$. We let $E_1$ denote the set of edges that are not dominated by any edge in~$E$, we will call these \emph{first-choice edges}. 
We call an edge $e \in E$ {\em critical} if the maximum matching in~$G[E_1 \cup \{e\}]$ is larger than a maximum matching in $G[E_1]$. 
We define the set~$E_2$ of \emph{second-choice edges} as those edges that are critical, and are not dominated by any critical edge.

\begin{theorem}[\cite{AIKM07}]
\label{thm:characterization-AIKM07}
    A matching $M$ in $G = (A \cup B, E)$ is popular if and only if
    \begin{itemize}[leftmargin=24pt]
        \item[(1)]$M$ is a matching in $G[E_1 \cup E_2]$ that matches all agents in~$A$, and 
        \item[(2)] $M \cap E_1$ is a maximum matching in~$G[E_1]$.
    \end{itemize}
\end{theorem}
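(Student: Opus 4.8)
The plan is to reduce popularity to a statement about alternating cycles and paths carrying nonpositive ``vote'', and then to match that condition against the structural conditions (1)--(2). Since every agent~$a$ has a private last-resort object~$l(a)$ adjacent only to~$a$, any matching leaving~$a$ unmatched is beaten by the matching obtained by adding~$(a,l(a))$ (which frees no agent and gains~$a$'s vote); hence I may assume throughout that every matching under consideration, and in particular every candidate for a popular matching, matches all of~$A$. For such a matching~$M$ and an edge~$(a,b)\in E$ I set $\wt_M(a,b)=+1$ if $b\succ_a M(a)$, $\wt_M(a,b)=-1$ if $M(a)\succ_a b$, and $\wt_M(a,b)=0$ otherwise. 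Writing $\Delta(N,M)=\phi(N,M)-\phi(M,N)=\sum_{(a,b)\in N\setminus M}\wt_M(a,b)$ and decomposing the symmetric difference $M\oplus N$ into alternating cycles and paths, the value $\Delta(N,M)$ splits as a sum over these components, since each agent contributes through its unique $N\setminus M$ edge. As each individual component, applied to~$M$, again yields a matching of~$A$, this shows that $M$ is popular if and only if \emph{every} $M$-alternating cycle and every $M$-alternating path (the latter with both endpoints in~$B$) has total $\wt_M$ at most~$0$.

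For the direction ``popular $\Rightarrow$ (1) and (2)'' I argue contrapositively by exhibiting a positive component. First, if $M$ uses an edge $(a,b)\notin E_1\cup E_2$, then~$b$ is neither a first choice nor a second choice of~$a$; using that~$E_2$ consists of the critical edges undominated by a critical edge, I locate a strictly preferred edge at~$a$ that can be completed, through~$G[E_1]$ together with one critical edge, into an $M$-alternating path of positive weight, contradicting popularity. Second, if $M_1 := M\cap E_1$ is \emph{not} a maximum matching of~$G[E_1]$, take an $M_1$-augmenting path $P\subseteq E_1$; its $M_1$-exposed agent endpoint is matched in~$M$ through an edge outside~$E_1$. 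Re-matching along~$P$ (and deleting that single non-$E_1$ edge) keeps every interior agent at a first choice---so those agents are indifferent---while the endpoint agent is promoted from a non-first choice to a first choice, giving a strictly more popular matching. This proves both conditions are necessary.

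For the converse ``(1) and (2) $\Rightarrow$ popular'' I show every alternating component has nonpositive weight. The only edges $(a,b)$ with $\wt_M(a,b)=+1$ are those where~$M(a)$ is a second choice (an $E_2$-edge) and $b\succ_a M(a)$; since such an $(a,b)$ dominates an $E_2$-edge, the definition of~$E_2$ forces~$(a,b)$ to be \emph{non-critical}, i.e.\ it cannot enlarge a maximum matching of~$G[E_1]$. On a component~$C$ the positive edges are thus non-critical ``promotion'' edges into first-choice objects, while every agent matched by~$M_1$ (hence to a first choice) contributes weight at most~$0$. The crux is then a counting argument: using that $M_1$ is a maximum matching of~$G[E_1]$ (condition~(2)) and that the promotion edges are non-critical, I show that the first-choice objects reached by positive edges on~$C$ are already saturated by~$M_1$ in a way that injects each positive edge into a distinct nonpositive edge of~$C$, so the net weight of~$C$ is at most~$0$.

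I expect the sufficiency direction, together with the necessity case $M\not\subseteq E_1\cup E_2$, to be the main obstacle, because both hinge on the precise role of \emph{criticality}: for weak rankings an object strictly preferred to a second choice need not itself be a first choice, so tracking first versus second choices alone does not suffice. What makes the argument go through is exactly that $E_2$ is defined via critical and undominated edges, guaranteeing that every promotion encoded by a positive vote is non-critical and hence, by maximality of~$M_1$ in~$G[E_1]$, cannot produce a net gain. Formalizing this balancing through the Gallai--Edmonds / Dulmage--Mendelsohn structure of~$G[E_1]$---classifying vertices by reachability along even or odd alternating paths from the $M_1$-exposed vertices---is the technical heart of the proof.
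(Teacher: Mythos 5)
First, a point of context: the paper never proves Theorem~\ref{thm:characterization-AIKM07} --- it is imported from \cite{AIKM07} and stated without proof (the paper only supplies a counterexample showing it fails for partial orders), so there is no in-paper proof to compare against. Judged on its own merits, your proposal has genuine gaps, starting with the reduction to components. Your claim that $M$ is popular if and only if every $M$-alternating cycle \emph{and every $M$-alternating path with both endpoints in $B$} has nonpositive $\wt_M$-weight is false in the ``only if'' direction: a path can be switched only when its non-$M$ end sits at an $M$-exposed object. For instance, with $a_1 : b_1 \succ b_2$ and $a_2 : b_1$ only (plus last resorts), the matching $M=\{(a_1,b_2),(a_2,b_1)\}$ satisfies (1) and (2) and is popular, yet the alternating path $b_2, a_1, b_1$ has weight $+1$; it simply cannot be switched because $b_1$ is covered.

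This imprecision is exactly what breaks your necessity argument for condition (2). The $M_1$-exposed \emph{object} endpoint $b_t$ of your augmenting path $P$ need not be $M$-exposed: it may be matched in $M$ through a non-$E_1$ edge to some agent $a'$. Then ``re-matching along $P$ and deleting that single non-$E_1$ edge'' (the one at the agent endpoint) leaves $b_t$ with two partners; you must also delete $(a',b_t)$, whereupon $a'$ votes $-1$ and cancels the $+1$ of the promoted agent endpoint, so the net margin is $0$ and no contradiction with popularity arises. Concretely: $a_1 : b_1$; $a_2 : b_1 \succ b_2$; $a_3 : b_2$; $M=\{(a_1,b_1),(a_2,b_2),(a_3,l(a_3))\}$. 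Here $P$ is the single edge $(a_3,b_2)$ and your switch nets $0$; the matching that actually beats $M$ must cascade through $a_2$'s first choice and demote $a_1$, namely $\{(a_1,l(a_1)),(a_2,b_1),(a_3,b_2)\}$, with margin $+1$. Handling this cascade --- the object endpoint matched outside $E_1$ --- is the real content of the necessity proof and is absent. Finally, the two remaining steps, the necessity of (1) (``I locate a strictly preferred edge \dots that can be completed \dots into an $M$-alternating path of positive weight'') and the sufficiency counting argument (``I show that \dots injects each positive edge into a distinct nonpositive edge''), are assertions rather than proofs; you yourself defer them to a Dulmage--Mendelsohn analysis that is never carried out, and the path construction for (1) would in any case need condition (2), which you prove only afterwards. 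What you have is a plausible plan containing one correct key observation (every positive-weight edge is non-critical), but all three hard steps are either missing or, in the case of (2), incorrect as written.
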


The above characterization was given in \cite{AIKM07} for instances where agents' preferences are weak rankings, and
the following example shows that it does not 
hold for general partial orders. Let $G=(A \cup B,E)$ where $A=\{a,b,c\}$ and $B=\{x,y,z\}$, each agent is adjacent to each object, and the preference orders of the agents are defined by 

$\begin{array}{lll}
    a: & x \succ_a z, & y \succ_a z; \\
    b: & x \succ_b z; \\
    c: & y \succ_c x, & y \succ_c z. \\
\end{array}$

\smallskip
Thus $x \sim_a y$, similarly, $x \sim_b y$ and $y \sim_b z$, and also $x \sim_c z$.
The set of first-choice edges is then $E_1=\{(
 a,x), (a,y), (b,x), (b,y), (c,y)\}$, the set of second-choice edges is $E_2=\{(a,z), (b,z), (c,z)\}$. Consider now the matching $M = \{(a,x), (b,y), (c,z)\}$. 
 It is straightforward to verify that $M$ satisfies properties~(1) and~(2) required in Theorem~\ref{thm:characterization-AIKM07}. 
However, $M$ is not popular, since the matching $N = \{(a,x), (b,z), (c,y)\}$ is more popular than~$M$, because $c$ prefers~$N$ to~$M$, while $a$ and~$b$ are indifferent between~$M$ and~$N$. 

\subsection{An instance without popular matchings that admits a popular assignment}\label{app:no-pop-assignment}

We describe a simple example that does not admit any popular matching, but admits a popular assignment. Let $G = (A \cup B, E)$ where $A = \{a_1,a_2,a_3\}$ and
$B = \{b_1,b_2,b_3\}$ and the preference order of both $a_1$ and $a_2$ is $b_1 \succ b_2$ while the preference order of $a_3$ is
$b_1 \succ b_2 \succ b_3$. 

It follows from the characterization of popular matchings from \cite{AIKM07} that a popular matching $M$ has to match
each of $a_1,a_2,a_3$ to either $b_1$ or $b_2$. Since this is not possible, this instance has no popular matching. It is easy to check that $M^* = \{(a_1,b_1),(a_2,b_2),(a_3,b_3)\}$ is a popular assignment in $G$.

\subsection{Some simple reductions to the popular assignment problem}\label{app:pop-matching-reduction}
We will first show a reduction from the \myproblem{popular matching} problem to the \myproblem{popular assignment} problem.
Let  $G = (A \cup B, E)$ be an instance of the \myproblem{popular matching} problem. 
Let $B' = B \cup \{l(a): a \in A\}$. That is, corresponding to each~$a \in A$, an object~$l(a)$ (the last resort of~$a$) 
is in~$B'$ and we set this object~$l(a)$ as the worst-choice of~$a$. Let $A' = A \cup \{d_1,\ldots,d_{|B|}\}$, i.e., there are $|B|$ many dummy agents in~$A'$.
Each dummy agent~$d_i$ is adjacent to all objects in~$B'$ and is indifferent between any two of them. 

It is easy to see that every matching~$M$ in~$G$ can be extended to a {\em perfect matching}~$M'$ in this new graph~$G' = (A'\cup B',E')$ and conversely, 
every perfect matching~$M'$ in~$G'$ projects to a matching $M$ in $G$. For any pair of matchings~$M$ and~$N$ in~$G$, observe that
$\Delta(M,N) = \Delta(M',N')$. Thus $M$ and $M'$ have the same unpopularity margin and, in particular, an algorithm that finds a popular assignment in~$G'$ solves the \myproblem{popular matching} problem in~$G$. 

\paragraph{\bf Popularity with diversity.} 
Recall this problem defined in Section~\ref{sec:intro} where every agent in an instance~$G = (A\cup B,E)$ 
has one of $k$ colors associated with it, and admissible matchings are those that for every $i \in \{1,\ldots,k\}$ match 
$c_i$ agents of color~$i$ so that $s_i \le c_i \le t_i$  for some given bounds~$s_i$ and~$t_i$. 
We seek a matching that is popular within the set of admissible matchings.

We augment $B$ by adding $n_i-s_i$ artificial objects for each $i$, where $n_i$ is the number of agents colored $i$.
For each $i$, these $n_i-s_i$ objects are tied as the worst-choices of all agents colored $i$. 
Let $A' = A \cup \{d_1,\ldots,d_{n'}\}$, where $n' = |B| - \sum_is_i$. 
Every dummy agent $d \in \{d_1, \dots, d_{n'}\}$ is adjacent to all objects in $B$ and for each color $i$ 
some fixed $t_i-s_i$ artificial objects meant for color class~$i$ 
introduced above---as before, $d$ is indifferent between any two of its neighbors.
So for each $i$, there are $n_i-t_i$ artificial objects not adjacent to any dummy agent. Let $G'$ be the new instance.
It is easy to see that an algorithm that finds a popular assignment in~$G'$ solves our problem in~$G$. 

\section{Dual certificates for popular assignments}
\label{sec:prelims}
Let $G = (A \cup B, E)$ be an input instance and let $\nu$ be the size of a maximum matching in $G$. Let us augment $G$ with $|B|- \nu$ dummy agents that are adjacent to all objects in $B$ (and indifferent among them), along with $|A|-\nu$ artificial objects that are tied as the worst-choice neighbors of all non-dummy agents. Any maximum matching~$M$ in 
the original graph extends to a perfect matching (i.e., assignment)~$M'$ in the augmented graph; moreover, $\Delta(M,N) = \Delta(M',N')$ for any pair of maximum matchings $M$ and~$N$ in $G$.
Thus, we can assume without loss of generality that the input instance~$G$ admits a perfect matching.

Let $|A| = |B| = n$ and $|E| = m$. Let $M$ be any perfect matching in $G$.
The following edge weight function~$\wt_M$ in $G$ will be useful. For any $(a,b) \in E$
\begin{equation*} 
\mathrm{let}\ \wt_M(a,b) = \begin{cases} \phantom{-} 1   & \text{if\ $a$\ prefers\ $b$\ to\ $M(a)$;}\\
	                     -1 &  \text{if\ $a$\ prefers\ $M(a)$\ to\ $b$;}\\			
                              \phantom{-} 0 & \text{otherwise,\ i.e., if\ $b \sim_a M(a)$.}
\end{cases}
\end{equation*}

Let $\wt_M(N)= \sum_{e \in N} \wt_M(e)$ for any edge set $N \subseteq E$.
Consider the following linear program~\ref{LP1} and its dual~\ref{LP2}.

\begin{minipage}[t]{0.4\linewidth}\centering
\begin{align}
 \label{LP1}
 \max \sum_{e \in E} & \wt_M(e)\cdot x_e  \tag{LP1}
\\   \notag
      \text{s.t.}\quad\sum_{e \in \delta(u)}x_e = 1  &\mbox{\hspace*{0.1in}}\forall\, u \in A \cup B\\ \notag
                        x_e  \ge 0   &\mbox{\hspace*{0.1in}}\forall\, e \in E. 
\end{align}
\end{minipage}
\hspace{0.4cm}
\begin{minipage}[t]{0.5\linewidth}\centering
  \begin{align}
  \label{LP2}
\min \sum_{u \in A \cup B}y_u  &\mbox{\hspace*{0.1in}}\tag{LP2}\\ \notag
       \text{s.t.}\quad y_{a} + y_{b} \ge \wt_{M}(a,b)  &\mbox{\hspace*{0.1in}}\forall\, (a,b)\in E. 
\end{align}
\end{minipage}

\bigskip
\ref{LP1} is well known to be integral, and hence its optimal value is $\max_N\wt_M(N)$ where $N$ is a perfect matching in $G$. The definition of $\wt_M$ implies 
that $\wt_M(N) = \Delta(N,M)$; recall the definition $\Delta(N,M) = \phi(N,M) - \phi(M,N)$.
So $M$ is a popular assignment if and only if the optimal value of~\ref{LP1} is at most 0. In fact, the optimal
value of \ref{LP1} is then exactly~0, by $\Delta(M,M) = 0$. Hence for a popular assignment $M$,
the edge incidence vector of $M$ is an optimal solution to~\ref{LP1}.

Theorem~\ref{thm:certificate} gives a useful characterization of popular assignments. 
The proof of Theorem~\ref{thm:certificate} is given in 
Section~\ref{sec:min-unpopular} along with the proof of a related result 
(Theorem~\ref{thm:dual-cert-unpopular}).

\begin{theorem}
\label{thm:certificate}
$M$ is a popular assignment if and only if there exists an optimal solution $\vec{\alpha}$ to~\ref{LP2}  such that
$\alpha_a \in \{0, 1, 2,\ldots, (n-1)\}$ for all $a \in A$,
$\alpha_b \in \{0, -1, -2,\ldots, -(n-1)\}$ for all $b \in B$, and $\sum_{u\in A\cup B}\alpha_u = 0$.
\end{theorem}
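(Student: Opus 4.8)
The plan is to establish both directions by exploiting LP duality and the integrality of~\ref{LP1}, then to show that the special structure of $\wt_M$ forces an optimal dual solution of the required restricted form. For the forward direction, suppose $M$ is a popular assignment. As observed just before the statement, the optimal value of~\ref{LP1} is then~$0$, and the edge incidence vector of~$M$ is an optimal primal solution. By strong LP duality there exists an optimal dual solution $\vec{y}$ to~\ref{LP2} with $\sum_u y_u = 0$. The task is to massage this into an \emph{integral} solution $\vec{\alpha}$ that additionally has the correct signs on the two sides ($\alpha_a \ge 0$ for $a \in A$, $\alpha_b \le 0$ for $b \in B$) and lies in the range $\{0,\pm 1,\dots,\pm(n-1)\}$ with $\sum_u \alpha_u = 0$.

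First I would argue integrality: since the constraint matrix of~\ref{LP1} is the incidence matrix of a bipartite graph, it is totally unimodular, so~\ref{LP2} has an integral optimal solution whenever the right-hand side $\wt_M$ is integral --- which it is, as $\wt_M(e) \in \{-1,0,1\}$. Next I would enforce the sign pattern. Because $M$ is perfect and $\wt_M(a,M(a)) = 0$ for every matched pair, complementary slackness gives $\alpha_a + \alpha_{M(a)} = 0$ for all $a \in A$, so the two dual values on each matched edge are negatives of each other; this lets me use the degree of freedom in shifting dual values along the matching to drive each $\alpha_a$ nonnegative and each $\alpha_b$ nonpositive, while preserving feasibility of the inequalities $\alpha_a + \alpha_b \ge \wt_M(a,b)$ and the total sum $\sum_u \alpha_u = 0$. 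The sign normalization together with the bound $\wt_M \le 1$ will then confine the values to a bounded integer range; the constraint $\alpha_a - \alpha_{M(a)}' \ge \wt_M$ along alternating paths yields that no value can exceed $n-1$ in absolute value, since an alternating path visits at most $n$ vertices on each side and each step changes the dual value by at most~$1$.

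For the converse, I would simply observe that the existence of \emph{any} feasible dual $\vec{\alpha}$ to~\ref{LP2} with $\sum_u \alpha_u = 0$ certifies, by weak duality, that the optimal value of~\ref{LP1} is at most~$0$; since $\wt_M(N) = \Delta(N,M)$ for every perfect matching~$N$, this means $\Delta(N,M) \le 0$ for all such~$N$, i.e., $M$ is a popular assignment. The sign and range conditions are not needed for this direction --- feasibility plus the zero-sum condition suffices.

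The main obstacle I expect is the sign-and-range normalization in the forward direction: converting an arbitrary integral optimal dual into one obeying $\alpha_a \ge 0 \ge \alpha_b$ within $\{0,\pm 1,\dots,\pm(n-1)\}$. The delicate point is that shifting $\alpha_u$ values must simultaneously respect all edge inequalities, preserve optimality (equivalently, keep equality on $M$-edges via complementary slackness), and hold the total sum at zero. I anticipate that the cleanest route is to define the levels directly from alternating-path distances in the tight subgraph of~$M$ --- measuring how far each object sits from a "source" set along the weight-$\wt_M$ structure --- which is precisely the level function $\ell$ the paper's algorithm constructs; this furnishes canonical integral dual values whose range is controlled by path length and whose signs are built in by construction.
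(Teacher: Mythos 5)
Your converse direction and the skeleton of your forward direction (integrality of the dual via total unimodularity, the pairing $\alpha_a = -\alpha_{M(a)}$ from complementary slackness, a uniform shift to fix signs) all match the paper's proof. The genuine gap is in the range bound. You claim that ``the sign normalization together with the bound $\wt_M \le 1$ will then confine the values to a bounded integer range'' via an alternating-path argument, but this is false for an \emph{arbitrary} sign-normalized integral optimal dual. Concretely, let $A = \{a_1,a_2\}$, $B = \{b_1,b_2\}$, with edges $(a_1,b_1)$, $(a_1,b_2)$, $(a_2,b_2)$, where $a_1$ prefers $b_1$ to $b_2$, and let $M = \{(a_1,b_1),(a_2,b_2)\}$ be the unique (hence popular) assignment. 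For any integer $C \ge 0$, the vector $\alpha_{a_1} = C$, $\alpha_{b_1} = -C$, $\alpha_{a_2} = \alpha_{b_2} = 0$ is a feasible, integral, sign-correct optimal solution to~\ref{LP2} with zero sum, yet $|\alpha_{b_1}| = C$ is arbitrarily large: since $b_1$ has no neighbor other than $M(b_1) = a_1$, no constraint ties $\alpha_{b_1}$ to a zero-valued object, so no alternating-path argument can pin it down. Feasibility, optimality, integrality and signs alone never bound the range; one must \emph{select} a suitable optimal dual, and your plan as written contains no mechanism for doing so.

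The paper closes this hole by choosing, among all sign-normalized integral optimal duals, one maximizing $\sum_{b \in B}\alpha_b$, and then running an exchange argument: if some value $-r$ is missing from $B$ while values below $-(r+1)$ occur, shifting $\alpha_b$ up by $1$ on $B' = \{b : \alpha_b < -r\}$ and $\alpha_a$ down by $1$ on $M(B')$ preserves feasibility and optimality while increasing $\sum_{b\in B}\alpha_b$, a contradiction; hence the values on $B$ are consecutive integers containing $0$, which with $|B| = n$ gives the range, and complementary slackness transfers it to $A$. Your closing paragraph does gesture at a different, and in fact viable, repair: defining the dual values constructively as longest-path distances with respect to $\wt_M$ in an auxiliary digraph on $B$ (arcs from $b$ to $M(a)$ of weight $\wt_M(a,b)$ for each $(a,b) \in E$), which builds in integrality, the signs, and the bound $n-1$ because a simple path uses at most $n-1$ arcs of weight at most $1$. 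But to execute this you must first prove that these distances are well defined, i.e., that the digraph has no positive-weight cycle; that statement is precisely the assertion that no alternating cycle yields a perfect matching beating $M$, which is where popularity must be invoked. This step is absent from your sketch, and without it (or the paper's extremal-choice argument) the forward direction is incomplete.
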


A {\em dual certificate} for a popular assignment~$M$ is an optimal  solution~$\vec{\alpha}$ to~\ref{LP2} satisfying the conditions given in Theorem~\ref{thm:certificate}.

\section{The popular assignment algorithm}
\label{sec:algo}

The goal of our algorithm is to construct a perfect matching $M$ in $G$ along with a dual certificate~$\vec{\alpha}$. 
Every $b \in B$ will have an associated {\em level} $\ell(b)$ in this algorithm and the $\alpha$-value
of $b$ will be $-\ell(b)$, 
i.e., we set $\alpha_b=-\ell(b)$.

Given a function $\ell:B \rightarrow \mathbb{N}$ called  a {\em level function}, for any $a \in A$ let $\ell^*(a)=\max_{b \in \Nbr(a)}\ell(b)$
be the highest level at which agent $a$ has neighbors.
Now we define the subgraph $G_\ell = (A \cup B, E_\ell)$ \emph{induced by levels $\ell(\cdot)$} by putting an edge $(a,b) \in E$ into $E_\ell$ if and only if 
\begin{itemize}[leftmargin=24pt]
\item[(i)] $b$ has level $\ell^*(a)$, and $a$ has no neighbor in level $\ell^*(a)$ that she prefers to $b$, or
\item[(ii)] $b$ has level $\ell^*(a)-1$, and $a$ prefers $b$ to each of her neighbors in level $\ell^*(a)$, and moreover, $a$ prefers
none of her neighbors in level $\ell^*(a)-1$ to $b$.
\end{itemize}
Thus in the subgraph $G_{\ell}$, 
every agent has edges to her favorite {\em highest-level} neighbors and to her favorite neighbors  one level below, provided these neighbors
are preferred to all of her highest-level neighbors (see Fig.~\ref{fig:G_ell} for an illustration). The following lemma will be very useful.

\begin{lemma}
  \label{lem:G-ell}
  A matching $M$ in $G$ is a popular assignment if and only if there exists a level function~$\ell$ such that $M$ is a perfect matching in
  $G_{\ell}$. Furthermore, this happens if and only if
  there is a level function~$\ell$ and 
  a dual certificate~$\vec \alpha$ for $M$ where
  $\ell(b) = |\alpha_b|$ for all $b \in B$ and $M$ is a perfect matching in $G_{\ell}$.
\end{lemma}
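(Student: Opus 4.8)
The plan is to establish the equivalence chain in Lemma~\ref{lem:G-ell} by leveraging Theorem~\ref{thm:certificate}, which already characterizes popular assignments via dual certificates. The cleanest route is to prove the second equivalence first (that $M$ is a popular assignment iff there is a level function $\ell$ and a dual certificate $\vec\alpha$ with $\ell(b)=|\alpha_b|$ and $M$ perfect in $G_\ell$), since this is the stronger statement; the first equivalence then follows because the existence of such a pair $(\ell,\vec\alpha)$ trivially implies the existence of a level function $\ell$ for which $M$ is perfect in $G_\ell$, and conversely one recovers a certificate from the levels. I would structure the proof as two implications.

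\paragraph{From dual certificate to $G_\ell$.} First I would take a popular assignment $M$ and, via Theorem~\ref{thm:certificate}, fix a dual certificate $\vec\alpha$ with $\alpha_a\in\{0,\dots,n-1\}$, $\alpha_b\in\{0,-1,\dots,-(n-1)\}$, and $\sum_u\alpha_u=0$. Setting $\ell(b)=|\alpha_b|=-\alpha_b$, the goal is to show $M\subseteq E_\ell$ and that $M$ saturates every node (it is already perfect in $G$, so it suffices to show $M\subseteq E_\ell$). The key computations use complementary slackness and dual feasibility. For a matched pair $(a,b)\in M$, we have $\wt_M(a,b)=0$, so complementary slackness gives $\alpha_a+\alpha_b=0$, i.e.\ $\alpha_a=\ell(b)$. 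Dual feasibility $\alpha_a+\alpha_{b'}\ge\wt_M(a,b')$ for every neighbor $b'$ then bounds the levels of $a$'s neighbors: an edge to a strictly preferred object ($\wt_M=1$) forces $\ell(b')=-\alpha_{b'}\le\alpha_a-1=\ell(b)-1$, an indifferent edge ($\wt_M=0$) forces $\ell(b')\le\ell(b)$, so no neighbor sits strictly above level $\ell(b)=\alpha_a$. Hence $\ell^*(a)=\alpha_a=\ell(b)$, placing $b$ at $a$'s top level; and since any neighbor $a$ strictly prefers to $b=M(a)$ lies at level $\le\ell(b)-1<\ell^*(a)$, agent $a$ has no level-$\ell^*(a)$ neighbor preferred to $b$, which is exactly condition~(i). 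Thus $(a,b)\in E_\ell$, so $M$ is perfect in $G_\ell$.

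\paragraph{From $G_\ell$ to dual certificate.} Conversely, I would assume $M$ is a perfect matching in $G_\ell$ for some level function $\ell$, and reconstruct a dual certificate. I set $\alpha_b=-\ell(b)$ and $\alpha_a=\ell^*(a)$ (the natural choice given the first direction). The work is verifying that $\vec\alpha$ is dual-feasible, satisfies complementary slackness with $M$, and lies in the required integer ranges. Feasibility $\alpha_a+\alpha_{b'}\ge\wt_M(a,b')$ for each edge $(a,b')$ breaks into cases by the value of $\wt_M(a,b')$ and the level of $b'$ relative to $\ell^*(a)$; the defining conditions (i) and (ii) of $G_\ell$, together with the definition of $\ell^*$, are precisely what make these inequalities hold — for instance if $b'$ is at level $\ell^*(a)$ then $\alpha_a+\alpha_{b'}=0\ge\wt_M(a,b')$ unless $a$ strictly prefers $b'$, but condition (i) applied to $M(a)$ rules out $\wt_M(a,b')=1$ at the top level. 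Complementary slackness requires $\alpha_a+\alpha_{M(a)}=\wt_M(a,M(a))$: since $(a,M(a))\in E_\ell$ via (i) or (ii), I compute $\wt_M(a,M(a))$ in each case and check it equals $\ell^*(a)-\ell(M(a))$ (zero under (i), one under (ii)). Optimality of $\vec\alpha$ for \ref{LP2} then follows because $M$ is a feasible primal solution satisfying complementary slackness with $\vec\alpha$, so by LP duality the objective values match; and optimality of a perfect matching $M$ in particular certifies $\max_N\wt_M(N)\le\sum_u\alpha_u=\wt_M(M)=0$, so $M$ is popular. For the range conditions, I would note that the levels can be normalized to lie in $\{0,\dots,n-1\}$ (Theorem~\ref{thm:certificate} guarantees such a certificate exists for any popular assignment, and conversely an arbitrary level function can be shifted/truncated without changing $G_\ell$), and that $\sum_u\alpha_u=0$ follows from $\wt_M(M)=0$ via the complementary-slackness identity summed over matched pairs.

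\paragraph{Main obstacle.} The routine directions are the feasibility and slackness case-checks. The subtler point I expect to be the crux is handling the boundary condition \emph{(ii)} correctly, and more importantly reconciling the range constraint $\alpha_b\in\{0,\dots,-(n-1)\}$ and the normalization $\sum_u\alpha_u=0$ from Theorem~\ref{thm:certificate} with an \emph{arbitrary} level function in the second direction, whose values need not a priori respect these bounds. I would address this by arguing that given any level function $\ell$ with $M$ perfect in $G_\ell$, one can replace $\ell$ by a normalized function with the same induced subgraph structure on the edges of $M$ — compressing unused levels so that the resulting values fall within $[0,n-1]$ — thereby producing a genuine dual certificate in the sense of Theorem~\ref{thm:certificate}. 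Verifying that this normalization preserves perfectness of $M$ in $G_\ell$ (or at least preserves a dual certificate) is where the care is needed, and it is the only place where the global counting argument (only $n$ distinct levels are ever needed) enters.
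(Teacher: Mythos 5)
Your overall strategy---constructing $\vec\alpha$ from $\ell$ and vice versa via LP duality and complementary slackness---is the same as the paper's, but both of your directions mishandle condition~(ii) in the definition of $G_\ell$, and in both cases this is a genuine error rather than a presentational slip. In the direction ``from dual certificate to $G_\ell$'', you argue that strictly preferred neighbors ($\wt_M=1$) lie at level $\le \ell(b)-1$ and indifferent neighbors ($\wt_M=0$) at level $\le\ell(b)$, and conclude that ``no neighbor sits strictly above level $\ell(b)$'', hence $\ell^*(a)=\ell(b)$ and condition~(i) holds. But you forgot the dis-preferred neighbors: for $\wt_M(a,b')=-1$, dual feasibility only gives $\ell(b')\le\ell(b)+1$, so $a$ may well have neighbors one level \emph{above} $M(a)$. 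In that case $\ell^*(a)=\ell(b)+1$ and the matched edge belongs to $E_\ell$ only via condition~(ii) (which does hold: all neighbors at level $\ell(b)+1$ are dis-preferred, and no neighbor at level $\ell(b)$ is preferred to $M(a)$), not via~(i). The conclusion you want ($M\subseteq E_\ell$) is still true, but your argument omits exactly the case that clause~(ii) of the definition exists to capture.

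The converse direction has a more consequential error: you set $\alpha_a=\ell^*(a)$, whereas the correct (and the paper's) choice is $\alpha_a=\ell(M(a))$. For an agent matched via condition~(ii) these differ, and with your choice complementary slackness fails: $\alpha_a+\alpha_{M(a)}=\ell^*(a)-\ell(M(a))=1$, while $\wt_M(a,M(a))=0$ --- the weight of a matched edge is always $0$ by definition, since $M(a)\sim_a M(a)$; your claim that it ``equals one under~(ii)'' is false. Consequently $\sum_{u}\alpha_u$ equals the number of agents matched via~(ii), so your $\vec\alpha$ is at best a feasible dual solution with positive objective value: it certifies $\mu(M)\le|\{a : a \text{ matched via (ii)}\}|$, not popularity. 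With $\alpha_a=\ell(M(a))$, all of your feasibility case checks go through, every matched edge is tight, and the dual objective is $0$, so popularity follows by weak duality. Finally, the ``main obstacle'' you single out (normalizing an arbitrary level function to meet the range conditions of Theorem~\ref{thm:certificate}) is a red herring: to pass from ``$M$ is perfect in $G_\ell$'' to ``$M$ is popular'' one only needs a \emph{feasible} dual of value $0$, with no range conditions; and to pass from ``$M$ is popular'' to the strengthened statement, Theorem~\ref{thm:certificate} already supplies a certificate whose entries lie in the required range, from which the level function $\ell(b)=|\alpha_b|$ is defined. No compression of levels is needed to close the cycle of implications.
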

\begin{proof}
Let us first show that if there exists a level function $\ell$ such that $M$ is a perfect matching in $G_{\ell}$, then $M$ is a popular assignment in~$G$.  
  We construct a dual certificate for $M$ as follows.
  Let~$\alpha_b = -\ell(b)$ for all $b \in B$ and $\alpha_a = \ell(M(a))$ for all $a \in A$.
  Note that the value of $\vec{\alpha}$ as a solution for~\ref{LP2} is $\sum_{v \in A\cup B} \alpha_v = \sum_{a \in A} (\alpha_a + \alpha_{M(a)}) = 0 = \sum_{e \in M} \wt_M(e)$. Thus $\vec \alpha$ is optimal for~\ref{LP2} if it is feasible.
  It remains to show that $\alpha_a + \alpha_b \geq  \wt_M(a,b)$ for every $(a, b) \in E$. 
  
  So let $(a, b) \in E$ and let $b' := M(a)$. Note that $\alpha_a + \alpha_b = \ell(b') - \ell(b)$. We show that $\ell(b') - \ell(b) \geq \wt_M(a, b)$.
  Because $(a, b') \in E_\ell$, one of the following cases holds:
  \begin{itemize}
    \item Case (i): $\ell(b') = \ell^*(a)$, so $a$ prefers no neighbor of hers in level $\ell^*(a)$ to $b'$. We have two subcases:
    \begin{itemize}
      \item If $\ell(b) = \ell^*(a)$, then $a$ does not prefer $b$ to $b'$ and hence $\wt_M(a, b) \leq 0 = \ell(b') - \ell(b)$.
      \item If $\ell(b) < \ell^*(a)$, then $\ell(b') - \ell(b) \geq 1 \geq \wt_M(a, b)$.
    \end{itemize}
    \item Case (ii): $\ell(b') = \ell^*(a) - 1$, so $a$ prefers $b'$ to each of her neighbors in level $\ell^*(a)$, and $a$ prefers
    none of her neighbors in level $\ell^*(a)-1$ to $b'$.
    We have three subcases:
    \begin{itemize}
    \item If $\ell(b) = \ell^*(a)$, then $a$ prefers $b'$ to $b$ and hence $\wt_M(a, b) = -1 = \ell(b') - \ell(b)$.
    \item If $\ell(b) = \ell^*(a) - 1$, then $a$ does not prefer $b$ to $b'$ and hence $\wt_M(a, b) \le 0 = \ell(b') - \ell(b)$.
    \item If $\ell(b) < \ell^*(a) - 1$, then $\ell(b') - \ell(b) \geq 1 \geq \wt_M(a, b)$.
    \end{itemize}
  \end{itemize}
  Thus in each of these cases $\wt_M(a, b) \leq \ell(b') - \ell(b) = \alpha_a + \alpha_b$. Hence $\vec{\alpha}$ is a dual certificate for $M$, and thus $M$ is a popular assignment by Theorem~\ref{thm:certificate}.

  \smallskip

{
\begin{figure}\centering
\begin{minipage}{0.45\textwidth}
\centering
    \begin{tikzpicture}
    \draw [fill=black!10,draw=none,rounded corners] (1.7,1.8) rectangle (2.3,1.2);
    \draw [fill=black!10,draw=none,rounded corners] (1.7,0.8) rectangle (2.3,0.2);
    \draw [fill=black!10,draw=none,rounded corners] (1.7,-.2) rectangle (2.3,-1.4);
    \node[circle,fill=black,label=left:$a_1$,inner sep = 2pt] (a1) at (0,1){};
    \node[circle,fill=black,label=left:$a_2$,inner sep = 2pt] (a2) at (0,-1.1){};
    \node[rectangle, draw=black,inner sep = 2pt] (b1) at (2,1.5){$2$};
    \node[rectangle, draw=black,inner sep = 2pt] (b2) at (2,0.5){$1$};
    \node[rectangle, draw=black,inner sep = 2pt] (b3) at (2,-.5){$0$};
    \node[rectangle, draw=black,inner sep = 2pt] (b4) at (2,-1.1){$0$};
    \draw[very thick] (a1) -- (b1) node [fill=white, inner sep=0.05cm, pos=0.2] {\scriptsize $2$};
    \draw[dashed] (a1) -- (b2) node [fill=white, inner sep=0.05cm, pos=0.4]  {\scriptsize $3$};
    \draw[dashed] (a1) -- (b3) node [fill=white, inner sep=0.05cm, pos=0.2]  {\scriptsize $1$};
    \draw[very thick] (a2) -- (b2) node [fill=white, inner sep=0.05cm, pos=0.2]  {\scriptsize $2$};
    \draw[very thick] (a2) -- (b3) node [fill=white, inner sep=0.05cm, pos=0.4]  {\scriptsize $1$};
    \draw[very thick] (a2) -- (b4) node [fill=white, inner sep=0.05cm, pos=0.3]  {\scriptsize $1$};
    \node (d1) at (3.5,1.5) {};
    \end{tikzpicture}
    \caption{Illustration of the subgraph $G_{\ell}$ of~$G$ for an instance with weak rankings and a level function~$\ell$. Circles indicate agents and squares objects; the $\ell$-level of each object is written inside the square depicting it. Numbers on the edges indicate the agents' weak rankings. Bold edges are included in $G_{\ell}$ and dashed edges are not. All but two agents were omitted.}
    \label{fig:G_ell}
\end{minipage}\hspace{.3cm}
\begin{minipage}{0.45\textwidth}
    \centering
    \begin{tikzpicture}
    \node[circle,fill=black,label=left:$a_0$,inner sep = 2pt] (a1) at (0,3){};
    \node[circle,fill=black,label=left:$a_{t-1}$,inner sep = 2pt] (a2) at (0,2){};
    \node[circle,fill=black,label=left:$a_t$,inner sep = 2pt] (a3) at (0,1){};
    \node[circle,fill=black,inner sep = 2pt] (a4) at (0,0){};
    \node[rectangle, fill=black,inner sep = 2pt,label=right:$b_{0}$] (b1) at (2,3.5){};
    \node[rectangle, fill=black,inner sep = 2pt, label=right:$b_{t-1}$] (b2) at (2,2.5){};
    \node[rectangle, fill=black,inner sep = 2pt,label=right:$b_{t}$] (b3) at (2,1.5){};
    \node[rectangle, fill=black,inner sep = 2pt] (b4) at (2,.5){};
    \draw[thick,densely dashed] (a1) -- (b1); 
    \draw[thick] (a1) -- (b2) node[fill=white, inner sep=0.1cm, pos=0.5,rotate=-13]{$\dots$};
    \draw[thick, densely dashed] (a2) -- (b2);
    \draw[thick] (a2) -- (b3);
    \draw[thick,densely dashed] (a3) -- (b3) node[pos=0.7,below]{\scriptsize $\not \in E_{\ell}$};
    \draw[thick] (a3) -- (b4) node[fill=white, inner sep=0.1cm, pos=0.5,rotate=-13]{$\dots$};
    \draw[thick, densely dashed] (a4) -- (b4);
    \end{tikzpicture}
    \caption{An illustration of the $M$-augmenting path~$P$ within the proof of Lemma~\ref{lem:alpha-vs-ell}. Solid edges are in~$M$ and dashed edges are in $M^\star$. The edge $(a_t, b_t)$ is not contained in $E_\ell$.} 
    \label{fig:my_label2}
\end{minipage}
\end{figure}
}

  We will now show the converse. Let $M$ be a popular assignment in $G$ and let $\vec{\alpha}$ be a dual certificate for $M$.
   We claim that $M$ is a matching in the graph $G_{\ell_{\vec\alpha}}$ induced by levels $\ell_{\vec{\alpha}}$ with $\ell_{\vec{\alpha}}(b) = |\alpha_b|$ for all $b \in B$.
  To prove this, we use that $\alpha_a + \alpha_b \ge \wt_{M}(a,b)$ for every $(a,b) \in E$.  
  First, because the incidence vector of $M$ and $\vec \alpha$ are optimal solutions to~\ref{LP1} and~\ref{LP2}, respectively, we get $\alpha_a + \alpha_{M(a)} = 0$ for each $a \in A$ by complementary slackness. This implies $\alpha_a = -\alpha_{M(a)} = \ell_{\vec \alpha}(M(a))$.
  Therefore, $\ell_{\vec \alpha}(M(a)) \geq \ell_{\vec \alpha}(b) + \wt_M(a, b)$ for all $(a, b) \in E$.  
  
  Since $\wt_{M}(e) \ge -1$ for all edges $e$, any agent $a$ has to be matched in~$M$ to either 
  (i)~an {\em undominated} neighbor in level $\ell^*_{\vec\alpha}(a)$ (i.e., $a$ prefers none of her neighbors in this level to~$M(a)$)
  or (ii)~an undominated neighbor in level $\ell^*_{\vec\alpha}(a)-1$ which, moreover, has to {\em dominate} (i.e., be preferred by $a$ to) all of $a$'s neighbors in level~$\ell^*_{\vec\alpha}(a)$. So $M$ is a perfect matching in $G_{\ell_{\vec\alpha}}$. \qed
\end{proof}

\paragraph{\bf The algorithm.}
Consider Algorithm~\ref{alg:popassign} on input $G = (A \cup B, E)$.
In search for a dual certificate, this algorithm will maintain a \emph{level} $\ell(b)$ for every $b \in B$.
Initially, $\ell(b) = 0$ for every $b \in B$.

The algorithm checks whether there exists 
a popular assignment by computing a perfect matching in the graph $G_\ell$.
If no such matching exists, the levels of unmatched objects are increased,  the graph~$G_{\ell}$ is updated accordingly, and the search continues. 

Eventually, either a perfect matching in $G_\ell$ is found,
or the level of an object exceeds $n-1$. In the latter case we can conclude that no popular assignment exists, as we will show below.

\begin{algorithm}
\caption{Finding a popular assignment in $G = (A\cup B, E)$}\label{alg:popassign}
\begin{algorithmic}[1]
\ForAll{$b \in B$} $\ell(b) = 0$. \EndFor 
\While{$\ell(b) < n$ for all $b \in B$}
\State Construct the graph $G_\ell$ and compute a maximum matching $M$ in $G_{\ell}$. 
\If{$M$ is a perfect matching} \textbf{return} $M$. \EndIf
\ForAll{$b \in B$ unmatched in $M$}  $\ell(b) = \ell(b) + 1$. \EndFor
\EndWhile
\State \textbf{return} ``$G$ has no popular assignment''.  
\end{algorithmic}
\end{algorithm}

\paragraph{\bf Running time.} Computing a maximum matching in $G_\ell$ takes $O(m\sqrt{n})$ time.
In every iteration of the algorithm, the value $\sum_{b\in B}\ell(b)$ increases.
So the number of iterations is at most $n^2$. Hence the running time of our algorithm is $O(m\cdot n^{5/2})$.

\begin{theorem}
  \label{thm:pop}
  If our algorithm returns a matching $M$, then $M$ is a popular assignment in $G$.
\end{theorem}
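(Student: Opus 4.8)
The plan is to reduce the statement directly to the forward (``if'') direction of Lemma~\ref{lem:G-ell}, which has already done all of the substantive work. First I would observe that the algorithm emits an output matching only at the return statement in line~4, and that it does so precisely when the maximum matching~$M$ computed in~$G_\ell$ happens to be a \emph{perfect} matching. Hence, whenever a matching~$M$ is returned, it is a perfect matching, and it comes equipped with an accompanying level function~$\ell$, namely the one currently maintained by the algorithm, for which $M$ is a perfect matching in~$G_\ell$.

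Next I would verify that this~$\ell$ qualifies as a level function in the sense required by Lemma~\ref{lem:G-ell}, i.e.\ that $\ell : B \to \mathbb{N}$. This is immediate from the bookkeeping of the algorithm: every $\ell(b)$ is initialized to~$0$ in line~1 and is only ever incremented by~$1$ in line~5, so all level values remain nonnegative integers throughout the execution.

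With these two observations in hand, the conclusion follows by a single application of the ``if'' direction of Lemma~\ref{lem:G-ell}: the existence of a level function~$\ell$ for which~$M$ is a perfect matching in~$G_\ell$ is exactly the hypothesis under which that lemma guarantees that~$M$ is a popular assignment. Concretely, that direction of the lemma exhibits the dual certificate $\alpha_b = -\ell(b)$ for $b \in B$ and $\alpha_a = \ell(M(a))$ for $a \in A$, checks that it is feasible and optimal for~\ref{LP2}, and then invokes Theorem~\ref{thm:certificate} to certify popularity.

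I do not expect any genuine obstacle here. The entire combinatorial content, namely the case analysis establishing $\alpha_a + \alpha_b \ge \wt_M(a,b)$ for every edge $(a,b) \in E$, already lives inside the proof of Lemma~\ref{lem:G-ell}; Theorem~\ref{thm:pop} is just the remark that the algorithm's stopping condition coincides with that lemma's hypothesis. (The complementary and more delicate claim, that the algorithm fails to return a matching only when~$G$ admits no popular assignment, is a separate statement and is not needed for this direction.)
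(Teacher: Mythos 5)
Your proposal is correct and matches the paper exactly: the paper's entire proof of Theorem~\ref{thm:pop} is the observation that it follows immediately from Lemma~\ref{lem:G-ell}, since the algorithm returns $M$ only when it is a perfect matching in $G_\ell$ for the current level function $\ell$. Your additional bookkeeping (that $\ell$ stays a nonnegative integer function) is a harmless elaboration of the same argument.
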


Theorem~\ref{thm:pop} follows immediately from Lemma~\ref{lem:G-ell}. The more difficult part in our proof of correctness is to show that whenever our algorithm says that $G$ has no popular assignment, the instance~$G$ indeed has no popular assignment. This is implied by Theorem~\ref{thm:pop2}.

\begin{theorem}
  \label{thm:pop2}
  Let $M^\star$ be a popular assignment in $G$ and let $\vec{\alpha}$ be a dual certificate of $M^\star$.
  Then for every $b \in B$, we have $|\alpha_b| \ge \ell(b)$, where $\ell(b)$ is the level of $b$ when our algorithm terminates.
\end{theorem}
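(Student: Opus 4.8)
The plan is to prove the stronger invariant that $\ell(b) \le |\alpha_b|$ holds for every $b \in B$ \emph{throughout} the execution, by induction on the iterations of Algorithm~\ref{alg:popassign}. Since every level starts at $0 \le |\alpha_b|$ the base case is immediate, and since $|\alpha_b| \le n-1$ for every $b$, the invariant also re-proves that the levels never reach $n$ while a popular assignment exists. For the inductive step it suffices to show that in each iteration no object that is currently \emph{tight}, i.e.\ with $\ell(b) = |\alpha_b|$, gets its level increased; equivalently, that every tight object is matched by the maximum matching $M$ computed in $G_\ell$. Indeed, only unmatched objects are incremented, so if all tight objects are matched, every incremented object $b$ satisfies $\ell(b) < |\alpha_b|$ before the increment and hence $\ell(b)+1 \le |\alpha_b|$ afterwards.

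To prove that every tight object is matched in $M$, I would argue by contradiction using an augmenting path obtained from $M$ and the popular assignment $M^\star$. Assume some tight object $b_0$ is left unmatched by $M$. Since $M^\star$ is a perfect matching, the connected component of $b_0$ in the symmetric difference $M \triangle M^\star$ is a path $P$ that starts at $b_0$ with an $M^\star$-edge and, alternating $M^\star$- and $M$-edges, ends at an $M$-unmatched agent; write it as $b_0 \xrightarrow{M^\star} a_0 \xrightarrow{M} b_1 \xrightarrow{M^\star} a_1 \xrightarrow{M} \cdots \xrightarrow{M^\star} a_k$, where $a_i = M^\star(b_i)$ and $b_{i+1} = M(a_i)$. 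All $M$-edges of $P$ lie in $E_\ell$ automatically; if I can additionally show that every $M^\star$-edge $(a_i, b_i)$ of $P$ lies in $E_\ell$, then $P$ is an $M$-augmenting path inside $G_\ell$, contradicting the maximality of $M$.

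The heart of the argument is therefore to show that every object $b_i$ on $P$ is tight, because for a tight object $b$ with $a = M^\star(b)$ the $M^\star$-edge $(a,b)$ always belongs to $E_\ell$: complementary slackness gives $\alpha_a = |\alpha_b| = \ell(b)$, and dual feasibility $\alpha_a + \alpha_{b'} \ge \wt_{M^\star}(a,b')$ together with the invariant $\ell(b') \le |\alpha_{b'}|$ forces $\ell^*(a) \in \{\ell(b), \ell(b)+1\}$, from which one checks that $(a,b)$ meets condition~(i) or~(ii) in the definition of $E_\ell$. I would establish tightness of all $b_i$ by induction along $P$, starting from the tight object $b_0$. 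The key step is the identity
\begin{equation*}
\ell(b_i) - \ell(b_{i+1}) = \wt_{M^\star}(a_i, b_{i+1}),
\end{equation*}
valid whenever $b_i$ is tight: combined with $\alpha_{a_i} = \ell(b_i)$ and dual feasibility $|\alpha_{b_{i+1}}| \le \alpha_{a_i} - \wt_{M^\star}(a_i, b_{i+1})$, it yields $|\alpha_{b_{i+1}}| \le \ell(b_{i+1})$, whence $\ell(b_{i+1}) = |\alpha_{b_{i+1}}|$ by the invariant, i.e.\ $b_{i+1}$ is tight as well.

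I expect the proof of this identity to be the main obstacle and the technically most delicate point (it is presumably the content of the auxiliary Lemma~\ref{lem:alpha-vs-ell} referenced in Figure~\ref{fig:my_label2}). Its verification is a short case analysis exploiting that both $b_i = M^\star(a_i)$ and $b_{i+1} = M(a_i)$ are $E_\ell$-neighbors of the same agent $a_i$: writing $L = \ell^*(a_i)$, the object $b_i$ sits at level $L$ or $L-1$ (with $L \in \{\ell(b_i), \ell(b_i)+1\}$ by the fact above) and $b_{i+1}$ likewise sits at level $L$ or $L-1$, and in each of the resulting combinations conditions~(i)/(ii) pin down the preference of $a_i$ between $b_{i+1}$ and $b_i = M^\star(a_i)$ exactly so as to match the level gap $\ell(b_i)-\ell(b_{i+1})$. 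Once tightness propagation is in hand, all $M^\star$-edges of $P$ lie in $E_\ell$, the augmenting-path contradiction closes the inductive step, and the invariant $\ell(b) \le |\alpha_b|$ holds at termination, which is exactly the claim.
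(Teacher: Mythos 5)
Your proof is correct, and its outer skeleton coincides with the paper's: induction over the iterations with the invariant $\ell(b) \le |\alpha_b|$, reduced to showing that any object left unmatched by a maximum matching $M$ of $G_\ell$ cannot be tight, argued via the alternating path of $M \oplus M^\star$ starting at $b_0$. The inner argument, however, runs in the opposite direction and with different machinery, essentially as the contrapositive of the paper's Lemma~\ref{lem:alpha-vs-ell}. The paper first invokes Lemma~\ref{lem:G-ell} to place $M^\star$ inside the graph $G_{\ell_{\vec\alpha}}$ induced by the certificate, locates the \emph{first} edge $(a_t,b_t)$ of the path that fails to lie in $E_\ell$, and propagates the strict inequality $\ell(b_h) < \ell_{\vec\alpha}(b_h)$ \emph{backward} from that edge to $b_0$, in each step deriving a contradiction from membership in $E_\ell$ versus non-membership in $E_{\ell_{\vec\alpha}}$. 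You instead assume $b_0$ tight and propagate tightness \emph{forward} along the path, working directly with the LP: complementary slackness gives $\alpha_{a_i} = \ell(b_i)$, your identity $\ell(b_i) - \ell(b_{i+1}) = \wt_{M^\star}(a_i,b_{i+1})$ does hold (the four-case analysis you outline goes through, since both edges at $a_i$ lie in $E_\ell$ once $b_i$ is tight — tightness of $b_i$ is exactly what puts the $M^\star$-edge $(a_i,b_i)$ into $E_\ell$ via your feasibility argument), and dual feasibility at $(a_i,b_{i+1})$ then forces $b_{i+1}$ tight; the contradiction is global, namely that the entire path lies in $G_\ell$ and augments $M$. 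The two case analyses are the same combinatorics in disguise, but the formulations buy different things. Yours is self-contained at the level of the dual certificate, never needs the auxiliary graph $G_{\ell_{\vec\alpha}}$, and isolates the clean invariant that every tight object is matched in every maximum matching of $G_\ell$. The paper's version is deliberately phrased without reference to dual certificates at all (see Lemma~\ref{lem:alpha-vs-ell-general}, which compares two arbitrary level functions $\ell \le \ell'$ and two matchings in $G_\ell$, $G_{\ell'}$); that extra generality is not cosmetic — it is reused verbatim in the correctness proofs for the forced/forbidden-edges algorithm and the $k$-unpopularity-margin algorithm in Sections~\ref{sec:edge-restrictions} and~\ref{sec:min-unpopular}, where the second level function does not come from a dual certificate of a popular assignment. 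So your route is a valid and arguably more direct proof of Theorem~\ref{thm:pop2} itself, at the cost of being less modular for the later extensions.
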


If our algorithm terminates because $\ell(b) = n$ for some $b \in B$, then $|\alpha_b| \ge n$ for any dual certificate~$\vec{\alpha}$ by Theorem~\ref{thm:pop2}. However  $|\alpha_b| \le n-1$ by definition, a contradiction. So $G$ has no popular assignment.

 The following lemma is crucial for proving Theorem~\ref{thm:pop2}. It guarantees that when the algorithm increases $\ell(b)$ for some unmatched object $b \in B$, then the new level function does not exceed $|\alpha_b|$.  

\begin{lemma}\label{lem:alpha-vs-ell}
Let $M^\star$ be a popular assignment, let $\vec \alpha$ be a dual certificate of $M^\star$, and let  
$\ell: B \rightarrow \mathbb{N}$ be such that $\ell(b) \leq |\alpha_b|$ for all $b \in B$.
Let $M$ be a maximum matching in $G_\ell$ and let $b_0 \in B$ be an object that is 
left unmatched in $M$.
Then $\ell(b_0) < |\alpha_{b_0}|$.
\end{lemma}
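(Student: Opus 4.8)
The plan is to argue by contradiction: assume $\ell(b_0) = |\alpha_{b_0}|$ (which, together with the hypothesis $\ell(b) \le |\alpha_b|$, is the only remaining possibility), and construct an $M$-augmenting path in $G_\ell$, contradicting the maximality of~$M$. Throughout I write $\ell_{\vec\alpha}(b) = |\alpha_b|$, so the hypothesis reads $\ell \le \ell_{\vec\alpha}$ pointwise; by Lemma~\ref{lem:G-ell}, $M^\star$ is a perfect matching in $G_{\ell_{\vec\alpha}}$. The starting point is the path $P$ that forms the connected component of the symmetric difference $M \triangle M^\star$ containing~$b_0$. Since $M^\star$ is perfect and $b_0$ is unmatched in~$M$, the vertex $b_0$ is an endpoint of a path component, and that path alternates $M^\star$- and $M$-edges; because every vertex is matched by the perfect matching~$M^\star$, its other endpoint must be an agent $a_k$ that is unmatched in~$M$. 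I write $P = (b_0, a_0, b_1, a_1, \dots, b_k, a_k)$ with $a_i = M^\star(b_i)$ and $b_{i+1} = M(a_i)$, a simple path whose $M$-edges automatically lie in $E_\ell$.

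The first ingredient is a monotonicity sub-claim: if $(a,b) \in M^\star$ and $\ell(b) = \ell_{\vec\alpha}(b)$, then $(a,b) \in E_\ell$. To prove it, I start from the fact that $(a,b) \in E_{\ell_{\vec\alpha}}$ (as $M^\star$ matches in $G_{\ell_{\vec\alpha}}$) and examine whether it satisfies case~(i) or case~(ii) in the definition of $E_{\ell_{\vec\alpha}}$. Since $\ell \le \ell_{\vec\alpha}$ and $\ell(b)=\ell_{\vec\alpha}(b)$, one checks that $\ell^*(a)$ can differ from $\ell^*_{\vec\alpha}(a)$ by at most the right amount, so that $b$ remains at the top level (or one below it) also under~$\ell$. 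The key point is that any neighbor $b'$ of~$a$ that could block $(a,b)$ from entering $E_\ell$ (by lying at the relevant level under~$\ell$ and being preferred to~$b$) would satisfy $\ell_{\vec\alpha}(b') \ge \ell(b')$ and hence sit at the corresponding level under $\ell_{\vec\alpha}$ as well, contradicting the case-(i)/(ii) condition that witnessed $(a,b) \in E_{\ell_{\vec\alpha}}$. Thus no blocking neighbor exists and $(a,b) \in E_\ell$.

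The main work, and the step I expect to be the obstacle, is the propagation invariant: $\ell(b_i) = \ell_{\vec\alpha}(b_i)$ for every $i \le k$, proved by induction. The base case $i=0$ is the contradiction hypothesis. For the inductive step, assume $\ell(b_i) = \ell_{\vec\alpha}(b_i)$; the sub-claim gives $(a_i, b_i) \in E_\ell$, so this edge satisfies case~(i) or~(ii), and the $M$-edge $(a_i, b_{i+1}) \in E_\ell$ likewise satisfies one of the two cases. By a short case analysis over these two-by-two possibilities, the relative level of $b_{i+1}$ and $b_i$ with respect to $\ell^*(a_i)$ determines the sign of $\wt_{M^\star}(a_i, b_{i+1})$ exactly (for instance, if both are top-level and undominated then $b_i \sim_{a_i} b_{i+1}$ and the weight is~$0$; if $b_{i+1}$ dominates all top-level neighbors including $b_i$ the weight is $+1$; and so on). Plugging this value into the dual feasibility constraint $\alpha_{a_i} + \alpha_{b_{i+1}} \ge \wt_{M^\star}(a_i, b_{i+1})$, together with $\alpha_{a_i} = -\alpha_{b_i} = \ell_{\vec\alpha}(b_i)$ (complementary slackness, as in the proof of Lemma~\ref{lem:G-ell}) and the inductive equality, yields $\ell_{\vec\alpha}(b_{i+1}) \le \ell(b_{i+1})$; combined with $\ell(b_{i+1}) \le \ell_{\vec\alpha}(b_{i+1})$ this forces equality. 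The delicate part is verifying that each of the four combinations simultaneously pins down $\wt_{M^\star}(a_i,b_{i+1})$ and makes the sandwich collapse to equality.

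With the invariant in hand, the sub-claim shows that every $M^\star$-edge $(a_i, b_i)$ of $P$ lies in $E_\ell$, and the $M$-edges lie in $E_\ell$ trivially; hence all of $P$ is contained in $G_\ell$. Then $P$ is an alternating path in $G_\ell$ whose endpoints $b_0$ and $a_k$ are both unmatched in~$M$, i.e.\ an $M$-augmenting path in $G_\ell$ (illustrated in Fig.~\ref{fig:my_label2}). This contradicts the maximality of~$M$ in $G_\ell$, so the assumption $\ell(b_0) = |\alpha_{b_0}|$ is untenable and $\ell(b_0) < |\alpha_{b_0}|$, as claimed. \qed
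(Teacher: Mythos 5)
Your proof is correct, and its skeleton is the same as the paper's: both arguments work with the $M$-augmenting path $P$ in $M \oplus M^\star$ starting at $b_0$ and propagate a relation between $\ell$ and $\ell_{\vec\alpha}$ along $P$. The organization and the technique of the key step differ, however. The paper never assumes $\ell(b_0)=|\alpha_{b_0}|$; it uses maximality of $M$ to pick a prefix of $P$ ending at the first edge $(a_t,b_t)\notin E_\ell$, shows $\ell(b_t)<\ell_{\vec\alpha}(b_t)$ there (this is precisely the contrapositive of your monotonicity sub-claim), and then pushes the \emph{strict} inequality backward to $b_0$ by a purely combinatorial case analysis, each case ending in the contradiction $(a_h,b_h)\notin E_{\ell_{\vec\alpha}}$. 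You instead push \emph{equality} forward from $b_0$ and land the contradiction at the end, as an augmenting path inside $G_\ell$; your inductive step runs through the LP rather than through case analysis on $E_{\ell_{\vec\alpha}}$: the identity $\wt_{M^\star}(a_i,b_{i+1})=\ell(b_i)-\ell(b_{i+1})$ (valid because both $(a_i,b_i)$ and $(a_i,b_{i+1})$ lie in $E_\ell$, which is exactly where your sub-claim is needed at every step), combined with dual feasibility $\alpha_{a_i}+\alpha_{b_{i+1}}\ge \wt_{M^\star}(a_i,b_{i+1})$ and complementary slackness $\alpha_{a_i}=-\alpha_{b_i}$, forces $\ell_{\vec\alpha}(b_{i+1})\le\ell(b_{i+1})$ and hence equality. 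The four-case analysis you flagged as delicate does close: the top/top and down/down combinations give weight $0$ with equal levels, and the two mixed combinations give weight $\pm 1$ matching the level difference, so the weight identity holds in every case. As for what each route buys: your LP step is more mechanical, collapsing the paper's case distinctions into a single inequality, but it is tied to the second level function being $\ell_{\vec\alpha}$ for a dual certificate; the paper's propagation uses $\vec\alpha$ only through the fact $M^\star\subseteq E_{\ell_{\vec\alpha}}$, which is why it abstracts verbatim to Lemma~\ref{lem:alpha-vs-ell-general} (two arbitrary level functions $\ell\le\ell'$ and matchings in $G_\ell$, $G_{\ell'}$, no LP), the form that is reused in Sections~\ref{sec:edge-restrictions} and~\ref{sec:min-unpopular}. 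To recover that generality from your argument you would have to replace dual feasibility by the combinatorial inequality $\ell'(b')-\ell'(b)\ge \wt$ proved inside Lemma~\ref{lem:G-ell}.
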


 Before we turn to the proof of Lemma~\ref{lem:alpha-vs-ell}, we point out that Theorem~\ref{thm:pop2} follows from this lemma by a simple induction.

\begin{proof}[of Theorem~\ref{thm:pop2}]
  Let $\ell_i$ be the level function on the set $B$ at the start of the $i$-th iteration of our
algorithm. 
  We are going to show by induction that for every~$i$ we have $|\alpha_b| \ge \ell_i(b)$ for all~$b \in B$.
  This is true for $i = 1$, since $\ell_1(b) = 0$ for all~$b \in B$.
  Now suppose that $|\alpha_b| \ge \ell_i(b)$ for all $b \in B$.
  Let $b_0 \in B$.
  If $b_0$ is matched in the maximum matching $M$ in $G_{\ell_i}$, then we know $\ell_{i+1}(b_0) = \ell_{i}(b_0) \leq |\alpha_{b_0}|$. If $b_0$ is 
  left unmatched in $M$, then $\ell_{i+1}(b_0) = \ell_{i}(b_0) + 1$ and $\ell_{i}(b_0) < |\alpha_{b_0}|$ by Lemma~\ref{lem:alpha-vs-ell}. Thus $\ell_{i+1}(b_0) \leq |\alpha_{b_0}|$ in either case, completing the induction. \qed
\end{proof}

\begin{proof}[of Lemma~\ref{lem:alpha-vs-ell}]
  By Lemma~\ref{lem:G-ell}, $M^\star$ is a perfect matching in  $G_{\ell_{\vec\alpha}}$ where $\ell_{\vec\alpha}(b) = |\alpha_b|$ for~$b \in B$.
  Thus, the symmetric difference $M \oplus M^\star$ in  $G$ contains an $M$-augmenting path~$P$ starting at~$b_0$.
  However, as $M$ is of maximum size in~$G_{\ell}$, the path $P$ must contain an edge that is not in~$E_{\ell}$.

  Let $(b_0, a_0, b_1, a_1, \dots, b_t, a_t)$ be any prefix of $P$ such that $(a_t, b_t) \notin E_{\ell}$ (see Fig.~\ref{fig:my_label2}). 
  Note that $(a_0, b_0)$ and $(a_t, b_t)$ are in $M^\star$, since $M$ leaves $b_0$ unmatched and $M \subseteq E_{\ell}$. 
  Thus $(a_h, b_h) \in M^{\star} \subseteq E_{\ell_{\vec\alpha}}$ for all $h \in \{0, \dots, t\}$ and $(a_h, b_{h+1}) \in M \subseteq E_{\ell}$ for all $h \in \{0, \dots, t - 1\}$.
  We will show that $\ell(b_h) < \ell_{\vec\alpha}(b_h)$ for all $h \in \{0, \dots, t\}$, and thus in particular, $\ell(b_0) < \ell_{\vec\alpha}(b_0) = |\alpha_{b_0}|$.

  We first show that $\ell(b_t) < \ell_{\vec\alpha}(b_t)$.
  Assume for contradiction that $\ell(b_t) = \ell_{\vec\alpha}(b_t)$.
  Using the fact that $(a_t, b_t) \notin E_{\ell}$, one of the following cases must hold:
  \begin{itemize}
    \item[$\bullet$] $a_t$ has a neighbor in level at least $\ell(b_t)+2$, or
    \item[$\bullet$] $a_t$ has a neighbor in level $\ell(b_t)+1$ that is not dominated by $b_t$,  or
    \item[$\bullet$] $a_t$ has a neighbor in level~$\ell(b_t)$ that is preferred to $b_t$.
  \end{itemize}  
  As $\ell_{\vec\alpha}(b_t) = \ell(b_t)$ and $\ell_{\vec\alpha}(b) \ge \ell(b)$ for all $b \in B$, 
  in each case we get $(a_t, b_t) \notin E_{\ell_{\vec\alpha}}$, a contradiction.
  
  Now suppose there exists some $h \in \{0, \dots, t-1\}$ with $\ell(b_{h+1}) < \ell_{\vec\alpha}(b_{h+1})$ but $\ell(b_h)= \ell_{\vec\alpha}({b_h})$. Recall that $(a_h,b_{h+1}) \in M \subseteq E_\ell$, which leaves us with the following possibilities: 
\begin{itemize}
\item[$\bullet$] $\ell(b_{h+1}) \ge \ell(b_h) + 1$: then $\ell_{\vec\alpha}(b_{h+1}) \geq \ell_{\vec\alpha}(b_h) + 2$;
\item[$\bullet$] $\ell(b_{h+1}) = \ell(b_h)$: then $a_h$ does not prefer $b_h$ to $b_{h+1}$, but $\ell_{\vec\alpha}(b_{h+1}) \ge \ell_{\vec\alpha}(b_{h}) + 1$;
\item[$\bullet$] $\ell(b_{h+1}) = \ell(b_h) - 1$: then $a_h$ prefers $b_{h+1}$ to $b_h$, but $\ell_{\vec\alpha}(b_{h+1}) \ge \ell_{\vec\alpha}(b_{h})$.
\end{itemize} 
In each of these cases,  we get $(a_h,b_h) \notin E_{\ell_{\vec\alpha}}$ by the definition of $G_{\ell_{\vec\alpha}}$, again a contradiction. \qed
\end{proof}

We remark that if a popular assignment exists, then the algorithm returns a popular assignment~$M$ and a corresponding dual certificate $\vec\alpha$ such that $\ell_{\vec\alpha} \leq \ell_{\vec\alpha'}$ for any dual certificate $\vec\alpha'$ of any popular assignment $M'$. This shows that there is a unique minimal dual certificate in this sense. 

\smallskip

We will show that a two-level truncation of Algorithm~\ref{alg:popassign} (i.e., where the \textsf{while}-loop terminates if $\ell(b) = 2$ for some $b\in B$)
solves the \myproblem{popular matching} problem with partial order preferences. Before we do this, we point out
a generalization of Lemma~\ref{lem:alpha-vs-ell} that encapsulates the main argument of the preceding proof. This insight will be useful for generalizing our algorithmic result in Sections~\ref{sec:edge-restrictions}-\ref{sec:min-unpopular}.

\begin{lemma}\label{lem:alpha-vs-ell-general}
Let $\ell, \ell': B \rightarrow \mathbb{N}$ be such that $\ell(b) \leq \ell'(b)$ for all $b \in B$.
Let $M$ and $M'$ be matchings in~$G_\ell$ and~$G_{\ell'}$, respectively.
Let $b_0 \in B$ be an object that is matched in $M'$ but not in $M$.
Let $P$ be the path in $M \oplus M'$ containing~$b_0$.
If $P$ contains an edge not in $E_\ell$,
then $\ell(b_0) < \ell'(b_0)$.
\end{lemma}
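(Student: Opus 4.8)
The plan is to observe that the proof of Lemma~\ref{lem:alpha-vs-ell} never really used that $M^\star$ is \emph{popular} or that $M$ has \emph{maximum} size in $G_\ell$: the only ingredients it exploited were that $M^\star$ is a matching contained in $E_{\ell_{\vec\alpha}}$, that $M$ is a matching contained in $E_\ell$, that $\ell \le \ell_{\vec\alpha}$ pointwise, and that the symmetric-difference path through $b_0$ carries an edge outside $E_\ell$. Since Lemma~\ref{lem:alpha-vs-ell-general} supplies exactly these ingredients (with $M'$ and $\ell'$ in the roles of $M^\star$ and $\ell_{\vec\alpha}$), I would reprove it by rerunning the same backward induction along $P$.

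First I would fix the combinatorial structure of $P$. As $b_0$ is matched in $M'$ but not in $M$, it has degree one in $M\oplus M'$, so it is an endpoint of $P$ and the edge of $P$ at $b_0$ lies in $M'$; hence $P$ reads $(b_0, a_0, b_1, a_1, \dots)$ with $(a_h,b_h)\in M'\subseteq E_{\ell'}$ and $(a_h,b_{h+1})\in M\subseteq E_\ell$. Because every $M$-edge lies in $E_\ell$, the edge of $P$ outside $E_\ell$ (which exists by hypothesis) must be an $M'$-edge, so I truncate $P$ at the first such edge to obtain a prefix $(b_0,a_0,\dots,b_t,a_t)$ with $(a_t,b_t)\notin E_\ell$.

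Next I would show $\ell(b_h)<\ell'(b_h)$ for all $h\in\{0,\dots,t\}$ by backward induction, the case $h=0$ being the claim. For the base case $h=t$, I assume for contradiction that $\ell(b_t)=\ell'(b_t)$ (the only alternative to the desired inequality, since $\ell\le\ell'$); then $(a_t,b_t)\notin E_\ell$ forces, by the definition of $G_\ell$, one of the three local obstructions recorded in the proof of Lemma~\ref{lem:alpha-vs-ell} (a neighbor of $a_t$ at level at least $\ell(b_t)+2$, one at level $\ell(b_t)+1$ not dominated by $b_t$, or one at level $\ell(b_t)$ preferred to $b_t$), and since $\ell'\ge\ell$ with $\ell'(b_t)=\ell(b_t)$ each obstruction persists under $\ell'$, giving $(a_t,b_t)\notin E_{\ell'}$ and contradicting $(a_t,b_t)\in M'$. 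For the inductive step I assume $\ell(b_{h+1})<\ell'(b_{h+1})$ but $\ell(b_h)=\ell'(b_h)$, and then the three cases for $\ell(b_{h+1})$ relative to $\ell(b_h)$ permitted by $(a_h,b_{h+1})\in E_\ell$ yield, exactly as in the original proof, $(a_h,b_h)\notin E_{\ell'}$, contradicting $(a_h,b_h)\in M'$.

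I expect no real obstacle, since this is a faithful transcription of the earlier argument; the only point to check with care is that every case distinction in the proof of Lemma~\ref{lem:alpha-vs-ell} invoked $\ell_{\vec\alpha}$ solely through the two properties $\ell_{\vec\alpha}\ge\ell$ and $\ell_{\vec\alpha}=\ell$ at the object currently forcing the contradiction, both of which hold here for $\ell'$. Confirming that the maximality of $M$ and the popularity of $M^\star$ entered the original proof only to provide, respectively, the off-$E_\ell$ edge and the inclusion $M^\star\subseteq E_{\ell_{\vec\alpha}}$ is essentially the entire content of this generalization.
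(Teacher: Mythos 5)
Your proof is correct and follows exactly the paper's intended route: the paper states Lemma~\ref{lem:alpha-vs-ell-general} without a separate proof precisely because, as you observe, the argument for Lemma~\ref{lem:alpha-vs-ell} uses only that $M\subseteq E_\ell$, $M'\subseteq E_{\ell'}$, $\ell\le\ell'$ pointwise, and the existence of an off-$E_\ell$ edge on $P$. Your transcription of the backward induction (base case at the first off-$E_\ell$ edge, which must be an $M'$-edge, and the three-case inductive step) is faithful and complete.
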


\subsection{Implications for the \myproblem{popular matching} problem}
\label{sec:2-level-algo}
Recall the reduction from the \myproblem{popular matching} problem to the \myproblem{popular assignment} problem, as described in Section~\ref{app:pop-matching-reduction}.
It can be shown that for graphs obtained 
through this reduction
(i.e., those with last resort objects and dummy agents)
we can terminate our algorithm as soon as some object reaches level $2$. 
This is because in any such instance, there exists a dual certificate $\vec{\alpha}$ for a popular assignment that satisfies $|\alpha_b| \leq 1$ for all $b \in B$, as shown in Proposition~\ref{prop:alg-level2} below.

\begin{proposition}
\label{prop:alg-level2}
Let $G=(A \cup B, E)$ be an instance of the \myproblem{popular matching} problem, and let $G'=(A' \cup B',E')$ be the corresponding instance of \myproblem{popular assignment} as constructed in Section~\ref{app:pop-matching-reduction}. If a popular matching exists in~$G$, then there exists a dual certificate $\vec{\alpha}$ for a popular assignment~$M'$ in~$G'$ that satisfies $|\alpha_{b}| \leq 1$ for all $b \in B'$.
\end{proposition}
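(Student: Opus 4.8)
The plan is to reduce the statement to a purely structural claim about levels via Lemma~\ref{lem:G-ell}. That lemma tells us a perfect matching $M'$ in $G'$ is a popular assignment exactly when there is a level function $\ell$ with $M'$ perfect in $G'_\ell$, and that the induced dual certificate satisfies $|\alpha_b| = \ell(b)$. Hence it suffices to produce a level function $\ell \colon B' \to \{0,1\}$ together with a perfect matching $M'$ of $G'_\ell$: the resulting certificate then automatically obeys $|\alpha_b| \le 1$. Unwinding the dual feasibility $\ell(M'(a)) - \ell(b) \ge \wt_{M'}(a,b)$ for values in $\{0,1\}$, this amounts to a two-coloring of the objects satisfying, for every $(a,b) \in E'$: (a) if $a$ strictly prefers $b$ to $M'(a)$ then $\ell(M'(a)) = 1$ and $\ell(b) = 0$; and (b) if $b \sim_a M'(a)$ then $\ell(M'(a)) \ge \ell(b)$. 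A clean observation drives the construction: for a dummy agent $d$ (indifferent and adjacent to all objects) condition~(b) forces $\ell(M'(d)) \ge \ell(b)$ for every $b$, so each dummy must be matched to a top-level object; capping levels at $1$, this means every dummy-matched object sits at level~$1$.

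First I would start from a popular matching $M$ in $G$ (which exists by hypothesis) and record two properties forced by popularity, each proved by a one-line swap contradicting $\Delta(\cdot,M) > 0$: (P1) no unmatched agent is adjacent to an unmatched object, and (P2) no matched agent strictly prefers an unmatched object to its own partner. I would then extend $M$ to the perfect matching $M'$ of $G'$ dictated by the reduction of Section~\ref{app:pop-matching-reduction}: matched agents keep their partners, unmatched agents take their own last resorts, and the $|B|$ dummies fill the remaining $|B|$ objects (the cardinalities match, and any bijection works since dummies are adjacent to everything). Guided by conditions~(a) and~(b), I would set $\ell(b) = 1$ precisely for objects matched to a dummy, for last resorts $l(a)$ of $M$-unmatched agents, and for partners $M(a)$ of $M$-matched agents who strictly prefer some object to $M(a)$; all other objects get level~$0$, after closing under the implication coming from~(b).

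The main work is verifying that this two-coloring is \emph{consistent}, i.e.\ that no object is simultaneously forced to level $1$ and to level $0$. Here the special structure of $G'$ does most of the job. A last-resort object $l(a)$ is adjacent among real agents only to $a$, for whom it is the worst choice, so it can never be an object that some real agent strictly prefers to its own match; combined with (P1) and (P2), this shows that no dummy-matched object and no last resort is ever forced down to level~$0$, eliminating the dangerous clashes involving the many auxiliary objects. The delicate remaining case, and the step I expect to be the crux, is a clash between two ordinary matched agents: an agent $a_1$ whose partner $o = M'(a_1)$ is not a most-preferred object (forcing $\ell(o)=1$) while a second agent $a_2$ strictly prefers $o$ to $M'(a_2)$ (forcing $\ell(o)=0$). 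A single swap only yields $\Delta = 0$, so ruling this out—and ruling out the longer forced implication chains that condition~(b) can generate—requires exploiting that $M'$ is popular against \emph{all} perfect matchings of $G'$, turning any such forced contradiction into an alternating structure of positive weight (a perfect matching $N'$ with $\wt_{M'}(N') > 0$) by routing the displaced agents through their last resorts and the dummies. Establishing that no such positive-weight structure exists is exactly where popularity must be used beyond (P1) and (P2), and completing this consistency argument yields the desired level-$\le 1$ certificate.
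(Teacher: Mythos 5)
You have not proved the proposition; you have outlined a strategy and explicitly deferred its only hard step. Everything before the consistency claim is routine: the reduction already guarantees that the extension $M'$ of a popular matching $M$ is a popular assignment, and Lemma~\ref{lem:G-ell} does convert a $\{0,1\}$-valued level function into the desired certificate. The entire mathematical content of the proposition is therefore the claim that your forced two-coloring never clashes, and your write-up ends exactly where that proof should begin. The gap is real, not cosmetic. The ``forced to $1$'' label propagates along indifference edges via your condition~(b), and these chains interact with ``forced to $0$'' labels in ways that need a genuine argument: for example, if a chain $b_0, a_0, b_1, \ldots, a_{k-1}, b_k$ (with $M'(a_i) = b_{i+1}$ and $b_i \sim_{a_i} b_{i+1}$) starts at an object $b_0$ whose holder $a'$ strictly prefers some object $b^{**}$, and ends at an object $b_k$ that some agent $a^*$ strictly prefers to her own partner, then the naive resolution you gesture at (shift the chain, send the displaced $a'$ to her last resort, recycle the dummy holding that last resort) yields only $\Delta = 0$, not a contradiction with popularity. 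One must instead route $a'$ to $b^{**}$ and then split into subcases: if $b^{**} = b_j$ lies on the chain, a rotation along part of the chain gives $\Delta = +1$; if not, the displaced holder of $b^{**}$ must itself be rerouted through last resorts and dummies. None of this case analysis appears in your proposal, and for partial-order preferences---where indifference is not transitive and the Abraham et al.\ characterization breaks---there is no reason to expect the chains to be tame without such an argument.

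For comparison, the paper's proof sidesteps the construction of a level function entirely. It takes an arbitrary dual certificate $\vec\alpha$ of an arbitrary popular assignment $M'$ in $G'$ (existence is Theorem~\ref{thm:certificate}), shifts all dual values uniformly so that some object $b_0$ has $\alpha_{b_0} = 0$, and then rules out $\alpha_b \le -2$ for every $b \in B'$ by two short feasibility checks: if $M'(b_0)$ is a dummy $d$, the edge $(d,b)$ already violates dual feasibility since $\alpha_d + \alpha_b \le -2 < 0 = \wt_{M'}(d,b)$; if $M'(b_0)$ is a real agent $a$, then feasibility on $(a,l(a))$ forces $\alpha_{l(a)} \ge -1$, the dummy $d'$ matched to $l(a)$ has $\alpha_{d'} \le 1$, and the edge $(d',b)$ violates feasibility. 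This is a few lines of dual arithmetic exploiting only that dummies are adjacent to every object and indifferent among them---no coloring, no forcing chains, no comparison of matchings. To complete your argument you must either carry out the full chain/clash analysis sketched above, or switch to this normalization-of-an-existing-certificate argument.
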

\begin{proof}
It is straightforward to check that the reduction in  Section~\ref{app:pop-matching-reduction} is correct, so $G$ admits a popular matching if and only if $G'$ admits a popular assignment. 
If $M'$ is a popular assignment in~$G'$, then by Theorem~\ref{thm:certificate} there exists a dual certificate~$\vec{\alpha}$ for~$M'$.

We may assume that there exists some  object $b_0 \in B'$ for which $\alpha_{b_0}=0$.
Otherwise we have $\alpha_b \leq -1$ for each $b \in B'$, and since $\alpha_b+\alpha_{M'(b)}=0$ for each $b \in B'$ by complementary slackness, we also have  $\alpha_a \geq 1$ for each $a \in A'$. 
Therefore, decreasing the $\alpha$-value of each agent by~$\min_{a \in A'} \alpha_a$ while increasing the $\alpha$-value of each object by the same amount results in a dual certificate where some object~$b_0$ has $\alpha$-value~$0$.

Assume now for the sake of contradiction that $\alpha_b \leq -2$ for some~$b \in B'$, and consider the agent~$M'(b_0)$. Recall that $\alpha_{M'(b_0)}=0$ follows from $\alpha_{b_0}=0$.

On the one hand, if $M'(b_0)$ is a dummy agent~$d$, then by construction it is adjacent to~$b$, and moreover, $\wt_{M'}(d,b)=0$, because $d$ is indifferent between all objects in~$B'$. However, then $\alpha_d+\alpha_b \leq-2 < \wt_{M'}(d,b)=0$ contradicts the feasibility of~$\vec{\alpha}$.

On the other hand, if $M'(b_0)$ is a non-dummy agent $a \in A$, then $\alpha_{l(a)} \geq -1$ follows for the last resort object $l(a)$ of~$a$, due to $\wt_{M'}(a,l(a)) \geq -1$.
Since $a$ is the unique non-dummy neighbor of~$l(a)$ in~$G'$ and $M'(a)=b_0$, we know that $l(a)$ must be assigned to some dummy agent~$d'$ in~$M'$. 
By $\alpha_{l(a)} \geq -1$ we know $\alpha_{d'} \leq 1$; however, then the edge $(d',b)$ contradicts the feasibility of~$\vec{\alpha}$ by~$\alpha_{d'}+\alpha_b \leq -1<\wt_{M'}(d',b)=0$. \qed
\end{proof}

Thanks to Proposition~\ref{prop:alg-level2}, we can solve the \myproblem{popular matching} problem by using Algorithm~\ref{alg:popassign} and terminating it with rejection as soon as some object reaches level~$2$. This implies that the number of iterations the algorithm performs is at most the number of objects.  
Thus we have the following corollary.

\begin{corollary}
The \myproblem{popular matching} problem where agents' preferences are partial orders can be solved in $O(m \cdot n^{3/2})$ time on a graph with $n$ vertices and $m$ edges.
\end{corollary}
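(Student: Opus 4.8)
The plan is to combine Proposition~\ref{prop:alg-level2} with the running-time analysis already established for Algorithm~\ref{alg:popassign}. By the reduction in Section~\ref{app:pop-matching-reduction}, solving \myproblem{popular matching} on $G=(A\cup B,E)$ amounts to solving \myproblem{popular assignment} on the augmented instance $G'=(A'\cup B',E')$, so I would run Algorithm~\ref{alg:popassign} on $G'$. The key observation, supplied by Proposition~\ref{prop:alg-level2}, is that whenever $G$ admits a popular matching, $G'$ has a popular assignment with a dual certificate $\vec{\alpha}$ satisfying $|\alpha_b|\le 1$ for all $b\in B'$. Hence I would modify the \textsf{while}-loop so that it terminates with rejection as soon as some object reaches level~$2$.

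The correctness of this early termination is exactly Theorem~\ref{thm:pop2}: the level $\ell(b)$ maintained by the algorithm never exceeds $|\alpha_b|$ for a dual certificate of any popular assignment. So if a popular matching exists, then by Proposition~\ref{prop:alg-level2} there is a dual certificate with $|\alpha_b|\le 1$ for every object, and Theorem~\ref{thm:pop2} guarantees that no object's level is ever forced above~$1$ before a perfect matching in $G_\ell$ is found. Conversely, if the truncated algorithm rejects because some object hits level~$2$, then no popular assignment with $|\alpha_b|\le 1$ for all $b$ exists, which by Proposition~\ref{prop:alg-level2} means $G$ has no popular matching. Thus the two-level truncation returns a popular matching precisely when one exists.

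For the running-time bound, I would reuse the per-iteration cost of $O(m\sqrt{n})$ for computing a maximum matching in $G_\ell$ (noting that $G'$ still has $O(n)$ vertices and $O(m)$ edges, since the reduction only doubles the vertex set and adds last-resort and dummy edges). The new ingredient is the iteration count: because levels are capped at~$1$, each object's level can increase at most once, so the total number of iterations is at most the number of objects, i.e.\ $O(n)$, rather than the $O(n^2)$ bound used for the general algorithm. Multiplying the $O(n)$ iteration bound by the $O(m\sqrt{n})$ cost per iteration yields the claimed $O(m\cdot n^{3/2})$ running time.

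I do not expect a serious obstacle here, since all the substantive work has been done: Proposition~\ref{prop:alg-level2} provides the level-$1$ dual certificate and Theorem~\ref{thm:pop2} provides the monotonicity of levels against any dual certificate. The only point requiring a little care is verifying that the early-termination rule is both sound and complete—that rejecting at level~$2$ does not prune away a genuine solution and does not accept a spurious one—but this follows directly from the two cited results, so the argument is essentially a bookkeeping combination of the iteration bound with the established per-iteration cost.
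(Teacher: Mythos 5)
Your proposal is correct and follows essentially the same route as the paper: apply the reduction of Section~\ref{app:pop-matching-reduction}, invoke Proposition~\ref{prop:alg-level2} to justify terminating Algorithm~\ref{alg:popassign} with rejection once any object reaches level~$2$ (with soundness and completeness following from Lemma~\ref{lem:G-ell} and Theorem~\ref{thm:pop2}), and bound the iteration count by $O(n)$ since levels stay below~$2$, giving $O(n)$ iterations at $O(m\sqrt{n})$ each. The only cosmetic imprecision is that an object's level can increase twice (the second increase triggering rejection) rather than once, but this does not affect the $O(n)$ iteration bound or the stated running time.
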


Note that for weak rankings, a two-level truncation of Algorithm~\ref{alg:popassign}
essentially leads to the popular matching algorithm by Abraham et al.~\cite{AIKM07}.
However these two algorithms are quite different for partial order preferences, e.g., consider the example in
Section~\ref{app:characterization-of-pop-matchings}. Running Algorithm~\ref{alg:popassign} on this instance, we obtain $\ell(x) = \ell(y) = 0$ and $\ell(z) = 1$ at the last iteration. 
This implies $(b,y)\notin E_\ell$, hence the unpopular matching $\{(a,x),(b,y),(c,z)\}$ is not contained in $E_\ell$.

\subsection{Popularity with penalties}
\label{sec:k-level-algo}

In the following, we describe a family of problems that interpolates between the \myproblem{popular matching} and \myproblem{popular assignment} problem: given an instance $G=(A\cup B, E)$ of \myproblem{popular matching}, we say that a matching~$M$ in~$G$ is \emph{popular with penalty~$\k$} for some positive integer $\k$, if there is no matching~$N$ in~$G$ such that $N$ beats $M$ in an election where agents vote between~$M$ and~$N$ according to their preferences, but votes of unmatched agents are counted with weight~$\k$. Formally, given two matchings~$N$ and~$M$, define for any agent~$a$
\[
\vote^\k_a(N,M)
= \begin{cases} \phantom{-} 1   & \text{if\ $a$\ is matched in~$N$ and~$M$, and prefers\ $N(a)$\ to\ $M(a)$;}\\
	                     -1 &  \text{if\ $a$\ is matched in~$N$ and~$M$, and prefers\ $M(a)$\ to\ $N(a)$;}\\			
                      \phantom{-} \k &  \text{if\ $a$\ is matched in~$N$ but not in~$M$;}\\		
                      -\k &  \text{if\ $a$\ is matched in~$M$ but not in~$N$;}\\		
                              \phantom{-} 0 & \text{otherwise.}
\end{cases}
\]
Then, the matching~$M$ is \emph{popular with penalty~$\k$}, if $\sum_{a \in A} \vote^\k_a(N,M) \leq 0$ for each matching~$N$. Clearly, for $\k=1$, this property corresponds to the notion of popularity for matchings. Moreover, as long as $G$ contains an assignment (perfect matching), a matching $M$ is popular with penalty $\k=n$ if and only if $M$ is a popular assignment. More generally, any augmenting path $P$ with respect to a matching $M$ that is popular with penalty~$\k$ contains at least $\k+1$ agents, otherwise  $\sum_{a\in A}\vote^\k_a(M\oplus P,M) > 0$. Hence $|M| \ge \frac{\k}{\k+1}|M_{\max}|$ where $M_{\max}$ is a maximum matching in $G$.

In the following we show that a matching that is popular with penalty $\k$ can be found via a reduction to an instance of the \myproblem{popular assignment} problem.
Moreover, we also show that we can determine whether there is an assignment that is popular with penalty $\k$ (when compared with matchings from $G$) by running a truncated version of Algorithm~\ref{alg:popassign}.

\subsubsection {Reduction for finding a matching that is popular with penalty~$\k$.}
Given an instance $G=(A \cup B,E)$ of \myproblem{popular matching} and an integer~$\k$, let us construct an instance of \myproblem{popular assignment}~$G' = (A' \cup B', E')$ as follows: 
\begin{itemize}
    \item The set $A'$ of agents 
contains all agents in $A$, a set~$D=\{d_1,\dots,d_{|B|}\}$ of $|B|$ dummy agents, and a set $\{p_1(a),\dots, p_{\k-1}(a)\}$ of new agents for each $a \in A$.
    \item  The set $B'$ of objects contains all objects in $B$ and a set~$\{l_1(a), \dots, l_\k(a)\}$ of new objects for each $a \in A$.
\end{itemize}
The edge set $E'$ of $G'$ contains the following edges in addition to the ones in the original set of edges $E$: 
\begin{itemize}
    \item For each $a \in A$, a path $P(a)=(a,l_1(a),p_1(a),l_2(a),\dots, p_{\k-1}(a),l_\k(a))$ is added; 
    moreover, the dummy agents are adjacent to every object in $B \cup L_\k$ where $L_\k=\{l_\k(a):a \in A\}$. 
\end{itemize}

The preferences of the original instance are extended as follows: 
Each original agent $a \in A$ keeps her original preference order and in addition, prefers any object $b \in B$ with $(a, b) \in E$ to object~$l_1(a)$, i.e., $l_1(a)$ is $a$'s unique worst-choice object.
For $a \in A$ and $i \in [\k-1]$, agent $p_i(a)$ prefers object~$l_{i}(a)$ to object $l_{i+1}(a)$.
Finally, each dummy agent $d \in D$ is indifferent among her adjacent objects.

For a matching $M$ in $G$, we define a \emph{corresponding assignment} $M'$ in $G'$ as follows.
For each agent $a \in A$ that is matched in $M$, define $M'(a) = M(a)$ and $M'(p_i(a)) = l_{i}(a)$ for $i \in [\k-1]$.
For each agent $a \in A$ that is left unmatched in~$M$, define $M'(a) = l_1(a)$ and  $M'(p_i(a)) = l_{i+1}(a)$ for $i \in [\k-1]$.
Finally, we add an arbitrary perfect matching between the dummy agents in $D$ and the~$|B|$ objects in $\{b \in B : b \text{ is left unmatched in } M\} \cup 
\{l_\k(a): a \text{ is matched in } M\}$.

Note that, by construction of $G'$, for every assignment $N'$ in $G'$ and every $a \in A$, either $N'$ contains the edges $(a, l_1(a))$, $(p_1(a), l_2(a))$, \dots, $(p_{\k-1}(a), l_\k(a))$ or we have $N'(a) \in B$, $N'(l_\k(a)) \in D$, and $N'$ contains the edges $(p_1(a), l_1(a))$, \dots, $(p_{\k-1}(a), l_{\k-1}(a))$.
Hence, up to permutation of the dummy agents (which, due their indifference, do not influence pairwise comparisons) there is a one-to-one correspondence between matchings in $G$ and assignments in $G'$.

\begin{proposition}
\label{prop:vote-for-penalty}
Let $M$ and $N$ be two matchings in $G$ and let $M'$ and $N'$ be the corresponding assignments in $G'$. Then  $\Delta(N',M')=\sum_{a \in A}  \vote^\k_a(N,M)$.
\end{proposition}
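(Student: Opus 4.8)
The plan is to write $\Delta(N',M')$ as a sum of individual votes over the agents of $A'$ and to show that, when these votes are grouped by gadget, each group contributes exactly one term of $\sum_{a\in A}\vote^\k_a(N,M)$. Since $\Delta(N',M')=\phi(N',M')-\phi(M',N')$, we have $\Delta(N',M')=\sum_{v\in A'}s_v$, where $s_v\in\{+1,-1,0\}$ equals $+1$ if $v$ prefers $N'(v)$ to $M'(v)$, equals $-1$ if $v$ prefers $M'(v)$ to $N'(v)$, and equals $0$ otherwise. I would partition $A'$ into the original agents $A$, the auxiliary agents $\{p_i(a):a\in A,\ i\in[\k-1]\}$, and the dummy agents $D$, and analyze each part separately.

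First I would dispose of the dummy agents: each $d\in D$ is indifferent among all its neighbors, so $s_d=0$ and $D$ contributes nothing to $\Delta(N',M')$; this is also why permuting the dummy agents in the definition of the corresponding assignment is harmless. The heart of the proof is then to show that for each fixed $a\in A$ the combined contribution $s_a+\sum_{i=1}^{\k-1}s_{p_i(a)}$ equals $\vote^\k_a(N,M)$; summing this identity over all $a\in A$ yields the claim.

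To establish this per-gadget identity I would invoke the structural observation already recorded before the proposition: for any assignment, the gadget $P(a)$ is matched in one of exactly two ways --- the \emph{matched} configuration (in which $a$ takes an object of $B$, $p_i(a)\mapsto l_i(a)$ for all $i$, and $l_\k(a)$ goes to a dummy) or the \emph{unmatched} configuration (in which $a\mapsto l_1(a)$ and $p_i(a)\mapsto l_{i+1}(a)$ for all $i$). Thus the comparison of $M'$ and $N'$ on the gadget of $a$ is governed entirely by which of these two configurations each of $M'$ and $N'$ uses, which in turn is determined solely by whether $a$ is matched in $M$ and in $N$. If $a$ is matched in both (resp. unmatched in both), then $M'$ and $N'$ use the same configuration on the gadget, so every $p_i(a)$ has the same partner under $M'$ and $N'$ and contributes $0$; the only possibly nonzero vote is $s_a$, which records whether $a$ prefers $N(a)$ to $M(a)$ (resp. is $0$ since $a\mapsto l_1(a)$ under both), matching $\vote^\k_a(N,M)$.

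The one case requiring the penalty amplification is when $a$ is matched in exactly one of $M,N$, say matched in $N$ and unmatched in $M$. Then $N'$ uses the matched configuration and $M'$ the unmatched one, so $N'(a)\in B$ while $M'(a)=l_1(a)$; as $a$ prefers every object of $B$ to $l_1(a)$, we get $s_a=+1$. Moreover $N'(p_i(a))=l_i(a)$ while $M'(p_i(a))=l_{i+1}(a)$, and since $p_i(a)$ prefers $l_i(a)$ to $l_{i+1}(a)$ we get $s_{p_i(a)}=+1$ for each of the $\k-1$ auxiliary agents. Hence the gadget contributes $1+(\k-1)=\k=\vote^\k_a(N,M)$, and the symmetric case contributes $-\k$. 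The main obstacle --- and the only place real care is needed --- is precisely this bookkeeping along the auxiliary path: one must check that the prescribed shift of partners between the two configurations is consistent with $E'$ and with the preferences of the $p_i(a)$, so that exactly $\k-1$ auxiliary agents flip their vote in the same direction as $a$. Everything else reduces to the routine four-case comparison above.
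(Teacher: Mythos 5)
Your proposal is correct and follows essentially the same route as the paper's proof: the same per-gadget decomposition of $\Delta(N',M')$, the same key identity $s_a+\sum_{i=1}^{\k-1}s_{p_i(a)}=\vote^\k_a(N,M)$ (the paper's equation for each $a\in A$), the same use of the two-configuration structure of $P(a)$, and the same case analysis including the $1+(\k-1)=\k$ amplification and the vanishing dummy contributions.
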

\begin{proof}
When comparing assignments~$N'$ and~$M'$, we write $\vote_a(N',M')$ instead of $\vote^\k_a(N',M')$, since no agent is unmatched in an assignment. We claim that 
\begin{equation}
\label{eq:vote}
    \vote_{a}(N', M') + \sum_{i = 1}^{\k-1} \vote_{p_i(a)}(N', M') = \vote^\k_{a}(N, M)
\end{equation}
for each agent~$a \in A$. 
Note that for every agent $a \in A$ that is either matched in both $M$ and $N$ or unmatched in both $M$ and $N$, every agent $p_i(a)$ for $i \in [\k-1]$ is matched to the same object in both $M'$ and $N'$.
Hence \eqref{eq:vote} holds for such agent~$a$.
Now consider an agent $a \in A$ that is matched in $N$ but not in $M$.
Note that in this case $\vote_a(N', M') + \sum_{i = 1}^{\k-1} \vote_{p_i(a)}(N', M') = \k$, because $\vote_a(N', M') = 1$ and because $p_i(a)$  prefers $N(p_i(a)) = l_i(a)$ to $M(p_i) = l_{i+1}(a)$ for each $i \in [\k-1]$, 
so (\ref{eq:vote}) holds again.
By a symmetric argument, $\vote_a(N', M') + \sum_{i = 1}^{\k-1} \vote_{p_i(a)}(N', M') = -\k$ for all agents $a \in A$ that are unmatched in $N'$ but matched in $M'$. This proves our claim.
Because $\vote_d(N', M') = 0$ for all $d \in D$, we conclude that 
\begin{align*}
    \Delta(N', M') & = \sum_{a \in A'} \vote_a(N', M') = \sum_{a \in A} \left(\vote_a(N', M') + \sum_{i = 1}^{\k-1} \vote_{p_i(a)}(N', M')\right) \\
    & = \sum_{a \in A} \vote^\k_a(N, M). 
\end{align*}    
This completes the proof. \qed
\end{proof}

By Proposition~\ref{prop:vote-for-penalty}, a matching $M$ in $G$ is popular with penalty $\k$ in $G$ if and only if the corresponding assignment $M'$ in $G'$ is popular in $G'$.
In particular, we can find a matching in $G$ that is popular with penalty $\k$, if it exists, in polynomial time by applying Algorithm~\ref{alg:popassign}. 
Furthermore, the structure of $G'$ allows us to show that the \emph{$(\k+1)$-level truncation} of Algorithm~\ref{alg:popassign} suffices to solve
the \myproblem{popular assignment} problem in $G'$, i.e.,
we terminate the algorithm with rejection as soon as some object reaches level~$\k+1$.
The proof of the following proposition is given in the appendix.

\begin{restatable}{proposition}{proppenalty}
\label{prop:penalty-truncation}
    If $M'$ is a popular assignment in~$G'$, then there exists a dual certificate~$\vec{\alpha}$ for~$M'$ such that $|\alpha_{b}| \leq \k$ for each object $b  \in B'$.
\end{restatable}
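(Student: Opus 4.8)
The plan is to start from an arbitrary dual certificate $\vec{\alpha}$ for the popular assignment $M'$ in $G'$ (which exists by Theorem~\ref{thm:certificate}) and argue that its $\alpha$-values on objects are automatically constrained to lie in $\{0,-1,\dots,-\k\}$, after a suitable normalization. The key structural fact I would exploit is the one already recorded just before the proposition: for every original agent $a \in A$, the assignment $M'$ either routes $a$ through its entire private path $P(a)=(a,l_1(a),p_1(a),\dots,p_{\k-1}(a),l_\k(a))$ (the ``unmatched'' pattern) or matches $a$ to some $b \in B$ while $l_\k(a)$ is taken by a dummy and the path agents $p_i(a)$ occupy $l_i(a)$ (the ``matched'' pattern). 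Complementary slackness gives $\alpha_u + \alpha_{M'(u)} = 0$ for every matched pair, so the $\alpha$-values along each path are tightly linked, and the indifference of the dummy agents will force the objects in $B \cup L_\k$ to share a common value.

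First I would normalize the certificate, exactly as in the proof of Proposition~\ref{prop:alg-level2}: by shifting all agent values down and all object values up by a constant (which preserves feasibility and optimality since $G'$ has a perfect matching), I may assume some object has $\alpha$-value $0$. Since every dummy agent $d \in D$ is indifferent among its neighbors, $\wt_{M'}(d,b)=0$ for all $b \in B \cup L_\k$, so feasibility $\alpha_d + \alpha_b \ge 0$ together with the slackness equality $\alpha_d + \alpha_{M'(d)} = 0$ forces every object in $B \cup L_\k$ to have the \emph{same} $\alpha$-value; after normalization this common value is $0$, hence $\alpha_d = 0$ for all dummies and $\alpha_b = 0$ for all $b \in B \cup L_\k$.

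Next I would walk down each path $P(a)$ to bound the remaining objects $l_1(a),\dots,l_{\k-1}(a)$. The edges $(p_i(a),l_i(a))$ and $(p_i(a),l_{i+1}(a))$ are the only edges incident to each $p_i(a)$, and the stated preference $l_i(a) \succ_{p_i(a)} l_{i+1}(a)$ together with the $\wt_{M'}\ge -1$ bound and complementary slackness lets me propagate the values: knowing $\alpha_{l_\k(a)}=0$ and moving upward through $p_{\k-1}(a),\dots,p_1(a)$, each step can decrease the object value by at most one. This yields $\alpha_{l_i(a)} \ge -(\k - i + 1) \ge -\k$ in both the matched and unmatched patterns, so $|\alpha_b| \le \k$ for every $b \in B'$. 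I would handle the two patterns uniformly by invoking the feasibility inequality on the relevant path edge together with the slackness equality on the matched edge at each $p_i(a)$.

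The main obstacle I anticipate is the careful bookkeeping on the auxiliary paths: I must verify the one-step bound $\alpha_{l_{i}(a)} \ge \alpha_{l_{i+1}(a)} - 1$ separately for the two assignment patterns (since which of $l_i(a),l_{i+1}(a)$ is matched to $p_i(a)$ differs), and confirm that the worst case is the full unmatched path of length $\k$, giving the extremal value $\alpha_{l_1(a)} = -\k$ for an agent $a$ that is unmatched in $M$. Once the value $0$ is pinned on $B \cup L_\k$ via the dummy agents, everything else is a short inductive descent, so this is the step I would present in full detail while treating the normalization and the dummy-value argument as direct adaptations of Proposition~\ref{prop:alg-level2}.
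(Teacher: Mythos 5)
Your plan breaks at its first substantive step, and the failure is not repairable by more careful bookkeeping. From $\wt_{M'}(d,b)=0$ for all $b \in B \cup L_\k$, feasibility gives $\alpha_b \ge -\alpha_d$, and complementary slackness gives $\alpha_{M'(d)} = -\alpha_d$; together these only say that the objects \emph{matched to dummies} all sit at the common minimum value over $B\cup L_\k$, while objects of $B\cup L_\k$ that are matched to original agents may sit strictly above that value. So your claim that ``every object in $B \cup L_\k$ has the same $\alpha$-value, equal to $0$ after normalization'' is false, and with it collapses the anchor for your path-descent. (A secondary issue: in the matched pattern the one-step bound $\alpha_{l_i(a)} \ge \alpha_{l_{i+1}(a)}-1$ that you want to propagate upward from $l_\k(a)$ is not among the constraints of \ref{LP2}; there the constraints give $\alpha_{l_{i+1}(a)} \ge \alpha_{l_i(a)}-1$, i.e.\ lower bounds propagate \emph{downward} from $\alpha_{l_1(a)} \ge \alpha_{M'(a)}-1$, so they are anchored at the value of $a$'s partner in $B$ and are useless when that value is very negative.)

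The deeper problem is strategic: the proposition asserts the \emph{existence} of a good certificate, and this cannot be proved by taking an arbitrary certificate and applying a uniform shift, because a uniform shift preserves all differences $\alpha_u-\alpha_v$ and there exist already-normalized certificates whose spread exceeds $\k$. Concretely, take $\k=2$, $A=\{a_1,a_2,a_3\}$, $B=\{b_1,b_2,b_3\}$, edges $(a_i,b_i)$ for all $i$ plus $(a_2,b_1)$ and $(a_3,b_2)$, with preferences $b_1\succ_{a_2}b_2$ and $b_2\succ_{a_3}b_3$; in $G'$ let $M'$ match each $a_i$ to $b_i$, each $p_1(a_i)$ to $l_1(a_i)$, and the dummies to the objects $l_2(a_i)$. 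Then setting $\alpha_{b_1}=0$, $\alpha_{b_2}=-1$, $\alpha_{b_3}=-2$, $\alpha_{l_1(a_1)}=\alpha_{l_1(a_2)}=-1$, $\alpha_{l_1(a_3)}=-3$, $\alpha_{l_2(a_i)}=-2$ for all $i$, with agent values fixed by slackness ($\alpha_{a_1}=0$, $\alpha_{a_2}=1$, $\alpha_{a_3}=2$, $\alpha_{p_1(a_1)}=\alpha_{p_1(a_2)}=1$, $\alpha_{p_1(a_3)}=3$, all dummies at $2$) yields a feasible solution to \ref{LP2} of value $0$, hence a dual certificate in the sense of Theorem~\ref{thm:certificate} (and $M'$ is indeed popular). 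This certificate is normalized---its maximum object value is $0$---yet the values on $B\cup L_2$ are $0,-1,-2$ (not constant), and $|\alpha_{l_1(a_3)}|=3>\k$. Any correct proof must therefore \emph{select or modify} the certificate rather than accept an arbitrary one: the paper's proof chooses $\vec{\alpha}$ minimizing $\k'=\max_b|\alpha_b|$ (and, subject to that, the number of objects attaining $-\k'$), assumes $\k'>\k$, and derives a contradiction via non-uniform $\pm 1$ modifications along subgraphs of tight edges reachable by $M'$-alternating paths, combined with a structural analysis showing that the extremal objects lie in $L_\k\cup B$, that all dummies then carry value $\k'$, and that an agent of $\alpha$-value $0$ cannot exist where the alternating paths force it to lie. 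None of these ingredients appear in your plan, and the example above shows they cannot be dispensed with.
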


Recall that there are $O(n\k)$ nodes and $O(n^2)$ edges in~$G'$ where $n=|A \cup B|$ and $m=|E|$. 
Thus, the following result follows immediately from Proposition~\ref{prop:vote-for-penalty} and Proposition~\ref{prop:penalty-truncation}.

 \begin{theorem}
    Given an instance $G = (A \cup B, E)$ of the \myproblem{popular matching} problem where $|A\cup B| = n$ along with $\k \in [n-1]$,  
    a matching that is popular with penalty~$\k$ can be computed in $O(n^{7/2} \cdot \k^{5/2})$ time, if $G$ admits such a matching.
\end{theorem}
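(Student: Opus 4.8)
The plan is to instantiate the reduction of Section~\ref{sec:k-level-algo} and then bound the cost of running the $(\k+1)$-level truncation of Algorithm~\ref{alg:popassign} on the resulting graph~$G'$. For correctness I would simply combine the two preceding propositions. By Proposition~\ref{prop:vote-for-penalty}, a matching~$M$ in~$G$ is popular with penalty~$\k$ if and only if its corresponding assignment~$M'$ is a popular assignment in~$G'$; and since every assignment of~$G'$ arises, up to permuting the indifferent dummy agents, as the corresponding assignment of some matching in~$G$, it suffices to find a popular assignment in~$G'$ and project it onto the edge set~$E$ to recover~$M$. To justify terminating the search as soon as some object reaches level~$\k+1$, I would invoke Proposition~\ref{prop:penalty-truncation} together with Theorem~\ref{thm:pop2}: if $G'$ admits a popular assignment, then it admits one with a dual certificate~$\vec\alpha$ satisfying $|\alpha_b|\le\k$ for all $b\in B'$, and Theorem~\ref{thm:pop2} guarantees that the level $\ell(b)$ of each object at termination never exceeds $|\alpha_b|\le\k$. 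Hence, on a yes-instance no object ever reaches level~$\k+1$ before a perfect matching in $G_\ell$ is found, so the truncation never falsely rejects; and if the truncated run does raise an object to level~$\k+1$, we may safely conclude that $G'$ has no popular assignment. Thus the truncated algorithm returns a popular assignment of~$G'$ exactly when the untruncated version would.

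It then remains to account for the resources used by this truncated run. First I would record the size of~$G'$: since $|A|+|B|=n$, we have $|A'|=|A|+|B|+(\k-1)|A|=O(n\k)$ and $|B'|=|B|+\k|A|=O(n\k)$, so $G'$ has $O(n\k)$ nodes; the edges split into the at most $|A||B|=O(n^2)$ original edges, the $O(n\k)$ edges along the gadget paths $P(a)$, and the $O(n|B|)=O(n^2)$ edges from dummy agents to $B\cup L_\k$, giving $O(n^2)$ edges in total. Next I would bound the number of iterations: because every object remains at level at most~$\k$ throughout the truncated run, we have $\sum_{b\in B'}\ell(b)\le \k\,|B'|=O(n\k^2)$, and each iteration that does not already return a perfect matching strictly increases this sum by at least one, so there are $O(n\k^2)$ iterations. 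Each iteration computes a maximum matching in $G_\ell$ via the Hopcroft--Karp algorithm in $O(|E'|\sqrt{|A'\cup B'|})=O(n^2\sqrt{n\k})=O(n^{5/2}\k^{1/2})$ time, and constructing $G_\ell$ costs no more than this. Multiplying, the total time is $O(n\k^2)\cdot O(n^{5/2}\k^{1/2})=O(n^{7/2}\k^{5/2})$, as claimed.

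The reduction, the correctness of the truncation, and the maximum-matching subroutine are all furnished by earlier results, so the only genuinely new work is the resource accounting. The step demanding the most care is the iteration bound, where one must simultaneously track that the object set has grown to size $O(n\k)$ and that the truncation caps each level at~$\k$, so that the two factors combine into the $O(n\k^2)$ bound on $\sum_{b\in B'}\ell(b)$ rather than the $O(n^2)$ bound available in the untruncated analysis; keeping these two growth directions separate is what produces the correct joint dependence on $n$ and~$\k$.
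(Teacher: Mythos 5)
Your proposal is correct and follows essentially the same route as the paper, which derives the theorem directly from Proposition~\ref{prop:vote-for-penalty} and Proposition~\ref{prop:penalty-truncation} together with the observation that $G'$ has $O(n\k)$ nodes and $O(n^2)$ edges; your accounting of $O(n\k^2)$ iterations at $O(n^{5/2}\k^{1/2})$ per maximum-matching computation matches the paper's implicit calculation. In fact you spell out two steps the paper leaves tacit---the inductive invariant from Theorem~\ref{thm:pop2} justifying that the truncation never falsely rejects, and the separate tracking of the $O(n\k)$ object count versus the level cap $\k$ in the iteration bound---so your write-up is a faithful, slightly more detailed version of the same argument.
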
    

\subsubsection{Finding a popular assignment with penalty $\k$.} 
In the following, we show that it is possible to determine whether an instance of \myproblem{popular matching with penalty $\k$} has a solution $M$ that is an assignment.
For any assignment $N$, we have $\sum_{a \in A} \vote^\k_a(N,M)=\Delta(N,M)$, hence any such solution $M$ is also a popular assignment. Moreover, 
for popularity with penalty $\k$, an assignment is also compared with matchings in $G$.
This means that for an assignment, popularity with a given penalty is a stronger requirement than popularity within the set of assignments.

Below, we show that the $(\k + 1)$-level truncation of Algorithm~\ref{alg:popassign} finds an assignment that is popular with penalty $\k$ (if it exists). In contrast to the previous result, we can directly execute the truncated algorithm on $G$ without applying the reduction.

\begin{theorem}
\label{thm:pop-w-penalty}
    For any positive integer~$\k$, the $(\k+1)$-level truncation of Algorithm~\ref{alg:popassign} finds an assignment that is popular with penalty~$\k$, if such an assignment exists.
    The running time on a graph with $n$ vertices and $m$ edges is $O(\k \cdot m \cdot n^{3/2})$.
\end{theorem}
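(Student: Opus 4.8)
The plan is to characterize assignments that are popular with penalty~$\k$ by a \emph{bounded} level function, mirroring Theorem~\ref{thm:certificate} and Lemma~\ref{lem:G-ell}, and then to replay the monotonicity argument of Theorem~\ref{thm:pop2} against this characterization. First I would rewrite the defining condition in LP terms. For an assignment~$M$ and any matching~$N$ in~$G$, a direct count gives $\sum_{a \in A}\vote^\k_a(N,M) = \wt_M(N) - \k\,(n - |N|) = \sum_{(a,b) \in N}\bigl(\wt_M(a,b)+\k\bigr) - \k n$, since (as $M$ is perfect) the only agents unmatched in exactly one of $M,N$ are those matched in~$M$ but not in~$N$, each contributing~$-\k$. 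Hence $M$ is popular with penalty~$\k$ if and only if the maximum of $\sum_{(a,b)\in N}(\wt_M(a,b)+\k)$ over all matchings~$N$ is at most~$\k n$. As the bipartite matching polytope is integral, this maximum equals the optimum of the LP $\max\{\sum_e(\wt_M(e)+\k)x_e : \sum_{e\in\delta(u)}x_e\le 1\ \forall u,\ x\ge 0\}$, whose dual is $\min\{\sum_u y_u : y_a+y_b\ge \wt_M(a,b)+\k\ \forall (a,b)\in E,\ y\ge 0\}$.

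Building on this, I would prove the key characterization: an assignment~$M$ is popular with penalty~$\k$ if and only if there is a level function $\ell\colon B \to \{0,1,\dots,\k\}$ for which $M$ is a perfect matching in~$G_\ell$. For the ``if'' direction, setting $y_a = \ell(M(a))$ and $y_b = \k - \ell(b)$ yields a nonnegative dual solution of value exactly~$\k n$ (using $\sum_a\ell(M(a)) = \sum_b\ell(b)$), whose feasibility $y_a + y_b \ge \wt_M(a,b)+\k$ is precisely the inequality $\ell(M(a)) - \ell(b) \ge \wt_M(a,b)$ established for~$G_\ell$ in the proof of Lemma~\ref{lem:G-ell}; weak duality then gives the bound~$\k n$. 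For the ``only if'' direction, $M$ itself attains LP value~$\k n$, so it is an optimal primal solution; since the dual constraint matrix is totally unimodular with integral data, there is an integral optimal dual~$y$, and complementary slackness with~$M$ forces $y_a + y_{M(a)} = \k$ for every~$a$. Because every object is matched in the assignment~$M$ and $y_a \ge 0$, this yields $0 \le y_b \le \k$ for all~$b$, so $\ell(b) := \k - y_b$ is a valid level function with $\ell(b)\in\{0,\dots,\k\}$; feasibility of~$y$ then reads $\ell(M(a)) - \ell(b) \ge \wt_M(a,b)$ for all edges, which by the converse part of Lemma~\ref{lem:G-ell} means $M$ is a perfect matching in~$G_\ell$.

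With the characterization in hand, the correctness of the truncated algorithm follows. Soundness is immediate: if the $(\k+1)$-level truncation returns a matching~$M$, then~$M$ is perfect in~$G_\ell$ with $\ell(b)\le\k$ for all~$b$, so the ``if'' direction makes~$M$ popular with penalty~$\k$. For completeness, suppose some assignment~$M^\star$ popular with penalty~$\k$ exists; the ``only if'' direction supplies a level function $\ell^\star\colon B\to\{0,\dots,\k\}$ with $M^\star$ perfect in~$G_{\ell^\star}$. I would then replay the induction of Theorem~\ref{thm:pop2}, invoking Lemma~\ref{lem:alpha-vs-ell-general} with $\ell' = \ell^\star$ and $M' = M^\star$: whenever the algorithm's current matching leaves an object~$b_0$ unmatched, the augmenting path in the symmetric difference with~$M^\star$ leaves~$E_\ell$, so $\ell(b_0) < \ell^\star(b_0)$ and the increment keeps $\ell \le \ell^\star$. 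Thus no object ever reaches level~$\k+1$, the algorithm never rejects, and since $\sum_b\ell(b)$ strictly increases while bounded by $\sum_b\ell^\star(b)\le\k n$, it must terminate by returning a perfect matching. The same bound shows there are $O(\k n)$ iterations, each computing a maximum matching in~$G_\ell$ in $O(m\sqrt n)$ time, giving the claimed $O(\k\, m\, n^{3/2})$ running time.

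I expect the crux to be the ``only if'' half of the characterization—specifically, arguing that complementary slackness between the integral optimal dual and the perfect matching~$M$ caps every object potential at~$\k$, which is exactly what confines the levels to $\{0,\dots,\k\}$ and legitimizes the $(\k+1)$-level truncation. Everything else is either a routine LP computation or a direct transcription of the arguments already used for Theorem~\ref{thm:pop2}.
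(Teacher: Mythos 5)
Your proof is correct, but it takes a genuinely different route from the paper's. The paper never states a standalone characterization of penalty-$\k$ popularity: for soundness it decomposes a hypothetical beating matching $N$ into components of $N \oplus M$, dismisses cycle components using the popularity of $M$ as an assignment, and kills path components by a telescoping computation along the path using the specific dual certificate produced by the algorithm; for completeness it first uses that a penalty-$\k$ popular assignment is in particular a popular assignment, takes an \ref{LP2}-certificate minimizing $\sum_{a} \alpha_a$, and shows via a shifting argument on tight alternating paths that this minimal certificate satisfies $|\alpha_b| \le \k$ for all $b$, after which Theorem~\ref{thm:pop2} bounds the algorithm's levels. You instead introduce a second LP --- the (not necessarily perfect) matching LP with shifted weights $\wt_M(e)+\k$ --- whose optimum exactly encodes $\max_N \sum_{a \in A} \vote^\k_a(N,M)$, and derive the clean equivalence: $M$ is popular with penalty~$\k$ iff $M$ is a perfect matching in $G_\ell$ for some $\ell : B \to \{0,1,\dots,\k\}$. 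Soundness is then weak duality against an explicit dual solution, and completeness is total unimodularity plus complementary slackness (which caps $y_b \in \{0,\dots,\k\}$ precisely because $M$ is perfect), followed by the same induction as in Theorem~\ref{thm:pop2} via Lemma~\ref{lem:alpha-vs-ell-general}. What your approach buys: a reusable characterization parallel to Theorem~\ref{thm:certificate} and Lemma~\ref{lem:G-ell}, a symmetric treatment of both directions, and no need for the paper's case analysis or minimal-certificate argument; notably, you never need the intermediate observation that penalty-$\k$ popularity implies plain popularity, since your matching LP compares against all matchings at once. What the paper's approach buys: it stays entirely within the already-established \ref{LP2} framework and produces a dual certificate in the paper's sense, which is the currency used elsewhere in that section (e.g., in Proposition~\ref{prop:penalty-truncation}). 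One presentational caveat: your two invocations of Lemma~\ref{lem:G-ell} are really invocations of the inequality $\ell(M(a)) - \ell(b) \ge \wt_M(a,b)$ established \emph{inside} its proof (the lemma statement itself is phrased for popular assignments and their dual certificates), but that inner argument is self-contained and transfers verbatim, so this is not a gap.
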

\begin{proof}
Suppose first that Algorithm~\ref{alg:popassign} outputs an assignment~$M$ while all objects have level at most~$\k$; by the correctness of Algorithm~\ref{alg:popassign} we know that $M$ is a popular assignment. We claim that $M$ is also popular with penalty~$\k$. 

For the sake of contradiction, assume that $N$ is a matching that beats~$M$, i.e., for which $\sum_{a \in A} \vote^\k_a(N,M)>0$, and consider the symmetric difference~$N \oplus M$ of~$N$ and~$M$. 
For each connected component~$K$ in~$N \oplus M$, let $A_K$ and $B_K$ denote the set of agents and objects in~$K$, respectively.
Observe that there must exist a connected component~$K$ of $N \oplus M$ for which we have $\sum_{a \in A_K} \vote^\k_a(N,M) > 0$. 

First, if $K$ is a cycle, then the matching~$M'=M \oplus K$ is an assignment, and we get that $\sum_{a \in A} \vote^\k_a(M',M)=\sum_{a \in A_K} \vote^\k_a(M',M) = \sum_{a \in A_K} \vote^\k_a(N,M)>0$, which contradicts the fact that $M$ is a popular assignment. 

Second, if $K$ is a path, then since $M$ is a perfect matching, the two endpoints of~$K$ must be an agent~$a$ and an object~$b$, both matched in~$M$ but unmatched in~$N$. 
Let the vertices of the path~$K$ be $a=a_0,b_0,a_1,b_1,\dots,a_t,b_t=b$ for some $t \in \mathbb{N}$, so that $M(a_i)=b_i$ for $i=0,1,\dots,t$ and $N(a_i)=b_{i-1}$ for $i=1,\dots,t$. 
Let $\vec{\alpha}$ be the dual certificate for~$M$
for which $\ell_{\vec{\alpha}}$ is the level-function obtained by Algorithm~\ref{alg:popassign}. 

On the one hand, by complementary slackness we know $\sum_{i=0}^t (\alpha_{a_i}+\alpha_{b_i})=0$. 
On the other hand, for each $i \in [t]$ we know $\wt_M(a_i,b_{i-1})= \vote^\k_{a_i}(N,M)$ by definition, which implies 
\begin{align*}
\sum_{i=1}^t (\alpha_{a_i}+  \alpha_{b_{i-1}}) 
\geq & \sum_{i=1}^t \wt_M(a_i,b_{i-1}) 
= \sum_{i=1}^t  \vote^\k_{a_i}(N,M) \\
= & \left( \sum_{i=0}^t  \vote^\k_{a_i}(N,M) \right)
-  \vote^\k_{a_0}(N,M) = \left( \sum_{i=0}^t  \vote^\k_{a_i}(N,M) \right) + \k > \k
\end{align*}
because $ \vote^\k_{a_0}(N,M)=-\k$.
This leads to 
\[
0=
\sum_{i=0}^t (\alpha_{a_i}+\alpha_{b_{i}}) 
= \alpha_{a_0}+ \alpha_{b_t} +
\sum_{i=1}^t 
\left(
\alpha_{a_i}+ \alpha_{b_{i-1}}
\right) 
> \alpha_{b_t} +\k
\]
by $\alpha_{a_0} \geq 0$. However, this means $\ell_{\vec{\alpha}}(b_t)=|\alpha_{b_t}| >\k$ which contradicts our assumption that Algorithm~\ref{alg:popassign} returned $M$ while all objects had level at most~$\k$.
This proves that $M$ is indeed popular with penalty~$\k$.

\smallskip
For the other direction, we show that if $M$ is an assignment that is popular with penalty~$\k$, then it admits a dual certificate~$\vec{\alpha}$ such that $\alpha_b \geq -\k$ for each object $b \in B$. Since Algorithm~\ref{alg:popassign} returns a level function $\ell$ satisfying $\ell(b) \leq \ell_{\vec{\alpha}}(b)=|\alpha_b|$ for each~$b \in B$, this means that the $(\k+1)$-level truncation of Algorithm~\ref{alg:popassign} will return an assignment as desired.

Recall first that $M$ is a popular assignment, so let~$\vec{\alpha}$ be a dual certificate for~$M$ minimizing $\sum_{a \in A} \alpha_a$.
Assume for the sake of contradiction
that $\alpha_{b^\star}<-\k$ for some 
$b^\star \in B$.
Let us call an edge~$(a,b)$ in~$G$ \emph{tight}, if $\alpha_a+\alpha_b=\wt_M(a,b)$; clearly, all edges of~$M$ are tight by complementary slackness.
Let us consider the set  $A^\star$ of agents and the set $B^\star$ of objects in~$G$ that are reachable from~$b^\star$ via an $M$-alternating path in~$G$ consisting only of tight edges and starting with the edge~$(M(b^\star),b^\star)$; for convenience, let us add~$b^\star$ to~$B^\star$.

We show that there exists some $a^\star \in A^\star$ with $\alpha_{a^\star}=0$. If not, then consider the modification~$\vec{\beta}$ of~$\vec{\alpha}$ obtained through increasing the $\alpha$-value of each object in~$B^\star$ by~$1$ and decreasing the $\alpha$-value of each agent in~$A^\star$ by~$1$. We claim that $\vec{\beta}$ is a dual certificate. 
First note that an agent~$a$ is contained in~$A^\star$ if and only if $M(a)$ is contained in~$B^\star $, by construction.
Hence, as no agent in $A^\star$ has level~$0$, the $\beta$-value of each agent remains non-negative, and the $\beta$-value of each object remains non-positive. 
Observe also that $\vec{\beta}$ is a feasible solution for~\ref{LP2}: assuming that \ref{LP2} is violated on some edge~$(a,b)$, it must be the case that $a \in A^\star$ but $b \notin B^\star$ and $(a,b)$ is tight, otherwise $\beta_a+\beta_b \geq \wt_M(a,b)$ still holds, since $\vec{\alpha}$ is a feasible solution for~\ref{LP2}. However, then the edge~$(a,b)$ is not in~$M$ but is tight, which implies that appending $(a,b)$ to an $M$-alternating path from~$b^\star$ to~$a$ witnessing~$a \in A^\star$ yields a path that shows $b \in B^\star$, a contradiction. Hence, $\vec{\beta}$ is also a dual certificate, contradicting the minimality of~$\vec{\alpha}$. This proves that there exists some $a^\star \in A^\star$ with $\alpha_{a^\star}=0$. 

Now, let $P$ be an $M$-alternating path of tight edges between $b^\star$ and $a^\star$ that 
witnesses $a^\star \in A^\star$,
and let $A_P$ and~$B_P$ denote the agents and objects appearing on~$P$, respectively. 
Consider the matching $N=M \oplus P$ in which $a^\star$ and $b^\star$ are unmatched, and observe that
\begin{align*}
\sum_{a \in A}  \vote^\k_a   (N,M)
 & = \sum_{a \in A_P}  \vote^\k_a(N,M)
=  \vote^\k_{a^\star}(N,M)+\sum_{a \in A_P \setminus \{a^\star\}}  \vote^\k_a(N,M)  \\
& = -\k + \sum_{a \in A_P\setminus \{a^\star\}} \wt_M(a,N(a))=
-\k + \sum_{a \in A_P\setminus \{a^\star\}} \alpha_a + 
\sum_{b \in B_P \setminus \{b^\star\}} \alpha_b  \\
& = 
-\k + \left(\sum_{a \in A_P}\alpha_a + \sum_{B \in B_P}\alpha_b \right) - \alpha_{a^\star} - \alpha_{b^\star} = -\k +0 - 0 -\alpha_{b^\star}>0
\end{align*}
which contradicts our assumption that $M$ is an assignment that is popular with penalty~$\k$.

Finally, the running time of the $(\k+1)$-truncation of Algorithm~\ref{alg:popassign} is indeed as claimed, since there can be at most~$\k \cdot n$ iterations, each one taking $O(m \sqrt{n})$ time. \qed
\end{proof}

\section{Finding a popular assignment with forced/forbidden edges}
\label{sec:edge-restrictions}

In this section we consider a variant of the \myproblem{popular assignment} problem where, in addition to our instance, 
we are given a set $F^+$ of \emph{forced} edges and a set $F^-$ of \emph{forbidden} edges, 
and what we seek is a popular assignment that contains $F^+$ and is disjoint from $F^-$. 
Observe that it is sufficient to deal with forbidden edges, since putting an edge $(a,b)$ into $F^+$
is the same as putting all the edges in the set~$\{(a,b'): b' \in \Nbr(a) \textrm{ and } b' \neq b\}$ 
into $F^-$.

\paragraph{The \myproblem{popular assignment with forbidden edges} problem.}
Given a bipartite graph $G = (A \cup B, E)$
where every $a \in A$ has preferences in partial order
over her neighbors,
together with a set $F \subseteq E$ of forbidden edges,
does $G$ admit a popular assignment $M$ avoiding $F$, i.e., one where $M \cap F=\emptyset$?
\medskip

We will show that in order to deal with forbidden edges, it suffices 
to modify our algorithm in Section~\ref{sec:algo} as follows; see Algorithm~\ref{alg:popassignforbidden}.
The only difference from the earlier algorithm is that we find a maximum matching in the subgraph 
$G_{\ell} - F$, i.e., on the edge set $E_{\ell}\setminus F$.

\begin{algorithm}
\caption{Finding a popular assignment with forbidden set $F\subset E$}\label{alg:popassignforbidden}
\begin{algorithmic}[1]
\ForAll{$b \in B$} $\ell(b) = 0$. \EndFor 
\While{$\ell(b) < n$ for all $b \in B$}
\State Construct $G_\ell = (A\cup B, E_\ell)$ and find a maximum matching $M$ in $G_\ell-F$. 
\If{$M$ is a perfect matching} \textbf{return} $M$. \EndIf
\ForAll{$b \in B$ unmatched in $M$}  $\ell(b) = \ell(b) + 1$. \EndFor
\EndWhile
\State \textbf{return} ``$G$ has no popular assignment with forbidden set $F$''. 
\end{algorithmic}
\end{algorithm}

\begin{theorem}
\label{thm:edge-restrictions-algo}
The above algorithm outputs a popular assignment avoiding $F$, if such an assignment exists in $G$.
\end{theorem}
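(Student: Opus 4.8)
The plan is to mirror the correctness proof of Algorithm~\ref{alg:popassign} (Theorems~\ref{thm:pop} and~\ref{thm:pop2}), establishing soundness and completeness separately, while isolating the single place where forbidden edges enter the argument. For soundness, suppose the algorithm returns a matching~$M$. Then $M$ is a perfect matching in $G_\ell - F$, so $M \subseteq E_\ell \setminus F$; in particular $M \cap F = \emptyset$ and $M$ is a perfect matching in $G_\ell$. Lemma~\ref{lem:G-ell} then immediately gives that $M$ is a popular assignment, and since it avoids $F$ this direction is complete.

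For completeness, I would assume that some popular assignment $M^\star$ avoiding $F$ exists and let $\vec{\alpha}$ be a dual certificate of $M^\star$, which exists by Theorem~\ref{thm:certificate}. By Lemma~\ref{lem:G-ell}, $M^\star$ is a perfect matching in $G_{\ell_{\vec\alpha}}$ with $\ell_{\vec\alpha}(b) = |\alpha_b|$, and because $M^\star$ avoids $F$ we in fact have $M^\star \subseteq E_{\ell_{\vec\alpha}} \setminus F$. The goal is to re-establish the analog of Theorem~\ref{thm:pop2}, namely that at every iteration the current levels satisfy $\ell(b) \le |\alpha_b|$, so that no object can reach level~$n$ (as $|\alpha_b| \le n-1$), whence the algorithm must terminate by returning a perfect matching rather than rejecting.

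The heart of the matter is adapting Lemma~\ref{lem:alpha-vs-ell} to the modified step that computes a maximum matching $M$ in $G_\ell - F$ rather than in $G_\ell$. As before, if $b_0$ is left unmatched by~$M$, then $M \oplus M^\star$ contains an $M$-augmenting path~$P$ starting at~$b_0$, and maximality of~$M$ \emph{within $G_\ell - F$} forces $P$ to contain an edge not in $E_\ell \setminus F$. The key new observation is that both $M$ and $M^\star$ avoid~$F$: the $M$-edges of~$P$ lie in $E_\ell \setminus F$ and the non-$M$ edges of~$P$ lie in $M^\star \subseteq E_{\ell_{\vec\alpha}} \setminus F$, so \emph{no} edge of~$P$ is forbidden. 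Consequently the offending edge of~$P$ cannot belong to~$F$ and must therefore be an edge outside~$E_\ell$. At this point the situation matches the hypothesis of Lemma~\ref{lem:alpha-vs-ell-general} exactly, with $\ell' = \ell_{\vec\alpha}$, $M' = M^\star$, and a path~$P$ through~$b_0$ containing an edge not in~$E_\ell$, and I would invoke it to conclude $\ell(b_0) < \ell_{\vec\alpha}(b_0) = |\alpha_{b_0}|$.

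With this lemma in hand, the induction of Theorem~\ref{thm:pop2} carries over verbatim: assuming $\ell_i(b) \le |\alpha_b|$ for all~$b$ at the start of iteration~$i$, each matched object retains its level while each unmatched object~$b_0$ satisfies $\ell_i(b_0) < |\alpha_{b_0}|$, so after the increment we still have $\ell_{i+1}(b) \le |\alpha_b|$. Hence levels never exceed~$n-1$, the algorithm returns a perfect matching, and by soundness this is the desired popular assignment avoiding~$F$. I expect the main obstacle to be precisely the bookkeeping that guarantees the blocking edge of~$P$ lies outside~$E_\ell$ rather than merely in~$F$; once one observes that the symmetric difference of two $F$-avoiding matchings is itself $F$-free, everything else reduces to the machinery already established for Algorithm~\ref{alg:popassign}.
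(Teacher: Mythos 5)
Your proposal is correct and follows essentially the same route as the paper's proof: both directions are handled identically, and the completeness argument hinges on the same key observation that since $M$ and $M^\star$ both avoid $F$, the edge of the augmenting path blocked by maximality of $M$ in $G_\ell - F$ cannot lie in $F$ and hence lies outside $E_\ell$, after which Lemma~\ref{lem:alpha-vs-ell-general} applies exactly as in the paper. No gaps.
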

\begin{proof}
Recall that any perfect matching in $G_\ell$ is a popular assignment in $G$ by Lemma~\ref{lem:G-ell}. 
It is therefore immediate that if the above algorithm outputs a matching~$M$, 
then $M$ is a popular assignment in $G$ that avoids~$F$.

Let us now prove that if there exists a popular assignment $M^\star$ avoiding $F$, 
then our algorithm outputs such an assignment.
Let $\vec{\alpha}$ be a dual certificate for $M^\star$.
Let $\ell_i$ denote the level function at the beginning of iteration~$i$ of the algorithm.
As in the proof of Theorem~\ref{thm:pop2}, we show by induction that $\ell_i(b) \leq |\alpha_b|$ for all $b \in B$ for any iteration $i$.

This is clearly true initially with $\ell_1(b) = 0$ for all $b \in B$.
  To complete the induction, it suffices to show that $\ell_i(b_0) < |\alpha_{b_0}|$ for all $b_0 \in B$ that are unmatched by any maximum matching $M$ in~$G_{\ell_i}-F$.
  Since $M^\star$ is a perfect matching, the symmetric difference $M \oplus M^\star$ contains an $M$-augmenting path~$P$ starting at $b_0$.
  However, because $M$ has maximum size in $G_{\ell_i} - F$, the path~$P$ must contain an edge $e \notin E_{\ell_i} \setminus F$. We have $(M \cup M^\star) \cap F = \emptyset$, thus we obtain $e \notin F$ and therefore $e \notin E_{\ell_i}$. 
  Note that Lemma~\ref{lem:G-ell} implies 
  $M^\star \subseteq E_{\ell_{\vec\alpha}}$ (recall that $\ell_{\vec\alpha}(b) = |\alpha_b|$ for all $b \in B$).  
  Furthermore, $\ell_i(b) \leq \ell_{\vec\alpha}(b)$ for all $b \in B$ by our induction hypothesis.
  We can thus apply Lemma~\ref{lem:alpha-vs-ell-general} with $M' = M^\star$ and $\ell' = \ell_{\vec\alpha}$ to obtain $\ell_i(b_0) < \ell_{\vec\alpha}(b_0) = |\alpha_{b_0}|$, which completes the induction step. \qed
\end{proof}

\paragraph{\bf Minimum-cost popular assignment vs.\  popular assignment with forbidden edges.}
The \myproblem{popular assignment with forbidden edges} problem can be seen as the special case of the \myproblem{minimum-cost popular assignment} problem
in which the popular assignment may only contain edges of cost~$0$, excluding all edges of non-zero cost. In the general version of \myproblem{minimum-cost popular assignment}, however, there is a degree of freedom as to which non-zero cost edges are included in the assignment. 
Theorem~\ref{thm:mincost-nphard-01} shows that this degree of freedom introduces an additional complexity to the problem. 
Indeed, in our reduction from \myproblem{Vertex Cover} in Theorem~\ref{thm:housealloc-nphard} (which implies Theorem~\ref{thm:mincost-nphard-01}),
a set of vertices chosen as a vertex cover is encoded via the set of cost~$1$ edges chosen to be included in the 
assignment, 
and explicitly forbidding all cost~$1$ edges would turn the constructed \myproblem{popular assignment with forbidden edges} instance into a trivial `no'-instance.

\section{Finding an assignment with minimum unpopularity margin}
\label{sec:min-unpopular}

In this section we consider the \myproblem{$k$-\sf unpopularity-margin} problem in $G$;
Section~\ref{algo:k-unpop-margin} has our algorithmic result and Section~\ref{hardness:k-unpop-margin} 
contains our hardness results.

\subsection{Our algorithm}
\label{algo:k-unpop-margin}
For any assignment $M$, recall that the optimal value of \ref{LP1} is 
$\max_N \Delta(N,M)=\mu(M)$, where the maximum is taken over all assignments $N$ in $G$. 
Consequently, $\mu(M)$ equals the optimal value of the dual linear program~\ref{LP2} as well. 
Therefore, $\mu(M) = k$ if and only 
if there exists an optimal solution $\vec{\alpha}$ to \ref{LP2} 
for which $\sum_{u \in A \cup B} \alpha_u=k$. 
This leads us to a characterization of assignments with a bounded unpopularity margin that is 
a direct analog of Theorem~\ref{thm:certificate}.

\begin{theorem}
\label{thm:dual-cert-unpopular}
$M$ is an assignment with $\mu(M) \leq k$ if and only if there exists a solution $\vec{\alpha}$ to~\ref{LP2}
such that $\alpha_a \in \{0,1, \dots, n\}$ for all $a \in A$, $\alpha_b \in \{0,-1,\dots, -(n-1)\}$ for all $b\in B$,
and $\sum_{u \in A \cup B} \alpha_u \leq  k$.
\end{theorem}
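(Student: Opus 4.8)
**The plan is to prove Theorem~\ref{thm:dual-cert-unpopular} by closely mirroring the argument for Theorem~\ref{thm:certificate}, using LP duality together with a bounding argument that forces the dual variables into the stated ranges.**

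First I would establish the equivalence between $\mu(M) \le k$ and the existence of a feasible dual solution of value at most $k$. As observed in the text preceding the statement, \ref{LP1} is integral, so its optimal value equals $\max_N \Delta(N,M) = \mu(M)$, and by strong LP duality this also equals the optimal value of \ref{LP2}. Hence $\mu(M) \le k$ holds if and only if \ref{LP2} admits a feasible solution $\vec{\alpha}$ with $\sum_{u \in A\cup B}\alpha_u \le k$. This already gives one direction essentially for free: if the claimed $\vec{\alpha}$ exists, it is feasible for \ref{LP2} with objective value at most $k$, so the dual optimum is at most $k$, and therefore $\mu(M) \le k$. The content of the theorem is thus really the \emph{forward} direction, namely that when $\mu(M) \le k$ one can always select an \emph{integral} optimal (or near-optimal) dual solution whose entries lie in the prescribed sets.

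The key steps for the forward direction are then: (1) take an arbitrary optimal dual solution $\vec{\alpha}$; (2) normalize it to be integral; and (3) shift and bound its entries. For integrality I would invoke the fact that \ref{LP2} is the dual of a matching LP with an integral (in fact totally unimodular) constraint matrix and an integral right-hand side $\wt_M(a,b)\in\{-1,0,+1\}$, so an integral optimal dual solution exists. For the normalization, since every feasible solution may be shifted by adding a constant $c$ to all $\alpha_a$ and subtracting $c$ from all $\alpha_b$ without changing feasibility (each edge constraint involves one agent and one object) nor the objective value (since $|A|=|B|=n$), I would choose the shift so that $\min_{b\in B}\alpha_b = 0$ or, more usefully, so that the agent-values become nonnegative. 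Concretely, because each matched edge is tight ($\alpha_a + \alpha_{M(a)} = 0$ by complementary slackness, as $\wt_M(M\text{-edge})=0$), we have $\alpha_a = -\alpha_{M(a)}$, so making all $\alpha_b \le 0$ is equivalent to making all $\alpha_a \ge 0$. For the upper bounds I would argue that the values cannot spread too far: along any $M$-alternating path each step changes the $\alpha$-value by a bounded amount (since consecutive tight/feasible edges force $|\alpha_{b} - \alpha_{b'}|$ to be controlled), and since such a path has at most $n$ objects, the range of $\alpha_b$ is confined to $\{0,-1,\dots,-(n-1)\}$; the extra slack of one unit on the agent side (giving $\alpha_a \in \{0,\dots,n\}$ rather than $\{0,\dots,n-1\}$) reflects exactly the fact that the objective is now allowed to be the positive quantity $k$ rather than forced to be $0$ as in the popular case.

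\textbf{The main obstacle} I anticipate is making the range bounds fully rigorous, in particular justifying the upper bound $n$ on agent-values and $n-1$ on object-values rather than a merely ``polynomial'' bound. This requires a careful alternating-path / reachability argument analogous to the proof of Lemma~\ref{lem:alpha-vs-ell}: one shows that if some $|\alpha_b|$ were too large, the tight edges would propagate a contradiction with optimality (one could decrease the objective by a local shift on a reachable set of vertices, as in the $A^\star, B^\star$ construction appearing later in the penalty proof). I expect the cleanest route is to pick, among all integral optimal dual solutions, one that lexicographically minimizes the spread of values, and then show any remaining out-of-range entry admits an improving shift on the set of vertices reachable via tight alternating paths—contradicting minimality. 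This mirrors the ``reachable tight-edge'' technique already used elsewhere in the paper, so I would reuse that machinery rather than reinvent it.
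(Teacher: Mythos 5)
Your overall frame (weak LP duality for the easy direction, total unimodularity for integrality, a shift normalization, and an extremal-choice-plus-local-shift argument for the range of the object values) matches the paper's proof, and your fallback for the object bound---pick an extremal integral optimal dual and rule out out-of-range entries by a unit shift---is essentially the paper's argument (the paper chooses $\vec{\alpha}$ maximizing $\sum_{b\in B}\alpha_b$ and shifts on the ``gap'' set $B'=\{b:\alpha_b<-r\}$ together with $A'=M(B')$, which shows the object values form a consecutive block of non-positive integers containing $0$). However, there are two genuine gaps. First, you invoke complementary slackness on the edges of $M$ itself, claiming $\alpha_a+\alpha_{M(a)}=0$ for all $a\in A$. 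This is false precisely in the regime this theorem is about: when $\mu(M)=k>0$, the incidence vector of $M$ has \ref{LP1}-value $\wt_M(M)=0<k$, so $M$ is \emph{not} an optimal primal solution and its edges need not be tight; one only has $\alpha_a+\alpha_{M(a)}\ge \wt_M(a,M(a))=0$. (Your normalization conclusion survives, since $\alpha_b\le 0$ for all $b$ together with this inequality already forces $\alpha_a\ge 0$, but the claimed equality is wrong and matters below.)

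Second, and more seriously, you never actually prove the upper bound $\alpha_a\le n$; the remark that the ``extra slack of one unit reflects the fact that the objective is allowed to be positive'' is not an argument. The ingredients you do have cannot give this bound: from $\alpha_b\ge -(n-1)$ and the fact that the nonnegative loads $\alpha_a+\alpha_{M(a)}$ sum to at most $k$, one only gets $\alpha_a\le k+(n-1)$, which exceeds $n$ as soon as $k>1$. The paper's key step, missing from your proposal, is to apply complementary slackness to a perfect matching $N$ that \emph{is} optimal for \ref{LP1}: every edge of $N$ is then tight, so $\alpha_a+\alpha_{N(a)}=\wt_M(a,N(a))\le 1$, and combined with $\alpha_{N(a)}\ge-(n-1)$ this yields $\alpha_a\le n$. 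This switch of the complementary-slackness partner from $M$ to the optimal matching $N$ is exactly what your conflation of $M$ with the primal optimum obscures. (A smaller issue: your path-based heuristic for the object bound is also too loose as stated, since consecutive tight edges can change $\alpha$-values by two per step and tight-edge connectivity is not guaranteed; but your extremal-choice fix would handle that.)
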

\begin{proof}[of Theorem~\ref{thm:certificate} and Theorem~\ref{thm:dual-cert-unpopular}]
  If there exists 
  a feasible 
  solution~$\vec{\alpha}$ to \ref{LP2} such that $\sum_{u\in A\cup B}\alpha_u = 0$, then the
  optimal value of \ref{LP2} is 0 and hence by LP-duality, the optimal value of \ref{LP1} is also 0. Thus $\Delta(N,M) \le 0$ for any
  perfect matching $N$; in other words, $M$ is a popular assignment. Similarly, if there is 
  a feasible 
  solution~$\vec{\alpha}$ to \ref{LP2} such that $\sum_{u\in A\cup B}\alpha_u \leq k$, then $\Delta(N,M) \le k$ for any
  perfect matching~$N$, so $\mu(M) \leq k$.

  We will now show the converse. Let $M$ be a perfect matching with $\mu(M) = k$;
  there exists a dual optimal solution $\vec{\alpha}$ such that $\sum_{u\in A\cup B}\alpha_u = k$.
  Moreover, we can assume $\vec{\alpha}\in\mathbb{Z}^{2n}$ due to the total unimodularity of the constraint matrix. We can also assume $\alpha_b \leq 0$ for all $b\in B$, because feasibility and optimality are preserved if we decrease $\alpha_b$ for all $b\in B$ and increase $\alpha_a$ for all $a\in A$ by the same amount.
  
  Let us choose $\vec{\alpha}$ such that $\sum_{b\in B} \alpha_b$ is maximum subject to the above assumptions. We claim that if there is no $b\in B$ with $\alpha_b = -r$ for some $r\in \mathbb{N}$, then there is no $b \in B$ with $\alpha_b \leq -(r+1)$. Suppose the contrary, and let $B'=\{b\in B: \alpha_b<-r\}$ and $A'=\{M(b): b \in B'\}$. Since $\alpha_a+\alpha_b\geq 0$ for every $(a,b)\in M$, we have $\alpha_a\geq r+1$ for every $a \in A'$. Let $\vec{\alpha}'$ be obtained by decreasing~$\alpha_a$ by~1 for all $a\in A'$ and increasing $\alpha_b$ by 1 for all $b\in B'$. The dual feasibility constraints $\alpha'_a + \alpha'_b \ge \wt_M(a,b)$ can only be violated if $a \in A'$, 
  $b \notin B'$, and $\alpha_a + \alpha_b = \wt_M(a,b)$. But this would imply $\alpha_a\geq r+1$, $\alpha_b\geq -r+1$ (since $b \notin B'$ and $\alpha_b$ cannot be $-r$), and $\alpha_a + \alpha_b = \wt_M(a,b)\leq 1$, a contradiction. Thus, $\vec{\alpha}'$ is also an optimal dual solution, and $\sum_{b\in B} \alpha'_b> \sum_{b\in B} \alpha_b$, contradicting the choice of~$\vec{\alpha}$.
  
  We have shown that the values that $\vec{\alpha}$ takes on $B$ are consecutive integers, so we obtain that $\alpha_b \in \{0,-1,-2,$ $\ldots,-(n-1)\}$ for all $b \in B$. Since 
  $\alpha_a+\alpha_b\geq 0$ for every $(a,b)\in M$, we have $\alpha_a\geq 0$ for every $a \in A'$.
  
  To conclude the proof of Theorem~\ref{thm:certificate}, observe that $M$ is an optimal primal solution when $k=0$, so $\alpha_a+\alpha_b=0$ for every $(a,b)\in M$. This implies that $\alpha_a \in \{0,1,2, \ldots,n-1\}$ for all $a \in A$. As for  Theorem~\ref{thm:dual-cert-unpopular}, let $N$ be a perfect  matching that is optimal for~\ref{LP1}; then $\alpha_a+\alpha_b=\wt_M(a,b)\leq 1$ for every $(a,b)\in N$ by complementary slackness, and therefore $\alpha_a \leq n$ for all $a \in A$. \qed
 \end{proof}

Generalizing the notion that we already used for popular assignments, 
we define a \emph{dual certificate} for an assignment $M$ with unpopularity margin $k$
as a solution $\vec{\alpha}$ to \ref{LP2} with properties as described in Theorem~\ref{thm:dual-cert-unpopular}. 
So let us suppose that $M$ is an assignment with $\mu(M) = k$ and $\vec{\alpha}$ is a dual certificate for $M$. 
We define the \emph{load} of~$(a,b) \in M$ as $\alpha_a+\alpha_b$, 
and we will say that an edge~$(a,b) \in M$  is \emph{overloaded} (with respect to $\vec{\alpha}$), if
it has a positive load, that is, $\alpha_a+\alpha_b>0$.
Clearly, the total load of all edges in $M$ is at most $k$, moreover $\alpha_a + \alpha_b \ge \wt_M(a,b) = 0$ for every $(a,b) \in M$,  so there are at most $k$ overloaded edges in $M$.

Given a level function $\ell: B \rightarrow \mathbb{N}$ and an integer $\lambda \in \mathbb{N}$, we say that edge $(a,b)$ is 
\emph{$\lambda$-feasible}, if 
\begin{itemize}[leftmargin=24pt]
\item[(i)] $b$ has level at least $\ell^*(a)-\lambda+1$ where $\ell^*(a)=\max_{b \in \Nbr(a)} \ell(b)$, or
\item[(ii)] $b$ has level $\ell^*(a)-\lambda$ and $a$ has no neighbor in level $\ell^*(a)$ that she prefers to $b$, or
\item[(iii)] $b$ has level $\ell^*(a)-\lambda-1$, $a$ prefers $b$ to each of her neighbors in level $\ell^*(a)$, and moreover, $a$ prefers none of her neighbors in level $\ell^*(a)-1$ to $b$.
\end{itemize}
Note that $0$-feasible edges are exactly those contained in the graph $G_\ell$ induced by levels $\ell(\cdot)$, 
as defined in Section~\ref{sec:algo}. 
The following observation follows directly from the constraints of \ref{LP2}.

\begin{proposition}
\label{prop:load+feasibility}
Let $M$ be an assignment with $\mu(M) \le k$ and let $\vec{\alpha}$ be a dual certificate for $M$.
Consider the level function $\ell_{\vec{\alpha}}$ where the level of any $b \in B$ is $\ell_{\vec{\alpha}}(b)=-\alpha_b$. 
Then any edge $e \in M$ with load $\lambda$ is $\lambda$-feasible.
\end{proposition}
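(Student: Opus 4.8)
The plan is to unpack the single dual feasibility constraint of \ref{LP2} at the agent $a$ and read off each of the three clauses defining $\lambda$-feasibility directly from it. Throughout I write $\ell=\ell_{\vec\alpha}$, so that $\alpha_b=-\ell(b)$ for every $b\in B$, and I fix the edge $(a,b)\in M$ of load $\lambda=\alpha_a+\alpha_b$; note $\lambda\ge \wt_M(a,b)=0$, so $\lambda\in\mathbb{N}$, and $\alpha_a=\lambda+\ell(b)$. For every neighbor $b'$ of $a$, the dual constraint $\alpha_a+\alpha_{b'}\ge \wt_M(a,b')$ becomes, after substituting $\alpha_{b'}=-\ell(b')$ and $\alpha_a=\lambda+\ell(b)$, the \emph{master inequality}
\[
\ell(b')\ \le\ \ell(b)+\lambda-\wt_M(a,b').
\]
Everything then follows by specialising this to the three possible values $\wt_M(a,b')\in\{-1,0,1\}$.

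First I would bound $b$ from below relative to $\ell^*(a)$. Picking a neighbor $b''$ with $\ell(b'')=\ell^*(a)$ and using $\wt_M(a,b'')\ge -1$, the master inequality yields $\ell^*(a)\le \ell(b)+\lambda+1$, i.e.\ $\ell(b)\ge \ell^*(a)-\lambda-1$. Hence $b$ lies in exactly one of three bands, which I would treat in turn. If $\ell(b)\ge \ell^*(a)-\lambda+1$, clause~(i) holds outright.

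Next, if $\ell(b)=\ell^*(a)-\lambda$, I must verify clause~(ii): no neighbor at level $\ell^*(a)$ is preferred to $b$. Indeed, such a preferred neighbor $b'$ would have $\wt_M(a,b')=1$, and the master inequality would give $\ell(b')\le \ell(b)+\lambda-1=\ell^*(a)-1$, contradicting $\ell(b')=\ell^*(a)$. Finally, if $\ell(b)=\ell^*(a)-\lambda-1$, I would check the two parts of clause~(iii). For the first part, any neighbor $b'$ at level $\ell^*(a)$ must be dominated by $b$: a preferred $b'$ would force $\ell(b')\le \ell^*(a)-2$ and an indifferent $b'$ would force $\ell(b')\le \ell^*(a)-1$, both contradicting $\ell(b')=\ell^*(a)$, so $a$ prefers $b$ to $b'$. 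For the second part, any neighbor $b'$ at level $\ell^*(a)-1$ satisfies $\wt_M(a,b')\le 0$, since $\wt_M(a,b')=1$ would give $\ell(b')\le \ell^*(a)-2$, contradicting $\ell(b')=\ell^*(a)-1$; thus $a$ prefers none of her level-$(\ell^*(a)-1)$ neighbors to $b$.

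The entire argument is driven by the master inequality, so I expect no genuine obstacle beyond careful bookkeeping. The only delicate point is handling the two boundary bands~(ii) and~(iii), where one must verify the preference-domination clauses rather than mere level inequalities, using that a strict preference corresponds exactly to $\wt_M=\pm 1$ while indifference corresponds to $\wt_M=0$.
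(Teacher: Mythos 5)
Your proof is correct and takes exactly the approach the paper intends: the paper dispenses with this proposition by noting it ``follows directly from the constraints of \ref{LP2}'', and your master inequality $\ell(b')\le \ell(b)+\lambda-\wt_M(a,b')$ is precisely that constraint rewritten via $\alpha_a=\lambda+\ell(b)$ and $\alpha_{b'}=-\ell(b')$. Your case analysis over the three bands for $\ell(b)$, using $\wt_M\in\{-1,0,1\}$ and the exhaustiveness of prefers/dispreferred/indifferent under partial orders, is a complete and correct filling-in of that one-line argument.
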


Given a level function $\ell: B \rightarrow \mathbb{N}$ and a 
load capacity function $\lambda: E \rightarrow \mathbb{N}$, we define the graph 
\emph{induced by levels $\ell(\cdot)$ and load capacities $\lambda(\cdot)$} as $G_{\ell,\lambda}=(A \cup B, E_{\ell,\lambda})$ 
where an edge~$e$ is contained in~$E_{\ell,\lambda}$ if and only if $e$ is $\lambda(e)$-feasible.

We are now ready to describe an algorithm for finding an assignment $M$ with $\mu(M) \leq k$ if such 
an assignment exists. Algorithm~\ref{alg:popassignmargin} starts by guessing the load $\lambda(e)$ 
for each edge $e$ of $E$. Then we use a variant of the algorithm for Theorem~\ref{thm:edge-restrictions-algo}
that enables each 
edge $e$ with $\lambda(e) > 0$ to have positive load 
(so $G_{\ell,\lambda}$ will be used instead of $G_{\ell}$), 
and  treats the overloaded edges as forced edges. 
\begin{algorithm}[h]
\caption{Finding a popular assignment with unpopularity margin at most $k$}\label{alg:popassignmargin}
\begin{algorithmic}[1]
\ForAll{functions $\lambda:E \rightarrow \mathbb{N}$ with $\sum_{e \in E} \lambda(e) \leq k$}
\State Set $K= \{e \in E: \lambda(e)>0\}$ as the edges that we will overload.
\State Set $F=\{(a,b') \in E: (a,b) \in K, b' \neq b \}$ as the set of forbidden edges.
\ForAll{$b \in  B$} $\ell(b) = 0$. \EndFor 
\While{ $\ell(b) < n$ for all $b \in B$}
\State Construct the graph $G_{\ell,\lambda}$ and find a maximum matching $M$ in $G_{\ell,\lambda} - F$. 
\If{$M$ is a perfect matching} \textbf{return} $M$. \EndIf
\ForAll{$b \in B$ unmatched in $M$} $\ell(b) = \ell(b) + 1$. \EndFor
\EndWhile
\EndFor 
\State \textbf{return} ``$G$ has no assignment $M$ with $\mu(M) \leq k$''.  
\end{algorithmic}
\end{algorithm}

Observe that there are at most $m^k$ ways to choose the load capacity function $\lambda$, where $m=|E|$, 
by the bound  $\sum_{e \in E} \lambda(e) \leq k$.
Each iteration of the while-loop takes $O(m \sqrt{n})$ time and there are at most $m^k \cdot n^2$ such iterations. 
Thus the running time of the above algorithm is $O(m^{k+1} \cdot n^{5/2})$. 

\medskip

\begin{proof}[of Theorem~\ref{thm:unpopmargin-algo-xp}]
Suppose Algorithm~\ref{alg:popassignmargin} returns an assignment~$M$. 
Consider the values of $\lambda(\cdot)$ and $\ell(\cdot)$ at the moment the algorithm outputs~$M$. 
Set $\alpha_b=-\ell(b)$ for each object, and set~$\alpha_a=-\alpha_b+\lambda(a,b)$ for each edge $(a,b) \in M$. 
From the definition of $G_{\ell,\lambda}$ and $\lambda$-feasibility, 
such a vector $\vec{\alpha}$ fulfills all constraints in \ref{LP2}.
Hence, by $\sum_{u \in A \cup B} \alpha_u = \sum_{(a,b)\in M}\lambda(a,b) \leq k$, we get that~$\mu(M) \leq k$.

Now let us assume that $G$ admits an assignment $M^\star$ with $\mu(M^\star) \leq k$, and let $\vec{\alpha}$ be a dual certificate for~$M^\star$.
We need to show that our algorithm will produce an output. 
Consider those iterations where $\lambda(e)$ equals the load of each edge $e \in M^\star$;
we call this the \emph{significant branch} of the algorithm. 
We claim that $|\alpha_b| \geq \ell(b)$ holds throughout the run of the  significant branch.

To prove our claim, we use the same approach as in the proof of Theorem~\ref{thm:edge-restrictions-algo}, based on induction. 
Clearly, the claim holds at the beginning of the branch; we need to show that $\ell(b)$ is increased only if $|\alpha_b| > \ell(b)$.
So let $|\alpha_b| \geq \ell(b)$ for each $b \in B$ at the beginning of an iteration (steps~(6)--(8)), 
and consider an object $b$ whose value is increased at the end of the iteration.

First, assume that $b$ is incident to some overloaded edge $(a,b)$ of $M^\star$ with load $\lambda(a,b)$. 
Observe that $(a,b) \in K$, and all edges incident to~$a$ other than~$(a,b)$ belong to the set~$F$ of forbidden edges.
Therefore, if $a$ is not matched to~$b$ in a maximum matching in $G_{\ell,\lambda}-F$, 
then this must be because $(a,b)$ is not an edge in $G_{\ell,\lambda}$. That is,
the edge $(a,b) \in K$ is not $\lambda$-feasible with respect to $\ell(\cdot)$.
Recall that by Proposition~\ref{prop:load+feasibility}, $(a,b)$ is $\lambda(a,b)$-feasible with respect to the level function 
$\ell_{\vec{\alpha}}$ defined by $\ell_{\vec{\alpha}}(b')=|\alpha_{b'}|$ for each object~$b'$. 
Moreover, by our induction hypothesis, $ \ell(b') \leq |\alpha_{b'}| $ holds at the beginning of the iterations for each object~$b'$, that is,
the $\ell_{\vec{\alpha}}$-level of any object is at least its $\ell$-level. 
Consider the following hypothetical procedure:
starting from the level function~$\ell(\cdot)$ (in which $(a,b)$ is \emph{not} $\lambda(a,b)$-feasible) 
we obtain the level function~$\ell_{\vec{\alpha}}(\cdot)$ (in which $(a,b)$ \emph{is} $\lambda(a,b)$-feasible)
via raising the levels of some objects and never decreasing the level of any object. 
However, by the definition of $\lambda$-feasibility, if we only raise the levels of objects other than~$b$, then 
the edge~$(a,b)$ cannot become $\lambda(a,b)$-feasible. Hence, in this hypothetical procedure of obtaining~$\ell_{\vec{\alpha}}(\cdot)$ from~$\ell(\cdot)$, 
it is necessary to raise the level of~$b$. 
It follows that $\ell_{\vec{\alpha}}(b)=|\alpha_b|> \ell(b)$  must hold.

Second, assume that $b$ is not incident to any overloaded edge. 
Consider the path $P$ in $M \oplus M^\star$ on which $b$ lies.
Notice that, since the edges of $F$ are treated as forbidden edges, 
$M$ assigns each agent incident to some edge in $K$ either the object assigned to it by $M^\star$, 
or does not assign any object to it.
Therefore, no such agent lies on the path~$P$. 
Consequently, we can apply Lemma~\ref{lem:alpha-vs-ell-general} to the matchings $M \cap P \subseteq E_{\ell}$ and $M^\star \cap P \subseteq E_{\ell_{\vec \alpha}}$. This yields $|\alpha_b| = \ell_{\vec \alpha}(b) > \ell(b)$, proving our claim.

Finally, note that in the significant branch, no object may have $\ell$-level 
higher than $n-1$, as implied by the properties of $\vec{\alpha}$ stated in Theorem~\ref{thm:dual-cert-unpopular}. \qed
\end{proof}


\subsection{Hardness results}
\label{hardness:k-unpop-margin}
We now contrast Theorem~\ref{thm:unpopmargin-algo-xp}  with Theorem~\ref{thm:unpopmargin-hardness} which states that 
finding an assignment with minimum unpopularity margin is $\mathsf{NP}$-hard
and $\mathsf{W}_l[1]$-hard with respect to the parameter $k$, our bound on the unpopularity margin. 
A parameterized problem $Q$ is $\mathsf{W}_l[1]$-hard if there exists a linear FPT-reduction 
from the \myproblem{weighted antimonotone cnf 2-sat}
(or \myproblem{wcnf 2sat$^-$}) problem\footnote{
The input to this problem is a propositional formula $\varphi$ in conjunctive normal form with only negative literals and clauses of size two, 
together with an integer parameter~$k$;
the question is whether the formula can be satisfied by a variable assignment that sets exactly~$k$ variables to true.} to~$Q$~\cite{chen-w1-vs-eth-stronger}. 
Since \myproblem{wcnf 2sat$^-$} is a $\mathsf{W}[1]$-complete problem~\cite{downey-fellows-book}, $\mathsf{W}_l[1]$-hardness implies $\mathsf{W}[1]$-hardness.
While the $\mathsf{W}[1]$-hardness of \myproblem{$k$-unpopularity margin} shows that it cannot be solved 
in~$f(k)|I|^{O(1)}$ time for any computable function~$f$ unless $W[1]=FPT$  (where $|I|$ denotes the input length), 
the results of Chen et al.~\cite{chen-w1-vs-eth-stronger} enable us to obtain a tighter lower bound: 
the $\mathsf{W}_l[1]$-hardness of \myproblem{$k$-unpopularity margin}
implies that it cannot even be solved in $f(k) |I|^{o(k)}$ time for any computable function~$f$,
unless all problems in $\mathsf{SNP}$ are solvable in subexponential time -- a possibility widely considered unlikely. 
Therefore, Theorem~\ref{thm:unpopmargin-hardness} shows that our algorithm for Theorem~\ref{thm:unpopmargin-algo-xp} is essentially optimal.

Note that the unpopularity margin of any assignment $M$ can be computed efficiently by determining the optimal value of \ref{LP1},
so the \myproblem{$k$-unpopularity margin} problem is in $\mathsf{NP}$.
In the remainder of this section, 
we present a linear FPT-reduction from the \myproblem{Clique} problem to the
\myproblem{$k$-unpopularity margin} problem where agents' preferences are weak rankings. 
By the work of Chen et al.~\cite{chen-w1-vs-eth-stronger}, the
$\mathsf{W}_l[1]$-hardness of \myproblem{$k$-unpopularity margin} follows. 
Our reduction is a polynomial-time reduction as well, implying $\mathsf{NP}$-hardness for the case of weak rankings; note that this also follows easily  from the $\mathsf{NP}$-hardness of finding a matching (not necessarily an assignment) with minimum unpopularity margin~\cite{McC08}, using our reduction from the \myproblem{popular matching} problem to the \myproblem{popular assignment} problem.
Both the reduction presented in this section and the reduction from \myproblem{popular matching} in Section~\ref{app:pop-matching-reduction} use 
weak rankings.
However, we prove the $\mathsf{NP}$-hardness of \myproblem{$k$-unpopularity margin} for strict 
rankings by reducing the problem with 
weak rankings to the case with strict rankings 
in Lemma~\ref{lem:reduction-weak-to-strict}.\footnote{The reduction from weak to strict rankings increases the parameter $k$ by a non-constant term. Thus $\mathsf{W}_l[1]$-hardness does not carry over and the parameterized complexity of \myproblem{$k$-unpopularity margin} with strict rankings is still open; see the related open question in Section~\ref{sec:open-problems}.}

Instead of giving a direct reduction from \myproblem{Clique}, we will use an intermediate problem that we call \myproblem{CliqueHog}. 
Given a graph $H$, we define a \emph{cliquehog} of size $k$ as a pair $(C,F)$ such that $C \subseteq V(H)$ is a clique of size $k$, and
$F \subseteq E(H)$ is a set of edges that contains exactly two edges connecting $c$ to $V(H) \setminus C$, for each $c \in C$.
The input of the \myproblem{CliqueHog} problem is a graph $H$ and an integer $k$, 
and it asks whether $H$ contains a \emph{cliquehog} of size $k$.

\begin{lemma}
The \myproblem{CliqueHog} problem is $\mathsf{NP}$-hard and $\mathsf{W}_l[1]$-hard with parameter~$k$.
\end{lemma}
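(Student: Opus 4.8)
The plan is to show that the \myproblem{CliqueHog} problem is at least as hard as \myproblem{Clique} via a reduction that preserves the parameter linearly (up to a constant factor), so that both $\mathsf{NP}$-hardness and $\mathsf{W}_l[1]$-hardness transfer. Since \myproblem{Clique} is the canonical $\mathsf{W}[1]$-complete problem and is known to be $\mathsf{W}_l[1]$-hard under linear FPT-reductions (by the work of Chen et al.~\cite{chen-w1-vs-eth-stronger}), it suffices to exhibit a linear FPT-reduction (equivalently, a polynomial-time reduction in which the target parameter is bounded by a linear function of the source parameter) from \myproblem{Clique} to \myproblem{CliqueHog}.

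\textbf{The reduction.} Given an instance $(G, k)$ of \myproblem{Clique}, the natural first attempt is to take $H = G$ and ask for a cliquehog of size $k$. A cliquehog $(C, F)$ of size $k$ already guarantees that $C$ is a clique of size $k$ in $H$, so any cliquehog immediately yields a clique of the required size; this handles the forward direction trivially. The difficulty is the converse: given a clique $C$ of size $k$ in $G$, we must be able to augment it with an edge set $F$ containing \emph{exactly two} edges from each $c \in C$ to $V(H) \setminus C$. This is where a direct reduction can fail, because a vertex $c$ of the clique might have fewer than two neighbors outside $C$ (for instance, if $c$ has low degree or if most of its neighbors lie inside $C$). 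To repair this, I would modify $H$ by attaching, to \emph{every} vertex $v$ of $G$, two fresh pendant vertices (degree-one gadget vertices) adjacent only to $v$. Call the resulting graph $H$. These pendant vertices are never themselves part of a clique of size $\ge 2$ together with another clique vertex, so they cannot interfere with the clique structure: a clique of size $k$ in $H$ (for $k \ge 3$, or after a trivial check for small $k$) must consist entirely of original vertices of $G$. Meanwhile, each clique vertex $c$ now has at least two neighbors outside any $k$-clique, namely its two pendants, so the required set $F$ can always be constructed.

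\textbf{Correctness.} With $H$ defined this way, I would argue the equivalence as follows. If $G$ has a clique $C$ of size $k$, then $C$ is still a clique in $H$, and for each $c \in C$ we select exactly the two pendant edges at $c$ to form $F$; these are two edges from $c$ to $V(H) \setminus C$ (pendants lie outside $C$), so $(C, F)$ is a cliquehog of size $k$. Conversely, if $H$ has a cliquehog $(C, F)$ of size $k$, then $C$ is a clique of size $k$ in $H$; since pendant vertices have degree one they cannot belong to any clique of size at least two alongside a distinct neighbor, so (treating small $k$ separately) $C \subseteq V(G)$ and $C$ is a clique in $G$. The parameter is unchanged ($k \mapsto k$), so the reduction is linear in the parameter, and it clearly runs in polynomial time, adding only $2|V(G)|$ vertices and edges. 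Hence both hardness results transfer.

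\textbf{The main obstacle} I anticipate is not the construction itself but verifying that the pendant-vertex gadget cannot create spurious cliquehogs or destroy the clean correspondence — in particular, ruling out cliques that mix original and gadget vertices and confirming that the ``exactly two outside edges'' requirement is satisfiable for \emph{every} clique vertex simultaneously without the chosen edges colliding (they do not, since each vertex's pendants are private to it). One should also double-check the degenerate small-$k$ cases (e.g.\ $k \le 2$) where a clique may accidentally include a pendant vertex; these can be disposed of by a direct brute-force check, which costs only polynomial time and does not affect the parameterized reduction. Once these routine verifications are in place, the linear FPT-reduction from \myproblem{Clique}, combined with the $\mathsf{W}_l[1]$-hardness of \myproblem{Clique}, yields the claimed $\mathsf{NP}$-hardness and $\mathsf{W}_l[1]$-hardness of \myproblem{CliqueHog}.
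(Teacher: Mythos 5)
Your proposal is correct and takes essentially the same approach as the paper: the paper's proof reduces from \myproblem{Clique} by attaching exactly two pendant vertices to every vertex of the input graph, precisely your construction, with the parameter unchanged. Your extra care about small~$k$ and mixed cliques is fine but not strictly needed, since the requirement of two edges leaving each cliquehog vertex already excludes degree-one pendants from~$C$.
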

\begin{proof}
We reduce from the \myproblem{Clique} problem. Given a graph~$H=(V,E)$ and an integer~$k$, we construct a graph~$H'$ by adding $2|V|$ edges and $2|V|$ vertices to~$H$ as follows: 
for each vertex~$v$ of~$H$ we introduce two new vertices~$v'$ and~$v''$, 
together with the edges $(v,v')$ and $(v,v'')$. 
It is then easy to see that $H$ contains a clique of size $k$ if and only if $H'$ contains a cliquehog of size $k$. 
The reduction is a linear FPT-reduction with parameter $k$, as well as a polynomial-time reduction. \qed
\end{proof}

Let us now prove the first part of Theorem~\ref{thm:unpopmargin-hardness} by presenting a reduction from the \myproblem{CliqueHog} problem. 

\paragraph{\bf Construction.}
Let $H=(V,E)$ and $k$ be our input for \myproblem{CliqueHog}. 
We construct an instance~$G$ of the \myproblem{$k$-unpopularity margin} problem, 
with a set $A$ of agents and a set~$B$ of objects as follows. 
For each vertex $v \in V$, we define a vertex gadget $G_v$ containing agents $a_v^0$ and $a_v^1$ and objects~$b_v^0$ and~$b_v^1$. 
For each $e \in E$, we define an agent $a_e$ and an object $b_e$. 
Furthermore, we will use a set~$A_D$ of~\mbox{$|E|-\binom{k}{2}-2k$} dummy agents, and a set $B_D$ of $|E|-\binom{k}{2}-2k$ artificial objects.
We define \mbox{$A_V=\{a_v^i: v \in V, i \in \{0,1\}\}$}, $B_V=\{b_v^i: v \in V, i \in \{0,1\}\}$, 
$A_E=\{A_e: e \in E\}$, and $B_E=\{b_e: e \in E\}$. We set $A=A_V \cup A_E \cup A_D$ and $B=B_V \cup B_E \cup B_D$.
The preferences of the agents in $G$ are as follows (ties are simply indicated by including them as a set in the preference list):

\medskip 
\begin{tabular}{l@{\hspace{2pt}}ll}
$a_v^i:$ & $\{b_e : e \textrm{ is incident to $v$ in $H$} \} \succ b_v^0 \succ b_v^1 \qquad $ & for each $v \in V$ and $i \in \{0,1\}$; \\
$a_e:$ & $b_e \succ B_D \cup \{b_x^0,b_y^0\} \succ \{b_x^1,b_y^1\}$  & for each $e=(x,y) \in E$; \\
$a_d:$ & $B_E$ & for each $a_d \in A_D$. \\
\end{tabular}
\medskip

We finish the construction by setting the bound for the unpopularity margin of the desired assignment as $k$. 
Clearly, this is a polynomial-time reduction, and also a linear FPT-reduction with parameter $k$, so it remains to prove 
that $H$ contains a cliquehog of size $k$ if and only if $G$ admits an assignment $M$ with unpopularity margin at most $k$. 

\begin{lemma}
\label{lem:unpopmargin-reduction-dir1}
If $(C,F)$ is a cliquehog in  $H$ of size $k$, then $G$ admits an assignment~$M$ with unpopularity margin at most $k$. 
\end{lemma}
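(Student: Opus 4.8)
The plan is to exhibit, for the given cliquehog $(C,F)$, an explicit assignment $M$ together with a dual certificate $\vec\alpha$ whose objective value is exactly $k$; by Theorem~\ref{thm:dual-cert-unpopular} this certifies $\mu(M)\le k$. First I would partition the edges as $E=E_C\cup F\cup E_O$ (a disjoint union), where $E_C$ is the set of $\binom{k}{2}$ edges inside the clique $C$, $F$ is the cliquehog's set of $2k$ edges leaving $C$, and $E_O$ are the remaining $|E|-\binom{k}{2}-2k=|A_D|$ ordinary edges. The matching $M$ is then built gadget by gadget: each ordinary edge agent $a_e$ ($e\in E_O$) is matched to a distinct object of $B_D$ while its object $b_e$ is matched to a distinct dummy; each clique edge agent $a_e$ ($e\in E_C$) is matched to its top choice $b_e$; for each clique vertex $c$ with incident $F$-edges $f_1(c),f_2(c)$ I match $a_c^0\to b_{f_1(c)}$ and $a_c^1\to b_{f_2(c)}$ (top-choice edge objects), and in turn $a_{f_1(c)}\to b_c^0$, $a_{f_2(c)}\to b_c^1$; finally each non-clique vertex agent is matched within its own gadget, $a_v^0\to b_v^0$ and $a_v^1\to b_v^1$. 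A short count using $|A_D|=|B_D|=|E|-\binom{k}{2}-2k$ confirms that $M$ is an assignment.

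Next I would specify the dual certificate through a level function $\ell$ with $\alpha_b=-\ell(b)$, setting $\ell(b_e)=0$ for every edge object, $\ell(\beta)=1$ for every $\beta\in B_D$, $\ell(b_c^0)=\ell(b_c^1)=1$ for every clique vertex $c$, and $\ell(b_v^0)=1$, $\ell(b_v^1)=2$ for every non-clique vertex $v$. For the agent potentials I take $\alpha_a=\max_{b\in\Nbr(a)}\bigl(\wt_M(a,b)+\ell(b)\bigr)$, which makes $\vec\alpha$ feasible for~\ref{LP2} by construction and yields $\alpha_a\in\{0,1,2\}$ and $\alpha_b\in\{0,-1,-2\}$, well within the ranges required by Theorem~\ref{thm:dual-cert-unpopular}. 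Since $M$ is perfect, the objective rewrites as $\sum_{u}\alpha_u=\sum_{a\in A}\bigl(\alpha_a-\ell(M(a))\bigr)$, i.e.\ it equals the total \emph{load} of the matched edges, so it remains to show that this total load equals $k$.

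The heart of the proof is a case analysis of the load $\alpha_a-\ell(M(a))$ over the agent types. I expect every agent except one $F$-edge agent per clique vertex to have load $0$: a clique edge agent and a clique vertex agent sit on level-$0$ top choices while their lower-tier neighbours carry $\wt_M=-1$ against levels $1$ (so $-1+1=0$); an ordinary edge agent on a level-$1$ object of $B_D$ has its desire for the level-$0$ object $b_e$ (weight $+1$) exactly balanced; and the $F$-edge agent on $b_c^0$ as well as both non-clique vertex agents are balanced precisely by the choice $\ell(b_v^0)=1$, $\ell(b_v^1)=2$. The only surviving load comes from the $F$-edge agent $a_{f_2(c)}$ matched to $b_c^1$: sitting on a worst-tier object at level $1$ while coveting level-$1$ second-tier neighbours with weight $+1$ forces $\alpha_{a_{f_2(c)}}=2$, hence load $1$. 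As there is exactly one such agent per clique vertex, the total load is exactly $k$.

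The step I expect to be most delicate is the global consistency of the levels, namely checking that raising $b_v^1$ to level $2$ for non-clique vertices creates no spurious load elsewhere: every other neighbour of such an object is an incident edge agent for which $b_v^1$ is a worst-tier ($\wt_M=-1$) alternative, so its contribution $-1+2=1$ never exceeds the level of that agent's own matched object. Dually one should confirm that the $k$ units of load are genuinely unavoidable; this is most transparent on the primal side, where for each clique vertex $c$ the alternating cycle sending $a_{f_1(c)}\to b_{f_1(c)}$, $a_{f_2(c)}\to b_c^0$, $a_c^0\to b_{f_2(c)}$, $a_c^1\to b_c^1$ has gain exactly $+1$. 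For the lemma, however, only the upper bound $\mu(M)\le k$ delivered by the dual certificate is required, so I would conclude by invoking Theorem~\ref{thm:dual-cert-unpopular}.
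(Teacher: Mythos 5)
Your proposal is correct and follows essentially the same route as the paper: you construct exactly the same assignment $M$ (ordinary edge agents to $B_D$, clique edge agents to their own objects, the swap between clique vertex gadgets and their two $F$-edges, non-clique vertex gadgets matched internally), and your dual values coincide with the paper's certificate, the only cosmetic difference being that you derive the agent potentials via $\alpha_a=\max_{b\in\Nbr(a)}(\wt_M(a,b)+\ell(b))$ and organize the verification as a load count, whereas the paper lists all $\alpha$-values explicitly and leaves the feasibility check to the reader. Both arguments conclude identically by invoking the dual characterization (Theorem~\ref{thm:dual-cert-unpopular}) with total dual value $k$.
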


\begin{proof}
Let $f_c^0$ and $f_c^1$ denote the two edges of $F$ connecting $c$ to $V \setminus C$ in $H$ (in any fixed order), 
and we set $F^i=\{f_c^i: c \in C\}$ for $i \in \{0,1\}$. 

\begin{figure}[h]
\begin{center}
\includegraphics[width=\linewidth]{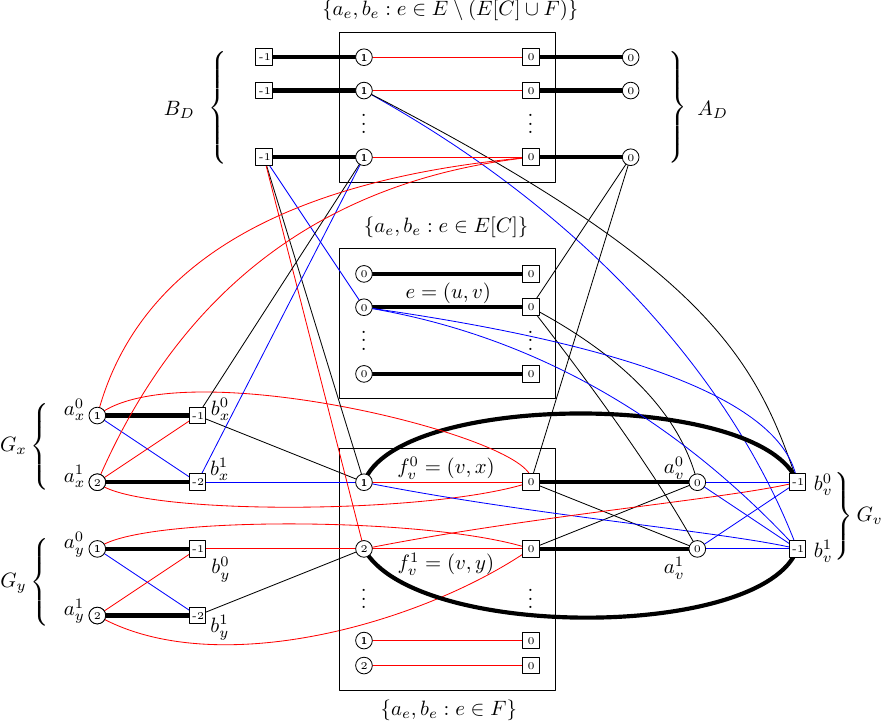}
\caption{Illustration for the construction in the proof of Theorem~\ref{thm:unpopmargin-hardness}.
The assignment $M$ defined in Lemma~\ref{lem:unpopmargin-reduction-dir1}
is indicated by bold lines.
Red, black, and blue edges have weight $+1$, $0$, and~$-1$, respectively, according to $\wt_M(\cdot)$. 
The values of the dual certificate $\vec{\alpha}$ for $M$ are indicated by numbers within the circle (square) 
corresponding to the given agent (object, respectively). 
For some edges $(a_e,b_e)$ in $G$ we indicate only the corresponding edge $e$ of $H$ (see $(u,v)$, $(v,x)$ and $(v,y)$).
The figure assumes $v \in C$ but $x,y \notin C$; 
note that the edge $(a_{(v,y)},b_v^1)$ is overloaded in~$M$.
}
\label{fig:margin-gadgets}
\end{center}
\end{figure}

We define an assignment $M$ as follows; see Fig.~\ref{fig:margin-gadgets} as an illustration.
First, let us assign the $|E|-\binom{k}{2}-2k$ objects in $\{b_e: e \notin E[C] \cup F\}$ to the dummy agents 
(where $E[C]$ denotes the set of those edges of $E$ whose both endpoints are in $C$).
Second, we assign the artificial objects to the $|E|-\binom{k}{2}-2k$ agents in $\{a_e: e \notin E[C] \cup F\}$.
To define $M$ on the remaining objects and agents, let

\medskip
\begin{tabular}{ll}
$M(a_v^i)=b_v^i$ & for each $v \in V \setminus C$ and $i \in \{0,1\}$; \\
$M(a_v^i)=b_{f_v^i}$ \qquad \qquad & for each $v \in C$ and $i \in \{0,1\}$; \\
$M(a_e)=b_e$ & for each $e \in E[C]$; \\
$M(a_f)=b_v^i$ & for each $f \in F$ where $f=f_v^i$.
\end{tabular}
\medskip

Observe that $M$ indeed assigns exactly one object to each agent. 
To show that $\mu(M) \leq k$, we define a dual certificate $\vec{\alpha}$  for $M$:

\medskip
\begin{tabular}{ll}
$\alpha_{a_d}=0$ & for each $a_d \in A_D$;  \\
 $\alpha_{b_d}=-1$ & for each $b_d \in B_D$;  \\
$\alpha_{a_v^0}=1$ \qquad \qquad & for each $v \in V \setminus C$; \\
 $\alpha_{b_v^0}=-1$ \qquad \qquad & for each $v \in V \setminus C$; \\
 $\alpha_{a_v^1}=2$ & for each $v \in V \setminus C$; \\
 $\alpha_{b_v^1}=-2$ & for each $v \in V \setminus C$; \\
 $\alpha_{a_v^i}=0$ & for each $v \in C$ and $i \in \{0,1\}$; \\
 $\alpha_{b_v^i}=-1$ & for each $v \in C$ and $i \in \{0,1\}$; \\
$\alpha_{a_e}=1$ & for each $e \in E \setminus (F^1 \cup E[C])$; \\
 $\alpha_{b_e}=0$ & for each $e \in E$;  \\
$\alpha_{a_e}=2$ & for each $e \in F^1$;  \\
$\alpha_{a_e}=0$ & for each $e \in E[C]$. 
\end{tabular}
\medskip

It is straightforward to check that $\vec{\alpha}$ is indeed a dual certificate, that is, it satisfies the constraints of \ref{LP2}.
Since $\sum_{u \in A \cup B} \alpha_u=k$, assignment $M$ indeed has unpopularity margin at most $k$. \qed
\end{proof}

We now show that any assignment in $G$ with unpopularity margin at most $k$ implies the existence of a cliquehog of size $k$ in $H$. We first establish a useful assumption that we will show is without loss of generality.
We say that an assignment $M$ has a \emph{nice structure}, if for each $e \in E$ one of the following cases holds true:
\begin{itemize}
  \item[$\bullet$] $M(a_e) = b_e$, or
  \item[$\bullet$] $M(a_e) = b^i_x$ and $M(a^i_x) = b_e$ for some endpoint $x$ of edge $e$ in $H$ and $i \in \{0, 1\}$, or
  \item[$\bullet$] $M(a_e) \in B_D$ and $M(b_e) \in A_D$.
\end{itemize}

\begin{lemma}
If there exists an assignment in $G$ with unpopularity margin at most $k$, then there exists an assignment in $G$ with unpopularity margin at most $k$ that has a nice structure.
\end{lemma}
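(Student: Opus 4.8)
The plan is to fix an arbitrary assignment $M$ with $\mu(M)\le k$ together with an integral dual certificate $\vec\alpha$ guaranteed by Theorem~\ref{thm:dual-cert-unpopular} (so $\alpha_a\ge 0$ for $a\in A$, $\alpha_b\le 0$ for $b\in B$, and $\sum_u\alpha_u\le k$), and then to reroute $M$ into a nice assignment while keeping $\vec\alpha$ as a valid certificate. The crucial observation I would isolate first is that, with $\vec\alpha$ held fixed, whether a perfect matching $M'$ admits $\vec\alpha$ as a certificate depends only on the \emph{set} of objects each agent may receive: define a subgraph $H=(A\cup B,E_H)$ by letting $(a,b)\in E_H$ exactly when setting $M'(a)=b$ keeps the constraint $\alpha_a+\alpha_{b'}\ge \wt_{M'}(a,b')$ satisfied for every $b'\in\Nbr(a)$ (note $\wt_{M'}(a,\cdot)$ uses only the pivot $M'(a)=b$). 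Then $M\subseteq E_H$, and \emph{every} perfect matching $M'$ in $H$ has $\vec\alpha$ feasible for~\ref{LP2} with respect to $\wt_{M'}$; since the objective of~\ref{LP2} equals $\sum_u\alpha_u\le k$ and $\mu(M')$ is the optimum of~\ref{LP2}, we obtain $\mu(M')\le k$ for free. Thus the whole lemma reduces to the purely combinatorial claim that \emph{$H$ contains a perfect matching with a nice structure}.

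Next I would read off the structure of $H$ from the preference tiers, exploiting that within any single indifference class the membership test for $E_H$ is identical for all objects of the class. This yields three clean facts: (i)~every dummy agent is adjacent in $H$ to all of $B_E$ (dummies are indifferent and all their weights are $0$), so dummies are fully interchangeable; (ii)~the objects of $B_D$ form one class for each $a_e$, so an edge-agent that $M$ sends into $B_D$ may be freely permuted among $B_D$ inside $H$; and (iii)~for each vertex-gadget agent $a_x^i$ the top tier $\{b_e:e\ni x\}$ is one class, so $a_x^i$ is adjacent in $H$ either to all of these edge-objects or to none, while its access to $b_x^0,b_x^1$ is governed by fixed inequalities on $\vec\alpha$. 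These facts make the operations used in the normalization legal inside $H$; in particular, any rotation of $M$ along an alternating cycle of $H$ again yields a perfect matching in $H$ with unpopularity margin at most $k$.

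With $H$ understood, I would finish by a measure-decreasing rotation argument. Starting from $M$, while some edge $e$ violates the nice condition I would locate an $H$-alternating cycle through the offending part of the gadget and rotate $M$ along it, strictly decreasing a potential such as the number of non-nice edges (with a secondary tie-breaker counting edge-objects held by vertex-gadget agents in the manner forbidden by the second case of the nice-structure definition). The two violation types drive two repairs: when $M(a_e)\in B_D$ but $b_e$ is held by some $a_x^i$ rather than a dummy, I would hand $b_e$ to a dummy---legal by~(i)---and reroute the freed vertex-agent inside its own gadget; when $M(a_e)=b_x^i$ but $M(a_x^i)\neq b_e$, I would either complete the paired swap of the second case or push $a_e$ back onto $b_e$ or into $B_D$, absorbing the displaced edge-objects through dummies via~(i)--(ii).

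The hardest part, and the technical core, is guaranteeing that such a measure-decreasing $H$-alternating cycle always exists for every non-nice configuration. The difficulty is that a displaced vertex-gadget agent $a_x^i$ may, inside $H$, be matchable \emph{only} to top-tier edge-objects at $x$ (its access to $b_x^0,b_x^1$ can be blocked by the fixed $\vec\alpha$-inequalities), so rehoming it can trigger a cascade through the remaining edge-objects and the second vertex-agent at $x$, and symmetrically at the other endpoint $y$ of $e$. I expect to resolve this by arguing that fact~(iii), together with the full flexibility of the dummies in~(i), lets every such cascade be closed into a cycle---using a dummy to absorb the single edge-object that leaves the vertex gadget---and by choosing the potential so that each rotation strictly decreases it, which forces termination at a nice perfect matching in $H$, i.e.\ the assignment required by the lemma.
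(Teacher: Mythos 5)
Your LP framework in the first two paragraphs is sound: fixing a dual certificate $\vec\alpha$ from Theorem~\ref{thm:dual-cert-unpopular}, the per-agent decomposition of the feasibility constraints of \ref{LP2} does imply that every perfect matching inside your pivot-feasibility graph $H$ has unpopularity margin at most $k$, and your facts (i)--(iii) about indifference classes are correct. But the entire content of the lemma is then concentrated in the claim that $H$ contains a \emph{nice} perfect matching, and that is exactly the step you do not prove. Your last two paragraphs describe a measure-decreasing rotation scheme, but you never exhibit the alternating cycles, never verify that the repairs stay inside $H$ (e.g., when a displaced vertex agent $a^i_x$ has access in $H$ only to top-tier edge objects, repairing it displaces another edge object, which displaces a dummy, and so on), and never define the potential precisely enough to show each rotation strictly decreases it. You yourself flag this as ``the hardest part'' that you ``expect to resolve,'' so what you have is a plan plus an unproven combinatorial core, not a proof.

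The gap is avoidable, because the lemma admits a one-shot construction needing no dual certificate at all --- this is what the paper does. The agents $a^0_x$ and $a^1_x$ have identical preference lists, so one may rename them wherever needed so that $M'(a^i_x) \in \{b^i_x\} \cup B_E$ for all $x$ and $i$. Then define $M$ directly: keep $M(a_e) = M'(a_e)$ for every edge agent; whenever $M'(b^i_x) = a_e$ for some $e \in E$, set $M(a^i_x) = b_e$ (legal, since such an $e$ is incident to $x$), and otherwise set $M(a^i_x) = b^i_x$; finally hand the leftover objects of $B_E$ to the dummies in $A_D$. By construction $M$ is nice, and every agent either keeps her $M'$-object or exchanges it for another of her neighbors in $B_E$; since every agent is indifferent among all her neighbors in $B_E$, the weight functions $\wt_M$ and $\wt_{M'}$ coincide, hence $\mu(M) = \mu(M')$ exactly --- no certificate, no rotations, no termination argument. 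If you insisted on completing your approach, you would in effect have to embed this same construction into $H$ (it does lie in $H$, since identical weight functions preserve feasibility of $\vec\alpha$), at which point the LP scaffolding is dead weight.
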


\begin{proof}
  Let $M'$ be any assignment in $G$.
  By switching the names of the identical agents~$a^0_x$ and~$a^1_x$ wherever necessary, we can assume that $M'(a^i_x) \in \{b^i_x\} \cup B_E$ for all $x \in V$ and $i \in \{0, 1\}$.
   We construct a new assignment~$M$ as follows: For each $e \in E$, let $M(a_e) = M'(a_e)$. For each $x \in V$ and each $i \in \{0, 1\}$, if $M'(b^i_x) = a_e$ for some $e \in E$, then let $M(a^i_x) = b_e$; otherwise let $M(a^i_x) = b^i_x$. Assign the unmatched objects in $B_E$ arbitrarily to the dummy agents in $A_D$. Note that for every agent~$a \in A$, we have $M(a) = M'(a)$ or $M(a)$ and $M'(a)$ are both in $B_E$. 
   Since every agent is indifferent between all her neighbors in $B_E$, the assignment $M$ has the same unpopularity margin as $M'$. \qed
\end{proof}

In what follows, let $M$ be an assignment in $G$ with unpopularity margin at most $k$ and a nice structure, and let $\vec{\alpha}$ be a dual certificate for $M$. We will construct a cliquehog of size $k$ in $H$. 

We define a partition $(E_0,E_1,E_2)$ of those edges $e$ in $E$ for which $M(a_e) \notin B_D$. 
For any edge~$e \in E$, we decide which set of the partition $e$ belongs to using the following procedure.
\begin{enumerate}
\item If $M(a_e) = b_e$ and $\alpha_{a_e}+\alpha_{b_e} > 0$, then we put $e$ into $E_0$.
\item If $M(a_e) = b_e$ and $\alpha_{a_e}+\alpha_{b_e}=0$, then we put $e$ into $E_1$. 
\item If $M(a_e)$ is an object in $G_x$ or $G_y$ where $e=(x,y)$, then we put $e$ into $E_2$.
\end{enumerate}
We also define $S$ as the set of those vertices $v \in V$ for which the vertex gadget $G_v$ contains 
an agent or object that is incident to an overloaded edge of $M$. 
Note that since no agent in a vertex gadget is connected to an object in another vertex gadget, 
and moreover, the overloaded edges of~$M$ corresponding to edges in~$E_0$ are not incident to any vertex gadget, 
we know that 
\begin{equation}
\label{eq:overloaded-edges}
|S|+ |E_0|\leq \mu(M) \leq k.
\end{equation} 

We will show  that $(S, E_2)$ is a cliquehog of size $k$. To do so, we first establish two helpful lemmas. The first shows that $S$ contains all nodes $x \in V$ for which at least one of the objects $b^0_x$ and $b^1_x$ is matched to an edge agent. This implies that every edge in $E_2$ is incident to a node in $S$.
The second lemma shows that all endpoints of edges in $E_1$ are contained in $S$.

\begin{lemma}
\label{lem:E2}
Let $x \in V$. If $M(b^0_x) \in A_E$ or $M(b^1_x) \in A_E$, then $x \in S$.
\end{lemma}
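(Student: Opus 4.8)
The plan is to prove Lemma~\ref{lem:E2} by analyzing the weight structure around the object $b_x^i$ that is matched to an edge agent, and using the dual certificate to force an overloaded edge into the gadget $G_x$. Suppose $M(b_x^i) = a_e$ for some $e \in E$ and some $i \in \{0,1\}$; by the nice structure, this means $M(a_x^i) = b_e$ and $e$ is incident to $x$ in $H$. First I would record the relevant weights under $\wt_M$: since $a_x^i$ is now matched to the edge-object $b_e$ (a first-choice object for $a_x^i$), the agent $a_x^i$ strictly prefers $b_e$ to both $b_x^0$ and $b_x^1$, so $\wt_M(a_x^i, b_x^0) = \wt_M(a_x^i, b_x^1) = -1$ is not directly useful; instead the key edge is $(a_x^{1-i}, b_e)$ or the edge from $a_e$ back into the gadget.

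The core idea is to exhibit an agent or object in $G_x$ incident to an overloaded edge, i.e.\ an edge $(a,b) \in M$ with $\alpha_a + \alpha_b > 0$. I would look at the \emph{other} agent $a_x^{1-i}$ in the gadget. Since $b_e$ is a first-choice object for $a_x^{1-i}$ as well (both $a_x^0$ and $a_x^1$ have identical preference lists), we have $\wt_M(a_x^{1-i}, b_e) = +1$ by the definition of $\wt_M$, because $a_x^{1-i}$ prefers $b_e$ to whatever she is matched to in one of the gadget objects. Dual feasibility then gives $\alpha_{a_x^{1-i}} + \alpha_{b_e} \ge 1$. On the other hand, complementary slackness on the matching edge $(a_x^i, b_e) \in M$ gives $\alpha_{a_x^i} + \alpha_{b_e} = \wt_M(a_x^i, b_e) = +1$, whence $\alpha_{b_e} = 1 - \alpha_{a_x^i}$. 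Combining these and tracking how the $\alpha$-values propagate through the matched edges within $G_x$ should force the matching edge incident to $a_x^{1-i}$ (which lands on some gadget object $b_x^j$) to carry positive load, placing $x$ into $S$.

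Concretely, I would argue as follows. The agent $a_x^{1-i}$ is matched in $M$ to some object; by the nice structure her partner is either $b_x^{1-i}$ (if $M(b_x^{1-i}) \notin A_E$) or some edge-object $b_{e'}$. In the former case, the matching edge $(a_x^{1-i}, b_x^{1-i})$ satisfies $\alpha_{a_x^{1-i}} + \alpha_{b_x^{1-i}} = 0$ by complementary slackness, and feasibility on $(a_x^{1-i}, b_e)$ combined with the value of $\alpha_{b_e}$ forces $\alpha_{a_x^{1-i}} \ge \alpha_{a_x^i}$; chasing the inequalities against the feasibility constraints on the edges $(a_x^i, b_x^0)$ and $(a_x^i, b_x^1)$ (which carry weight $-1$) yields a strictly positive load on some gadget edge. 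In the latter case, both gadget objects of $G_x$ are matched to edge-agents, and a symmetric argument applies to either copy. The conclusion in every case is that some matched edge touching $G_x$ is overloaded, so $x \in S$ by definition of $S$.

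The main obstacle I anticipate is the careful bookkeeping of the $\alpha$-values: the two agents $a_x^0, a_x^1$ are indistinguishable in preference but may be matched very differently, and I must handle the possibility that both gadget objects are siphoned off to edge-agents. The delicate point is ensuring that a positive load cannot ``escape'' to an edge outside the gadget — here the fact that the overloaded edges for $E_0$ are, by construction, the pure $(a_e,b_e)$ edges not touching any gadget (as noted before \eqref{eq:overloaded-edges}) guarantees that a positive load detected inside $G_x$ genuinely contributes to $S$ and is not double-counted. I expect the proof to reduce to a short case distinction on the partner of $a_x^{1-i}$, with each case closed by a one-line inequality from dual feasibility and complementary slackness.
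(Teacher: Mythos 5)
Your high-level target is the same as the paper's: exhibit an overloaded matched edge incident to the gadget $G_x$ by combining constraints of the dual certificate. But two steps in your execution are wrong, and they are not cosmetic. First, you claim $\wt_M(a_x^i,b_e)=+1$ for the matching edge $(a_x^i,b_e)\in M$. By definition $\wt_M$ is computed relative to $M$ itself, so $\wt_M(a,M(a))=0$ for every matched agent $a$; the value $+1$ is simply incorrect, and the identity $\alpha_{b_e}=1-\alpha_{a_x^i}$ that you propagate from it is false. Second, and more fundamentally, you invoke complementary slackness between $M$ and $\vec{\alpha}$ as if it were automatic. It is not: here $\vec{\alpha}$ is a dual certificate for $\mu(M)\le k$, not an optimal dual paired with an optimal primal; the slack $\alpha_a+\alpha_b$ on a matched edge is precisely its load, and the overloaded edges are exactly the matched edges where this slack is positive (if tightness held on all matched edges, then $\sum_{u}\alpha_u=0$ and $M$ would be popular). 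So when you assert tightness on $(a_x^i,b_e)$ and on $(a_x^{1-i},b_x^{1-i})$, you are assuming that these edges are \emph{not} overloaded --- the negation of part of what must be proved --- and then trying to conclude that some gadget edge carries positive load. From those premises the conclusion cannot follow; at best one could assume tightness of \emph{all} matched edges touching $G_x$ for the sake of contradiction, which you never announce, and your ``chasing the inequalities'' step is in any case not carried out (nor is the case where both $b_x^0$ and $b_x^1$ are matched to edge agents, which you wave off as symmetric).

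The paper's proof needs no tightness at all. In the case $M(b^0_x)=a_e$ (so $M(a^0_x)=b_e$ by the nice structure), it sums three \emph{feasibility} inequalities on non-matching edges: $\alpha_{a^0_x}+\alpha_{b^1_x}\ge -1$, $\alpha_{M(b^1_x)}+\alpha_{b^0_x}\ge 1$ (the agent matched to $b^1_x$, whether $a^1_x$ or an edge agent, prefers $b^0_x$ to $b^1_x$), and $\alpha_{a_e}+\alpha_{b_e}\ge 1$. The sum of the left-hand sides regroups exactly into the loads of the three matched edges $(a^0_x,b_e)$, $(M(b^1_x),b^1_x)$, $(a_e,b^0_x)$, all incident to $G_x$, while the sum of the right-hand sides is $1$; hence at least one of these loads is positive and $x\in S$. (The case $M(b^0_x)\notin A_E$ and $M(b^1_x)=a_e$ is analogous, using $\alpha_{a_e}+\alpha_{b^0_x}\ge 1$, $\alpha_{a^0_x}+\alpha_{b_e}\ge 1$, and $\alpha_{a^1_x}+\alpha_{b^1_x}\ge -1$.) If you repair your argument by recasting it as a contradiction --- assume every matched edge touching $G_x$ has load $0$ --- it collapses to exactly this computation; but as written, the proposal is invalid.
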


\begin{proof}
We distinguish two cases. First, suppose that $M(b_x^0) = a_e$ for some $e \in E$. Recall that this implies $M(a^0_x) = b_e$, because $M$ has a nice structure.
From the preferences of agents and the feasibility of~$\vec{\alpha}$, we get the three inequalities $\alpha_{a^0_x} + \alpha_{b^1_x} \geq -1$ and
$\alpha_{M(b^1_x)} + \alpha_{b^0_x} \geq 1$ and
$\alpha_{a_e} + \alpha_{b_e} \geq 1$. Adding these inequalities, we observe that at least one of the 
sums $\alpha_{a^0_x} + \alpha_{b_e}$ or~$\alpha_{b^1_x} + \alpha_{M(b^1_x)}$ or~$\alpha_{a_e} + \alpha_{b^0_x}$ must be positive, and thus the corresponding edge in $M$ must be overloaded.

Second, suppose that $M(b^0_x) \notin A_E$ but $M(b^1_x) = a_e$ for some $e \in E$. 
Recall that this implies  $M(a^0_x) = b^0_x$ and $M(a^1_x) = b_e$, because $M$ has a nice structure.
Again from the preferences of  agents and the feasibility of $\vec{\alpha}$, we obtain the three inequalities $\alpha_{a_e} + \alpha_{b_x^0} \geq 1$ and $\alpha_{a_x^0} + \alpha_{b_e} \geq 1$ and $\alpha_{a_x^1} + \alpha_{b_x^1} \geq -1$.
Adding these inequalities, we see that at least one of the 
sums $\alpha_{a_e} + \alpha_{b_x^1}$ or $\alpha_{a_x^0} + \alpha_{b_x^0}$ or $\alpha_{a_x^1} + \alpha_{b_e}$ must be positive, and so the corresponding edge in $M$ must be overloaded. \qed
\end{proof}

\begin{lemma}
\label{cor:E1}
If $e = (x,y) \in E_1$, then $\{x,y\} \subseteq S$.
\end{lemma}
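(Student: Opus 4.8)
The plan is to prove the statement by contradiction: assuming $e=(x,y)\in E_1$ but $x\notin S$, I would derive a violation of dual feasibility of $\vec{\alpha}$. By the symmetry of the gadget construction in the two endpoints of $e$, it suffices to show $x\in S$; the argument for $y\in S$ is then identical after exchanging the roles of $x$ and $y$ (here one uses that $a_y^0,a_y^1$ are adjacent to $b_e$ since $e$ is incident to $y$, and that $a_e$ is adjacent to $b_y^0,b_y^1$).

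First I would pin down how $M$ behaves on the gadget $G_x$. Since $x\notin S$, the contrapositive of Lemma~\ref{lem:E2} gives $M(b_x^0)\notin A_E$ and $M(b_x^1)\notin A_E$. As the only neighbors of $b_x^0$ and $b_x^1$ outside $A_E$ are $a_x^0$ and $a_x^1$, this forces $\{M(b_x^0),M(b_x^1)\}=\{a_x^0,a_x^1\}$; using the nice structure, i.e. $M(a_x^i)\in\{b_x^i\}\cup B_E$, I may then assume $M(a_x^0)=b_x^0$ and $M(a_x^1)=b_x^1$. Crucially, because $x\notin S$, neither of the matching edges $(a_x^0,b_x^0)$, $(a_x^1,b_x^1)$ is overloaded, so each has load exactly $0$ (their load is $\ge \wt_M=0$ by feasibility and $\le 0$ since not overloaded). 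The membership $e\in E_1$ supplies the third tight edge: $M(a_e)=b_e$ with $\alpha_{a_e}+\alpha_{b_e}=0$.

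The heart of the argument is to sum the three dual feasibility constraints for the non-matching edges $(a_x^0,b_e)$, $(a_x^1,b_x^0)$, and $(a_e,b_x^1)$, which together with the three matching edges form an $M$-alternating cycle on the six vertices $\{a_x^0,a_x^1,a_e,b_x^0,b_x^1,b_e\}$. Checking weights against $M$: $a_x^0$ prefers the first-choice object $b_e$ to $b_x^0=M(a_x^0)$, so $\wt_M(a_x^0,b_e)=1$; likewise $\wt_M(a_x^1,b_x^0)=1$ since $b_x^0\succ_{a_x^1}b_x^1=M(a_x^1)$; and $\wt_M(a_e,b_x^1)=-1$ since $a_e$ prefers $b_e=M(a_e)$ to $b_x^1$. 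Hence the three feasibility constraints sum to $\alpha_{a_x^0}+\alpha_{a_x^1}+\alpha_{a_e}+\alpha_{b_x^0}+\alpha_{b_x^1}+\alpha_{b_e}\ge 1+1-1=1$. On the other hand, summing the three load-$0$ matching edges $(a_x^0,b_x^0),(a_x^1,b_x^1),(a_e,b_e)$, which cover the same six vertices, gives $\alpha_{a_x^0}+\alpha_{a_x^1}+\alpha_{a_e}+\alpha_{b_x^0}+\alpha_{b_x^1}+\alpha_{b_e}=0$. This yields $0\ge 1$, the desired contradiction, so $x\in S$.

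I expect the main obstacle to be the second step, namely nailing down the exact behaviour of $M$ on $G_x$: one must combine Lemma~\ref{lem:E2}, the adjacency structure of the gadget objects, and the nice-structure normalization to conclude that $G_x$ is matched internally via two load-$0$ edges. Once this is in place the contradiction is a short weight computation, and the only remaining point requiring care is to verify that the three non-matching edges $(a_x^0,b_e)$, $(a_x^1,b_x^0)$, $(a_e,b_x^1)$ all exist in $G$ (they do, since $e$ is incident to $x$ so $a_x^0,a_x^1$ are adjacent to $b_e$, and $x$ is an endpoint of $e$ so $a_e$ is adjacent to $b_x^1$) and that their $\wt_M$-values are exactly $+1,+1,-1$ as claimed.
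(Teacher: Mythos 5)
Your proof is correct and follows essentially the same route as the paper: the same three dual feasibility inequalities for the edges $(a_x^0,b_e)$, $(a_x^1,b_x^0)$, $(a_e,b_x^1)$, regrouped against the three matching edges, with Lemma~\ref{lem:E2} and the nice structure used identically to pin down $M$ on $G_x$. The only difference is cosmetic: you argue by contradiction from $x\notin S$ (forcing both gadget edges to have load $0$ and obtaining $0\ge 1$), whereas the paper concludes directly that one of the two gadget edges must be overloaded.
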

\begin{proof}
We only show that $x \in S$. The proof for $y \in S$ follows by symmetry.
If $M(b^0_x) \in A_E$ or~$M(b^1_x) \in A_E$, Lemma~\ref{lem:E2} implies that $x \in S$. So we can assume that $M(b^0_x) = a^0_x$ and $M(b^1_x) = a^1_x$ by the nice structure of $M$.
By the preferences of agents and feasibility of $\vec{\alpha}$, we obtain the three inequalities
$\alpha_{a^0_x} + \alpha_{b_e} \geq 1$ and $\alpha_{a^1_x} + \alpha_{b^0_x} \geq 1$ and $\alpha_{a_e} + \alpha_{b^1_x} \geq -1$. Adding up these inequalities, we observe that at least one of the 
sums $\alpha_{a^0_x} + \alpha_{b^0_x}$ or $\alpha_{a^1_x} + \alpha_{b^1_x}$ or $\alpha_{a_e} + \alpha_{b_e}$ must be positive. By~$e \in E_1$, the third expression is equal to $0$, and so one of the former two has to be positive. This implies that at least one of the two corresponding edges of $M$ in $G_x$ is overloaded. \qed
\end{proof}

We are now ready to prove Lemma~\ref{lem:unpopmargin-reduction-dir2}, which
together with Lemma~\ref{lem:unpopmargin-reduction-dir1} proves Theorem~\ref{thm:unpopmargin-hardness}.

\begin{lemma}
\label{lem:unpopmargin-reduction-dir2}
If the constructed instance~$G$ admits an assignment $M$ with unpopularity margin at most $k$, then 
$H$ contains a cliquehog of size $k$.
\end{lemma}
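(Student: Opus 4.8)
The plan is to prove that $(S, E_2)$ is a cliquehog of size $k$ by a counting argument that forces every inequality established so far to hold with equality. I assume throughout, as justified earlier, that $M$ has a nice structure and that $\vec{\alpha}$ is a dual certificate for $M$.

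First I would pin down the matching on the dummy part to obtain an exact count. The artificial objects in $B_D$ are adjacent only to the edge-agents $a_e$, so in any assignment all $|E| - \binom{k}{2} - 2k$ objects of $B_D$ must be matched to (distinct) edge-agents. Consequently exactly $\binom{k}{2} + 2k$ edge-agents are matched to non-artificial objects, and these are precisely the edges partitioned by $(E_0, E_1, E_2)$. This yields
\begin{equation*}
|E_0| + |E_1| + |E_2| = \binom{k}{2} + 2k .
\end{equation*}

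Next I would collect three upper bounds. By Lemma~\ref{cor:E1} every edge of $E_1$ has both endpoints in $S$, so $E_1 \subseteq E[S]$ and $|E_1| \le \binom{|S|}{2}$. For $E_2$, the nice structure shows that every edge $e=(x,y) \in E_2$ induces a swap in which $a_e$ occupies one vertex-object $b_x^i$ of an incident gadget while $a_x^i$ takes $b_e$; since then $M(b_x^i) = a_e \in A_E$, Lemma~\ref{lem:E2} gives $x \in S$. As each gadget $G_v$ offers only the two objects $b_v^0, b_v^1$, at most two edges of $E_2$ can swap into any gadget, and only gadgets with $v \in S$ are used, whence $|E_2| \le 2|S|$. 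Combining these with $|S| + |E_0| \le k$ from~\eqref{eq:overloaded-edges} and substituting into the count above gives
\begin{equation*}
\binom{k}{2} + 2k \;\le\; |E_0| + \binom{|S|}{2} + 2|S| \;\le\; (k - |S|) + \binom{|S|}{2} + 2|S| \;=\; \binom{|S|+1}{2} + k .
\end{equation*}
Subtracting $k$ and using $\binom{k}{2} + k = \binom{k+1}{2}$, this becomes $\binom{k+1}{2} \le \binom{|S|+1}{2}$, which forces $|S| \ge k$; together with $|S| \le |S| + |E_0| \le k$ it yields $|S| = k$, and hence $|E_0| = 0$, $|E_1| = \binom{k}{2}$, and $|E_2| = 2k$.

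Finally I would read off the cliquehog structure from these equalities. From $|E_1| = \binom{k}{2} = \binom{|S|}{2}$ together with $E_1 \subseteq E[S]$ we conclude that $S$ spans all $\binom{|S|}{2}$ possible edges, so $S$ is a clique of size $k$ and $E_1 = E[S]$. Setting $F = E_2$, the equality $|E_2| = 2|S|$ with the per-gadget bound of at most two swaps forces exactly two edges of $E_2$ incident to each $c \in S$ (one swapping with $b_c^0$, one with $b_c^1$, hence distinct). Since $(E_0, E_1, E_2)$ is a partition and every edge inside $S$ already lies in $E_1 = E[S]$, no edge of $E_2$ can have both endpoints in $S$; thus each of these two edges joins $c$ to $V \setminus S$. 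Therefore $(S, E_2)$ meets the definition of a cliquehog of size $k$. The main obstacle I anticipate is the bookkeeping that turns the three loose bounds into a forced equality chain: one must justify the exact count $|E_0| + |E_1| + |E_2| = \binom{k}{2} + 2k$ (which hinges on the rigidity imposed by $B_D$) and arrange the binomial inequality $\binom{k+1}{2} \le \binom{|S|+1}{2}$ so that it collapses all slack simultaneously; once equality holds everywhere, the clique and hog conditions follow from the disjointness of $E_1$ and $E_2$ and the two-slots-per-gadget constraint.
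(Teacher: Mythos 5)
Your proof is correct and follows essentially the same route as the paper's: the same exact count $|E_0|+|E_1|+|E_2|=|E|-|B_D|=\binom{k}{2}+2k$, the same three bounds from Lemma~\ref{cor:E1}, Lemma~\ref{lem:E2}, and Inequality~(\ref{eq:overloaded-edges}), and the same forcing of all inequalities to equalities to conclude $|S|=k$, $E_1=E[S]$, and $|E_2|=2k$. Your writeup merely makes explicit a few steps the paper leaves terse (the role of the nice structure in the exact count, and the per-gadget slot argument giving exactly two $E_2$-edges at each vertex of $S$), but the argument is the same.
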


\begin{proof}
By Lemma~\ref{cor:E1} we know that any edge $e \in E_1$ must have both of its endpoints in $S$,
yielding 
\begin{equation}
\label{eq:E1+E5}
|E_1| \leq \binom{|S|}{2}. 
\end{equation}
By Lemma~\ref{lem:E2}, each edge $e \in E_2$ must have its agent~$a_e$ assigned to an object in a vertex gadget~$G_v$ for some~$v \in S$. By~Inequality~(\ref{eq:overloaded-edges}) there are at most $2|S|$ such objects, so we obtain
\begin{equation}
\label{eq:E2}
|E_2| \leq 2|S|.
\end{equation}

Recall that by construction of $G$ and the definition of the partition $(E_0,E_1,E_2)$, we know 
\begin{align*}
\binom{k}{2}+2k = |E|-|B_D|= |E_1| + |E_2| + |E_0| \leq \binom{|S|}{2} + |S| + k
\end{align*}
where the inequality follows from combining Inequalities~(\ref{eq:overloaded-edges}), (\ref{eq:E1+E5}), and (\ref{eq:E2}).
Hence, by $|S| \leq k$ we obtain $|S|=k$. Moreover, every inequality we used must hold with equality. 
In particular,  this implies $|E_1|=\binom{k}{2}$, which can only happen if there are $\binom{k}{2}$ edges in $H$ 
(namely, those in $E_1$) with both of their endpoints in $S$.
In other words, $S$ forms a clique in $H$. 
Additionally, (\ref{eq:E2}) must also hold with equality, so $|E_2|=2|S|=2k$. 
Since every edge in $E_2$ is incident to a vertex of $S$ by Lemma~\ref{lem:E2}, but 
is not contained in $E[S]$ 
(because $E[S]=E_1 \subseteq E \setminus E_2$), and any $x \in S$ 
is incident to at most two edges of $E_2$ 
(by the definition of $E_2$), we can conclude that $(S,E_2)$ is a  cliquehog of size~$k$. \qed
\end{proof}

In order to prove the second part of Theorem~\ref{thm:unpopmargin-hardness}, we show in Lemma~\ref{lem:reduction-weak-to-strict} a 
reduction from weak rankings to strict for the \texorpdfstring{$k$}{k}-\myproblem{unpopularity margin} problem.
The proof of Lemma~\ref{lem:reduction-weak-to-strict} is given in the appendix.

\begin{restatable}{lemma}{lemweaktostrict}
\label{lem:reduction-weak-to-strict}
Let $G$ be an instance with weak rankings and $n$ agents. Then we can compute in polynomial time an instance $G'$ with strict rankings and an integer $q \in \mathcal{O}(n)$ such that $G$ admits an assignment with unpopularity margin $k$ if and only if $G'$ admits an assignment with unpopularity margin $k+q$ for any $k \in [n]$. 
\end{restatable}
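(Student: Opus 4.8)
The plan is to obtain $G'$ from $G$ in two stages. First, replace every weak ranking $\succ_a$ by an arbitrary linear extension $\succ_a'$ (a strict ranking refining $\succ_a$). Refining alone can only help a rival assignment: for a fixed assignment $M$ and a rival $N$, the weight $\wt_M(a,N(a))$ changes only when $N(a)\sim_a M(a)$ but $N(a)\neq M(a)$, and then it changes by at most $+1$; since $N$ matches each agent to a single object, for the refined-but-gadget-free instance we get $\mu(M')\le\mu_G(M)+q$, where I define $q$ to be the number of agents possessing a nontrivial tie, so $q\le n$ and $q\in\mathcal{O}(n)$. The second stage attaches to each such tied agent a constant-size \emph{tie gadget}, tuned so that each tied agent contributes \emph{exactly} one extra unit to the margin, independently of $M$.

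The gadget for a tied agent $a$ adds a private swap structure: a new top-choice object $g_a$ placed strictly above all of $a$'s real neighbours, together with a bounded number of auxiliary agents and objects arranged so that there is an alternating cycle through $a$ of net value $+1$ that is available in \emph{every} assignment. Because this guaranteed unit gain passes through $a$, a rival can either realise it \emph{or} exploit a broken tie at $a$, but not both; hence the extra gain attributable to $a$ is always exactly one. The correspondence $M\mapsto M'$ matches $a$ to $M(a)$ and fills each gadget by its canonical internal matching, yielding a bijection between assignments of $G$ and of $G'$ up to indifferent choices inside the gadgets.

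For correctness I would work through the dual-certificate characterisation of Theorem~\ref{thm:dual-cert-unpopular}. For the forward direction, starting from a certificate $\vec\alpha$ for $M$ in $G$ with $\sum_u\alpha_u\le k$, I construct a certificate $\vec\alpha'$ for $M'$ in $G'$ by keeping $\alpha$ on the real nodes, increasing $\alpha_a$ by one for each tied agent $a$ (this pays for the newly created within-tie constraints $\alpha_a+\alpha_b\ge 1$ on edges with $b\succ_a' M(a)$ yet $b\sim_a M(a)$, while all cross-class constraints remain satisfied), and fixing the gadget potentials so that every gadget edge is feasible and the total sum rises by exactly $q$, certifying $\mu(M')\le k+q$. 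For the converse I restrict a certificate of $G'$ to the real nodes and argue that each gadget is forced to absorb at least one unit of dual load, so that after subtracting $q$ one obtains a feasible certificate of value at most $k$ for the corresponding assignment $M$ in $G$.

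The main obstacle is to make the shift \emph{exactly} $q$ rather than merely at most $q$, and this has a primal and a dual facet. Primally, the gadget must provide an augmenting cycle of net value $+1$ that is realisable in every assignment \emph{without} relying on the real graph (so each tied agent always contributes at least one), yet cannot be combined with a second gain at the same agent; closing such a $+1$ cycle inside a bounded gadget -- while preventing its auxiliary objects from having to be returned through arbitrarily low-potential real objects -- is the delicate point, since a careless return path creates dual constraints of the form $\alpha'_{h}+\alpha_o\ge -1$ that would inflate the certificate. Dually, one must exhibit gadget potentials of bounded magnitude whose edge-loads sum to precisely one per gadget, mirroring the primal unit gain, and verify the range conditions of Theorem~\ref{thm:dual-cert-unpopular}. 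Once the gadget is tuned so that its guaranteed primal gain and its dual load both equal one and are tightly coupled to $a$, the two directions meet and the clean relation $\mu(M')=\mu_G(M)+q$ follows for every $k\in[n]$.
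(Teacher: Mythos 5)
There is a genuine gap, and it sits exactly where you flagged ``the delicate point'': the tie gadget with the coupling property you postulate (``a rival can either realise the gadget's $+1$ or exploit a broken tie at $a$, but not both'') cannot exist, so the relation $\mu_{G'}(M')=\mu_G(M)+q$ that both directions of your argument rely on is unattainable by \emph{any} stage-2 gadget once stage~1 has linearly extended the ties in place. The reason is an indistinguishability obstruction: a rival that moves $a$ from $M(a)$ to a \emph{genuinely} better object contributes $+1$ in both $G$ and $G'$, so the gadget must hand this rival an \emph{extra} $+1$; a rival that moves $a$ to an object \emph{tied} with $M(a)$ but favoured by the extension contributes $0$ in $G$ and already $+1$ in $G'$, so for it the gadget must contribute $0$. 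Both rivals present the gadget with the identical local picture --- $a$ matched to a real object it prefers (in $G'$) to $M(a)$ --- so the gadget's maximum attainable contribution is one and the same constant in both situations; it cannot be both $\ge 1$ and $\le 0$. A two-agent example already kills the private-gadget version: let $A=\{a,u\}$, $B=\{b,b'\}$, $b\sim_a b'$, $b'\succ_u b$, and let the extension order $b\succ_a' b'$ (the other order is symmetric). For $M=\{(a,b'),(u,b)\}$ and $N=\{(a,b),(u,b')\}$ we have $\mu_G(M)=1$, $\mu_G(N)=0$. A private gadget must add equally many auxiliary agents and objects (else $G'$ has no perfect matching), and privateness then forces $a$ to be matched to $b$ or $b'$ in every assignment of $G'$, so the gadget contributes an $M$-independent constant $c_0\ge 0$ to every margin; one computes $\mu_{G'}(M')=2+c_0$ and $\mu_{G'}(N')=c_0$, and your target values $\mu_G(M)+q=2$ and $\mu_G(N)+q=1$ force $c_0=0$ and $c_0=1$ simultaneously. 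Making the auxiliaries adjacent to the real objects (clone-style) does not help either: the rotation's $-1$ return edge then cancels exactly the genuine improvements at $a$, and the relation fails on assignments whose best rival genuinely improves $a$ (e.g.\ $a\colon b_1\succ\{b_2,b_3\}$, $u\colon b_3\succ b_1\succ b_2$, $w\colon b_2\succ b_3\succ b_1$, $M=\{(a,b_3),(u,b_1),(w,b_2)\}$ has $\mu_G(M)=2$ but $\mu_{G'}(M')=2$, not $3$).

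The idea you are missing is the paper's \emph{preliminary, margin-preserving} reduction, which removes the genuine-versus-tie-break dichotomy before any gadget is introduced. In Part~I of the paper's proof, each agent $v$ with tiers $B_v^{(1)}\succ\dots\succ B_v^{(r_v)}$ is split into a strict ``tier-selector'' agent choosing among new private objects $b_v^{(1)},\dots,b_v^{(r_v)}$, plus, for each tier $i$, a new agent that is \emph{completely indifferent} among $B_v^{(i)}\cup\{b_v^{(i)}\}$; this yields a bijection between assignments that preserves $\Delta(\cdot,\cdot)$ exactly (shift $0$). After this step all residual indifference is total, so for the agents that still carry ties \emph{every} real move has weight $0$ and the two rival types that were indistinguishable-yet-conflicting in your construction simply coincide. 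Only then (Part~II) is a gadget attached: two clone agents and two private top objects per indifferent agent, all three agents having identical preferences and all adjacent to all real neighbours; interchangeability makes each gadget contribute exactly one unit, giving the clean shift by $q$, which in the paper is one unit per \emph{tier} (i.e.\ per indifference class), not one per tied agent as in your accounting. Your dual-certificate packaging of the two directions is fine in principle, but it cannot be rescued without replacing stage~1 by such a margin-preserving decomposition.
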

This completes the proof of Theorem~\ref{thm:unpopmargin-hardness}.

\section{Popular allocations in housing markets}
\label{sec:housing}
A housing market, as introduced by Shapley and Scarf in their seminal paper~\cite{SS74}, 
models a situation where agents are initially endowed 
with one unit of an indivisible good, called a house, and agents may trade
their houses according to their preferences without using monetary transfers. 
In such markets, trading results in a reallocation of houses in a way that each agent
ends up with exactly one house. 

Let $A$ denote the set of agents, and let $\omega(a)$ denote the house of agent~$a \in A$;
note that agents do not share houses, so $\omega(a) \neq \omega(a')$ whenever $a \neq a'$. 
The classic representation of a \emph{housing market} over $A$ is
a directed graph~$D=(A,F)$ where each agent~$a \in A$ has preferences over the arcs leaving~$a$ in $D$. 
In this model, an arc $(a,a') \in F$ means that $a$ prefers $\omega(a')$ to her own house (i.e.,~$\omega(a)$), 
and is therefore willing to participate in a trade where she obtains $\omega(a')$. 
Thus, preferences over the arcs leaving~$a$ describe her preferences of the houses she finds acceptable (i.e., prefers to her own).
An \emph{allocation} in a housing market~$D=(A,F)$ is a set $S \subseteq F$ of edges that form disjoint 
cycles in~$D$ called \emph{trading cycles}. An agent \emph{trades} in~$S$ if it is covered by a trading cycle of~$S$.

The solution concept used by Shapley and Scarf is the so-called \emph{core}~\cite{SS74}, but there have been several other stability notions studied in the literature on housing markets. 
We may consider popularity as an alternative concept, with a  straightforward interpretation as follows.
Given allocations~$S$ and~$S'$ in $D$, some agent~$a \in A$ \emph{prefers}~$S$ to~$S'$, 
if either $S$ covers $a$ but $S'$ does not,
or $(a,a_1) \in S$, $(a,a_2) \in S'$ and $a$ prefers~$\omega(a_1)$ to~$\omega(a_2)$. 
We say that $S$ is \emph{more popular than~$S'$}, if the number of agents preferring~$S$ to~$S'$ is more than the number of agents
preferring~$S'$ to~$S$. An allocation~$S$ is \emph{popular} in~$D$, if there is no allocation in~$D$ that is more popular than~$S$. 
A natural question is the following: can we find a popular allocation in a given housing market  efficiently?

\subsection{Popular allocations in a housing market: reduction to \myproblem{popular assignment}}
We propose a reduction from the problem of finding a popular allocation in a housing market 
to the problem of finding a popular assignment in a bipartite graph. 
The main idea of the reduction is to associate a bipartite graph $G_D$ 
with a given housing market~$D=(A,F)$ as follows.
We define the graph 
$G_D = (A \cup B, E)$ where objects are houses, i.e., $B=\{\omega(a) : a \in A\}$,
and the edge set is $E=\{(a,\omega(a')):(a,a') \in F)\} \cup \{(a,\omega(a)):a \in A\}$.
Thus, each arc $(a,a') \in F$ corresponds to an edge $(a, \omega(a'))$ in~$E$, and we also add
an edge~$(a,\omega(a))$ for each agent~$a \in A$, corresponding to the possibility of~$a$ staying in her own house.
Preferences of an agent~$a$ over her outgoing arcs in~$D$ naturally yield her preferences in $G_D$ over objects
in~$\Nbr_{G_D}(a) \setminus \{\omega(a)\}$, while $\omega(a)$ is the unique worst-choice object of~$a$.

Observe that there is a one-to-one correspondence between allocations in~$D$ and assignments in~$G_D$
that preserves the preferences of any agent: an allocation~$S$ in $D$
corresponds to an assignment $M_S=\{(a,\omega(a')):(a,a') \in S\} \cup \{(a,\omega(a)): a$ does not trade in $S\}$;
note that this indeed yields a bijection.
It is easy to see that an allocation~$S$ is popular in~$D$ if and only if $M_S$ is a popular assignment in~$G_D$.
Hence, we can find a popular allocation in a housing market~$D$ by using Algorithm~\ref{alg:popassign} 
to find a popular assignment in~$G_D$; thus the following is a corollary to Theorem~\ref{thm:algo}.

\begin{corollary}
\label{prop:pop_allocation}
For any housing market $D=(A,F)$ 
on $n$ nodes and $m$ arcs where each agent has preferences over her outgoing arcs, 
we can decide in time~$O((m+n) \cdot n^{5/2})$ whether $D$ admits a popular allocation or not, and if so, find one.
\end{corollary}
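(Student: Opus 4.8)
The statement is a corollary of Theorem~\ref{thm:algo}, so the proof essentially amounts to verifying that the reduction from housing markets to the \myproblem{popular assignment} problem, described just above the corollary, is (i) correct and (ii) efficient enough to yield the claimed running time. The plan is therefore to argue these two points cleanly, since the heavy lifting (the polynomial-time algorithm for \myproblem{popular assignment}) is already provided by Theorem~\ref{thm:algo}.

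First I would establish the correctness of the reduction, i.e.\ that an allocation~$S$ in~$D$ is popular if and only if the corresponding assignment~$M_S$ in~$G_D$ is a popular assignment. The key observation is the bijection between allocations in~$D$ and assignments in~$G_D$: a trading cycle through agents maps to a set of matching edges of the form $(a,\omega(a'))$ for arcs $(a,a')\in S$, while every non-trading agent~$a$ is matched to her own house via the edge $(a,\omega(a))$. Since $\omega$ is a bijection from~$A$ to~$B$, this map is well-defined and invertible. I would then verify that this bijection preserves each agent's vote in every pairwise comparison: by construction, $a$ prefers~$S$ to~$S'$ exactly when she prefers $M_S(a)$ to $M_{S'}(a)$ in~$G_D$, using that $\omega(a)$ is set as the unique worst-choice object of~$a$ (so ``not trading'' in the market corresponds to receiving one's least-preferred object, matching the semantics that being covered by a trade is strictly better than staying put only when the traded house is preferred). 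Consequently $\phi(S,S')=\phi(M_S,M_{S'})$ for all pairs, whence $S$ is more popular than~$S'$ iff $M_S$ is more popular than~$M_{S'}$, and popularity of~$S$ in~$D$ is equivalent to popularity of~$M_S$ in~$G_D$.

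Second I would account for the running time. The graph~$G_D$ has $n$ agents and $n$ objects, and its edge set~$E=\{(a,\omega(a')):(a,a')\in F\}\cup\{(a,\omega(a)):a\in A\}$ has size $|E|=m+n$ (one edge per arc of~$D$, plus one self-house edge per agent). Constructing~$G_D$ together with the induced preferences takes $O(m+n)$ time. Applying Theorem~\ref{thm:algo} to~$G_D$ with $|A|=|B|=n$ and the edge bound $|E|=m+n$ gives a running time of $O\bigl((m+n)\cdot n^{5/2}\bigr)$ for deciding whether a popular assignment exists and, if so, finding one. Finally, mapping the returned assignment back to an allocation via the inverse bijection costs $O(m+n)$. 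Summing these yields the claimed $O\bigl((m+n)\cdot n^{5/2}\bigr)$ bound.

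I expect no genuine obstacle here: the only point requiring a little care is the faithful translation of votes, and in particular checking that designating $\omega(a)$ as the worst-choice object correctly encodes the convention that an uncovered agent in the market is treated as ``staying in her own house'' rather than as ``unmatched''—which is exactly why~$G_D$ admits a perfect matching (every agent can always be assigned~$\omega(a)$) and why the reduction lands in the \myproblem{popular assignment} setting rather than the \myproblem{popular matching} setting. Once that correspondence is nailed down, the corollary follows directly from Theorem~\ref{thm:algo}.
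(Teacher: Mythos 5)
Your proposal is correct and follows exactly the paper's own route: the paper likewise establishes the preference-preserving bijection $S \leftrightarrow M_S$ between allocations in~$D$ and assignments in~$G_D$ (using that $\omega(a)$ is the unique worst-choice object, so non-trading agents correspond to agents matched to their own house), and then invokes Theorem~\ref{thm:algo} on~$G_D$ with $|E|=m+n$ to get the $O((m+n)\cdot n^{5/2})$ bound. No gaps; your care about the vote semantics for uncovered agents is precisely the point the paper's construction is designed to handle.
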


\subsection{Maximum-size popular allocation in a housing market}
Given the above positive result (Corollary~\ref{prop:pop_allocation}), an interesting question is whether it is 
possible to find a maximum-size popular allocation in a housing market $D=(A,F)$, 
where the \emph{size} of an allocation~$S$ in~$D$ is~$|S|$, that is, the number of agents trading in~$S$.
A natural approach for finding a maximum-size popular allocation in~$D$ would be to assign 
cost~$1$ to the edge~$(a,\omega(a))$ for each $a \in A$ in~$G_D$, assign cost~$0$ to all remaining edges of~$G_D$,
and ask for a minimum-cost popular assignment in $G_D$.
However, as we prove in Theorem~\ref{thm:housealloc-nphard}, finding a popular allocation of maximum size in a housing market is $\mathsf{NP}$-hard;
by the above reasoning, this implies that finding a minimum-cost popular assignment in the classic setting of the 
\myproblem{popular assignment} problem is also $\mathsf{NP}$-hard, even for binary costs. 
Thus, Theorem~\ref{thm:mincost-nphard-01} is a corollary of Theorem~\ref{thm:housealloc-nphard}.

\begin{theorem}
\label{thm:housealloc-nphard}
The \myproblem{minimum-cost popular assignment} problem is $\mathsf{NP}$-hard, even if agents' preferences are strict,
the input graph~$G_D$ corresponds to a housing market~$D=(A,F)$ where edges in~$\{(a,\omega(a)) : a \in A\}$, corresponding to initial endowments in~$D$, have cost~1, while 
all remaining edges have cost~0. 
\end{theorem}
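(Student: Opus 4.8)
The plan is to reduce from the \myproblem{Vertex Cover} problem, which is $\mathsf{NP}$-hard even though its weighted/parameterized status is not needed here. Given a \myproblem{Vertex Cover} instance $(H,k)$ with $H=(V,E_H)$, I would build a housing market $D=(A,F)$ together with its bipartite graph $G_D$, in which every endowment edge $(a,\omega(a))$ has cost~$1$ and all remaining edges have cost~$0$, and a budget $\beta$ depending on $H$ and $k$. Using the preference-preserving bijection between allocations of $D$ and assignments of $G_D$ established in Section~\ref{sec:housing}, the cost of the assignment $M_S$ equals the number of agents that do \emph{not} trade in $S$ (each contributing exactly one unit-cost endowment edge). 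Hence a minimum-cost popular assignment corresponds to a maximum-size popular allocation, and the goal is to arrange that $D$ has a popular allocation of cost at most $\beta$ if and only if $H$ has a vertex cover of size at most $k$. Since the preferences over outgoing arcs are strict and the endowment is the unique worst choice, the whole construction has strict preferences, matching the theorem's claim.

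The construction uses a \emph{vertex gadget} for each $v\in V$ and an \emph{edge gadget} for each $e\in E_H$. The vertex gadget is designed so that in any popular allocation a distinguished agent $a_v$ faces a binary choice: either $a_v$ trades (the strictly preferred, default behaviour) or $a_v$ is forced to keep its own house $\omega(a_v)$, in which case I declare $v$ \emph{selected}. Selecting $v$ is exactly the event of including the cost-$1$ endowment edge $(a_v,\omega(a_v))$ in the assignment, so the selected set is encoded by the chosen unit-cost edges. Each edge gadget for $e=(u,v)$ is wired into both $G_u$ and $G_v$ so that, whenever \emph{neither} $a_u$ nor $a_v$ keeps its house, the allocation admits a rotation through the edge gadget along one endpoint that forms an $M_{S}$-alternating cycle of strictly positive total $\wt_{M_S}$-weight; such a cycle yields a strictly more popular allocation. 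Thus popularity forces every edge to have at least one selected endpoint, i.e.\ the selected set forms a vertex cover; conversely, no positive rotation exists once one endpoint stays.

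For the forward direction I would take a vertex cover $C$ with $|C|\le k$, let $S_C$ be the allocation whose staying agents are exactly $\{a_v:v\in C\}$ and in which all other gadget agents trade, and prove that $M_{S_C}$ is popular by exhibiting a dual certificate $\vec{\alpha}$ as in Theorem~\ref{thm:certificate}: object levels and agent potentials are assigned gadget-by-gadget so that $\alpha_a+\alpha_b\ge \wt_{M_{S_C}}(a,b)$ on every edge while the $M_{S_C}$-edges are tight. Because gadgets interact only through shared vertex-gadget objects, this check is local and reduces to a finite case analysis, and the cost of $S_C$ is $\beta(H)+|C|\le\beta$. For the backward direction, given a popular allocation $S$ of cost at most $\beta$ with dual certificate $\vec{\alpha}$, I would let $C$ be the set of $v$ with $a_v$ keeping its house and argue that $C$ is a vertex cover with $|C|\le k$: if some edge $e=(u,v)$ had both endpoints trading, the explicit positively-weighted alternating cycle through the edge gadget would contradict popularity (equivalently, contradict dual feasibility of $\vec{\alpha}$ on that cycle), and counting the unit costs over staying agents bounds $|C|$.

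The main obstacle is the backward direction, specifically making the gadgets \emph{rigid} enough that popularity is equivalent to the covering condition and not to anything weaker. The difficulty is to rule out "cheating" popular allocations that keep few agents yet do not induce a vertex cover; I would handle this by showing that any per-gadget configuration other than the two intended ones creates a positively-weighted alternating cycle, so every popular allocation is forced into the intended states. The most delicate piece is calibrating the preference wiring between each edge gadget and its two incident vertex gadgets so that the rotation deviation has net $\wt_{M_S}$-weight exactly $+1$ precisely when both endpoints trade, and simultaneously admits the consistent global dual certificate needed for the forward direction; I expect this local weight bookkeeping, together with verifying that it glues into a valid assignment to~\ref{LP2}, to be the technically hardest part of the proof.
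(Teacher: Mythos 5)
Your high-level architecture---reduce from \myproblem{Vertex Cover}, use the allocation/assignment correspondence so that cost counts the non-trading agents, and arrange that popularity forces the selected set to cover every edge while the budget bounds its size---matches the architecture of the paper's proof of Theorem~\ref{thm:housealloc-nphard}. But what you have written is a \emph{specification} of the gadgets, not a construction of them, and the entire content of the theorem lives in that construction. Every load-bearing claim in your sketch (the rotation through the edge gadget has net $\wt_M$-weight $+1$ precisely when both endpoints trade; any per-gadget configuration other than the two intended ones yields a positive alternating cycle; the local potentials glue into a feasible solution of \ref{LP2}) is stated as a design goal and explicitly deferred as ``the technically hardest part.'' Moreover, there is a concrete reason to doubt the particular encoding you chose: you encode ``$v$ selected'' as a single distinguished agent $a_v$ being matched to her endowment, which under strict preferences is her unique worst choice. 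Then \emph{every} $M$-alternating cycle through $a_v$ picks up $+1$ at $a_v$'s edge, so a kept endowment is a source of improving cycles rather than a blocker of them; your claim that ``no positive rotation exists once one endpoint stays'' has the logic backwards, since keeping makes it strictly harder, not easier, to avoid positive cycles, and stabilizing the selected agents is exactly where the tension lies. Note also that if the positivity of the rotation on $v$'s side must depend on $u$'s state as well (as it must, or else popularity would force \emph{all} vertices to be selected), then the edge gadget's internal state has to be correlated with both vertex gadgets, which is a propagation requirement your two-gadget scheme has no mechanism for.

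The paper's construction resolves these issues quite differently. Cover membership is not encoded by a single keep/trade decision but by a binary choice between two perfect matchings $M_x^0$ and $M_x^1$ inside a long cycle gadget of size $n_v$, whose costs differ by $n_v$; in either state a constant fraction of the gadget's agents keep their endowments, while exactly half the agents prefer each state to the other, so the full gadget rotation has weight $0$ and popularity in isolation does not distinguish the two states---only the budget does. The covering condition is then enforced not by exhibiting improving cycles but by propagating dual-certificate values: a third, very large ``level-setting'' gadget pins down the $\alpha$-values after normalization, the boundary edges $F_{\textup{bnd}}$ transfer these fixed values into every edge gadget, and the incidence edges $F_{\textup{inc}}$ force, for each $e=(x,y)$, that the state of $G_e$ pushes $G_x$ or $G_y$ out of its ``unselected'' state. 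Your plan has no analogue of this anchoring device, yet without it the dual values needed in your forward direction float freely and the rigidity needed in your backward direction has nothing to propagate from. Finally, in the paper the budget is used even to establish structure (it rules out the expensive matching in the level-setting gadget and thereby all inter-gadget edges in $M$), not merely to bound $|S|$; your sketch assumes popularity alone yields rigidity. These are not small omissions to be filled in routinely---filling them amounts to redoing the proof.
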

\begin{proof}
We present a reduction from the \myproblem{Vertex Cover} problem, whose input is a graph $H=(V,E)$ and an integer $k$, 
and asks whether there exists a set $S$ of at most $k$ vertices in $H$ such that each edge of $E$ has one of its endpoints in $S$. 
We will construct an equivalent instance of the \myproblem{minimum-cost popular assignment} problem.
We assume an arbitrary, fixed ordering on the vertices of $H$, which allows us to represent each edge in $E$ as an ordered pair $(x,y) \in E$; this implicit ordering will be relevant in our construction.

\paragraph{\bf Gadgets.}
We begin by defining two types of gadgets.  
A \emph{gadget} $G_\gamma$ of size $s \in \mathbb{N}$ 
contains agents~$a_{\gamma}^i$ and objects~$b_{\gamma}^i$ for each $i \in \{0,1,\dots, 6s-1\}$.
Superscripts of these agents and objects are always taken modulo $6s$, 
in particular, we set $a_\gamma^{6s}=a_\gamma^0$ and $b_\gamma^{6s}=b_\gamma^0$. 
We let $A_\gamma=\{a_\gamma^i:i \in [6s]\}$ and $B_\gamma=\{b_\gamma^i:i \in [6s]\}$.

Preferences of agents in $A_\gamma$ depend on the type of the gadget;  see Fig.~\ref{fig:HA-gadget}.
If $G_\gamma$ is a \emph{type-0 gadget}, then agents in~$A_\gamma$ have preferences as follows.

\medskip
\begin{tabular}{l@{\hspace{2pt}}ll}
$a_{\gamma}^i:$ & $b_\gamma^i \succ b_\gamma^{i+1}$ & for each $i \in \{0,1, \dots, 3s-1\}$; \\
$a_{\gamma}^i:$ & $b_\gamma^{i+1} \succ b_\gamma^i$ & for each $i \in [6s-1] \setminus [3s-1]$ where $i \not \equiv 0 \mod 3$; \\
$a_{\gamma}^i:$ & $b_\gamma^{i+1} \succ b_\gamma^i \succ b_\gamma^{i+3} $ &  for each $i \in [6s-1] \setminus [3s-1]$ where $i \equiv 0 \mod 3$. 
\end{tabular}
\medskip

\begin{figure}[t]
\begin{center}	
\includegraphics[scale=1]{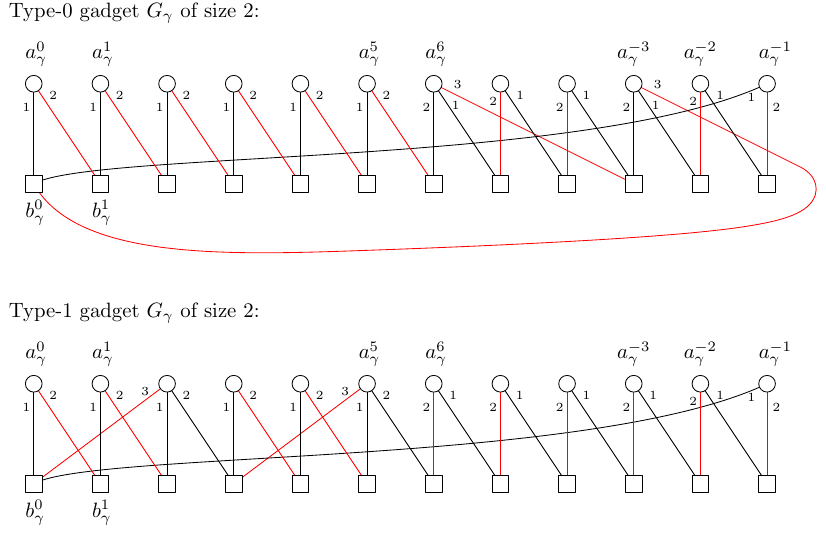}
\caption{An illustration of type-0 and type-1 gadgets constructed in the proof of Theorem~\ref{thm:housealloc-nphard}.
Edges leading from an agent to its worst-choice object are shown in red.
}
\label{fig:HA-gadget}
\end{center}
\end{figure}

By contrast, if $G_\gamma$ is a \emph{type-1 gadget}, then its agents have the following preferences.

\medskip
\begin{tabular}{l@{\hspace{2pt}}ll}
$a_{\gamma}^i:$ & $b_\gamma^i \succ b_\gamma^{i+1}$ & for each $i \in \{0,1, \dots, 3s-1\}$  where $i \not \equiv 2 \mod 3$; \\
$a_{\gamma}^i:$ & $b_\gamma^i \succ b_\gamma^{i+1} \succ b_\gamma^{i-2} $ &  for each $i \in  \{0,1, \dots, 3s-1\}$ where $i \equiv 2 \mod 3$; \\
$a_{\gamma}^i:$ & $b_\gamma^{i+1} \succ b_\gamma^i$ & for each $i \in [6s-1] \setminus [3s-1]$.
\end{tabular}
\medskip

Furthermore, we define two disjoint matchings in gadget $G_\gamma$, namely $M_\gamma^0=\{(a_\gamma^i,b_\gamma^i):i \in [6s]\}$
and $M_\gamma^1=\{(a_\gamma^i,b_\gamma^{i+1}):i \in [6s]\}$. 

Notice that the edge set~$\{(a,\omega(a)):a \in A_\gamma \}$ leading from the agents to their worst-choice objects
forms a perfect matching within the gadget.
In our construction, we will assign a cost of~1 for such edges, setting all remaining edge costs as~0. 
This gives rise to the following important difference between type-0 and type-1 gadgets:
for a type-0 gadget~$G_\gamma$ of size~$s$, 
	the cost of~$M_\gamma^0$ is~$2s$, while the cost of~$M_\gamma^1$ is~$3s$.
By contrast, for a type-1 gadget~$G_\gamma$ of size~$s$, 
	the matching~$M_\gamma^0$ has cost~$3s$, while $M_\gamma^1$ has cost~$2s$. 
In fact, the sole purpose of the differences between the two types of gadgets 
is to make $M_\gamma^0$ ``cheap'' and $M_\gamma^1$ ``expensive'' in a type-0 gadget, and the other way around in a type-1 gadget.

\paragraph{\bf Construction.}
Let us set $n_v=3|E|+1$ and $n_\ell=(3|V|+1)n_v$; it will be sufficient for the reader to keep $n_\ell \gg n_v \gg |E|$ in mind.
We are now ready to define the set~$A$ of agents and the set~$B$ of objects by creating the following: 
\begin{itemize}
\item a type-0 gadget~$G_\ell$ of size~$n_\ell$ called the \emph{level-setting gadget}, 
\item for each $e \in E$ a type-0 gadget~$G_e$ of size $1$ called an \emph{edge gadget}, and
\item for each $x \in V$ a type-1 gadget~$G_x$ of size $n_v$ called a \emph{vertex gadget}.
\end{itemize}
Additionally, we introduce a set $F$ of \emph{inter-gadget edges}; see Fig.~\ref{fig:HA-consruction} for an illustration. 
First, in order to enforce certain lower bounds on the dual certificate, 
we connect some agents and objects in the level-setting gadget with those in the vertex and edge gadgets, by adding the edges of
\begin{equation}
\label{eqn-HA-Fbnd-def}
 F_{\textup{bnd}}=\{(a_x^{-3},b_\ell^{-2}):x \in V \} \cup \{(a_e^2,b_\ell^{-3}),(a_e^4,b_\ell^{-2}),(a_\ell^{-3},b_e^4): e \in E\}.
\end{equation}
Second, we encode the incidence relation in $H$ into our instance by adding the edges of
\begin{equation}
\label{eqn-HA-Finc-def}
F_{\textup{inc}} = \{ (a_e^2,b_x^{-3}), (a_e^3,b_y^{-3}) : e= (x,y) \in E\}
\end{equation}
between edge and vertex gadgets.
We define the set of all inter-gadget edges as $F= F_{\textup{bnd}} \cup F_{\textup{inc}}$.
Note that the edges of $F$ indeed run between different gadgets.

Preferences of agents incident to inter-gadget edges are given below; all remaining agents have preferences as given in the 
definition of a type-0 or type-1 gadget. 
Again, for a set $X$ of objects, we  write $[X]$ in an agent's preference list to denote an arbitrarily ordered list containing objects in~$X$.

\medskip
\begin{tabular}{lll}
$a_{\ell}^{-3}:$ & $b_\ell^{-2} \succ b_\ell^{-3} \succ [\{b_e^4:e \in E\}] \succ b_\ell^0$; $\qquad \quad$ & \\
$a_e^2:$ & $b_\ell^{-3} \succ b_x^{-3} \succ b_e^2 \succ b_e^3$ & for each $e=(x,y) \in E$; \\
$a_e^3:$ & $ b_y^{-3} \succ b_e^4 \succ b_e^3 \succ b_e^0$ & for each $e=(x,y) \in E$; \\
$a_e^4:$ & $b_\ell^{-2} \succ b_e^5 \succ b_e^4$  & for each $e=(x,y) \in E$; \\
$a_x^{-3}:$ & $b_\ell^{-2} \succ b_x^{-2} \succ b_x^{-3}$ & for each $x \in V$.
\end{tabular}
\medskip

Note that the introduction of inter-gadget edges did not change the worst-choice object of any agent. 
As a consequence, the edge set $\{(a,\omega(a)):a \in A\}$ forms a perfect matching in the resulting graph~$G$, and
therefore $G$ indeed corresponds to a housing market as promised. 

\begin{figure}[t]
\begin{center}	
\includegraphics[width=\linewidth]{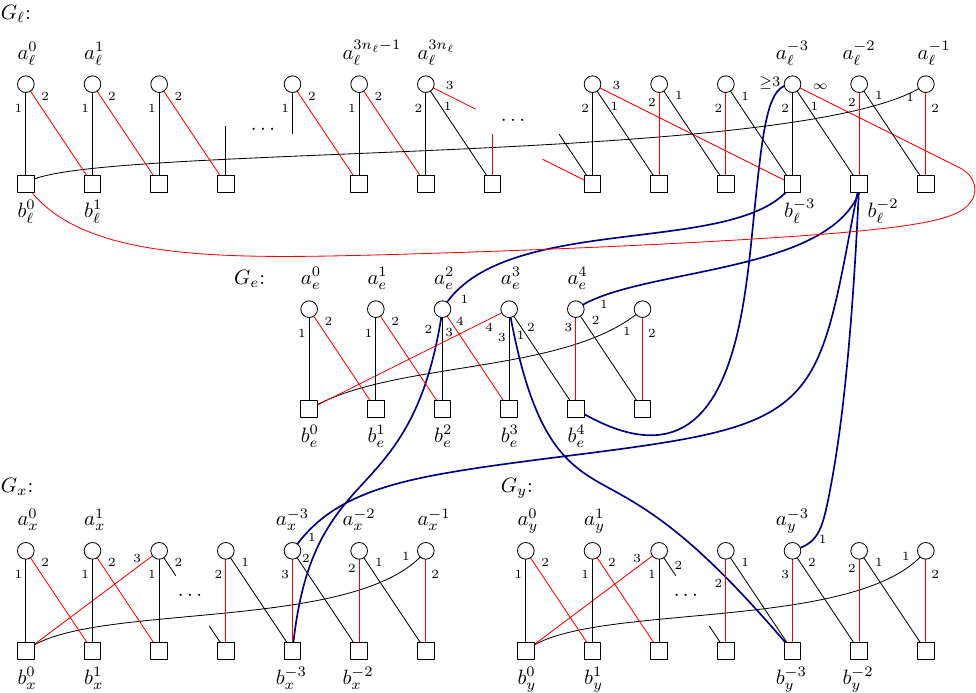}
\caption{An illustration of the construction in the proof of Theorem~\ref{thm:housealloc-nphard}.
The figure assumes $e=(x,y) \in E$. 
Again, worst-choice edges are shown in red, while inter-gadget edges are depicted in blue.
}
\label{fig:HA-consruction}
\end{center}
\end{figure}

We set the cost of all edges in $\{(a,\omega(a)):a \in A\}$ as~$1$.
All remaining edges have cost~$0$, and we set $\beta=2n_\ell+3kn_v+(|V|-k)2n_v+3|E|$ as our budget. 
This finishes the definition of our instance.

\paragraph{\bf Direction ``$\Rightarrow$''.}
To prove the correctness of our reduction, 
first we will show that the existence of a popular assignment $M$ of cost at most $\beta$ in $G$ 
implies a vertex cover of size at most $k$ in $H$. 
Let us make some basic observations about the structure of~$M$. 

\paragraph{\bf Excluding inter-gadget edges.}
We claim that $M$ does not contain any inter-gadget edges. 
To prove this, we first show that $(a_\ell^{-2},b_\ell^{-2}) \in M$.
For any index $i \in \{-2,-1, 0, \dots, 3n_\ell-1\}$, agent~$a_\ell^i$ can be assigned either~$b_\ell^i$ or~$b_\ell^{i+1}$. 
Using that $M$ assigns an object to every agent, we immediately 
get that  if $(a_\ell^{-2},b_\ell^{-2}) \notin M$, then $M$ must contain all 
edges $(a_\ell^i, b_\ell^{i+1})$ for $i \in \{-2,-1,0,1, \dots, 3n_\ell-1\}$. 
However, these edges have total cost $3n_\ell$ which by our choice of $n_\ell$ exceeds the budget~$\beta$. 
Hence, $M$ contains $(a_\ell^{-2},b_\ell^{-2})$. 
In particular, this shows that agents in a vertex gadget~$G_x$ can only be assigned by $M$ an object within the same gadget.
Thus, all objects within vertex gadgets must be assigned to agents within vertex gadgets,
implying that no edge of~$F$ incident to a vertex gadget may be contained in $M$. 

To prove our claim, it remains to show that $M$ does not contain any edge connecting the level-setting gadget with an edge gadget. 
Assume otherwise for the sake of contradiction.
Recall that only $a_\ell^{-3}$ and $b_\ell^{-3}$ may be incident to an edge of $M \cap F$ among all agents and objects  in~$G_\ell$. 
Thus, it follows that $M$ must contain both $(a_\ell^{-3},b_e^4)$ and $(a_e^2,b_\ell^{-3})$ for some $e \in E$.
(Observe that here we used the fact that no edge leaving a vertex gadget may be in $M$.)
From this we also get $(a_e^3,b_e^3) \in M$. We can then define a matching $M'$ by replacing these three edges with the 
edges $(a_\ell^{-3},b_\ell^{-3})$, $(a_e^2,b_e^3)$ and $(a_e^3,b_e^4)$. Observe that $M'$ is more popular than $M$, 
since both $a_e^3$ and $a_\ell^{-3}$ prefer $M'$ to $M$; a contradiction to the popularity of $M$.
Hence, we have established that $M \cap F=\emptyset$. 

\paragraph{\bf Assignment~$M$ on the level-setting gadget.}
We show that $M_\ell^0= \{(a_\ell^i,b_\ell^i):i \in [6n_\ell]\} \subseteq M$.
To this end, we first prove that $M$ does not contain any of the edges~$(a_\ell^i,b_\ell^{i+3})$ where $i \in [6n_\ell]$ 
and~$i \equiv 0 \mod 3$.
Suppose $(a_\ell^i,b_\ell^{i+3}) \in M$ for the sake of contradiction. 
Then $(a_\ell^{i+2},b_\ell^{i+2}) \in M$ and~$(a_\ell^{i+1},b_\ell^{i+1}) \in M$ follows. 
However, we can now define an assignment~$M'$ by deleting these three edges from $M$ and adding the edges
$(a_\ell^{i},b_\ell^{i+1})$, $(a_\ell^{i+1},b_\ell^{i+2})$, and~$(a_\ell^{i+2},b_\ell^{i+3})$ instead. 
Notice that all three agents $a_\ell^i$,  $a_\ell^{i+1}$, and $a_\ell^{i+2}$ prefer~$M'$ over~$M$, contradicting our assumption that $M$ is popular.

Together with $M \cap F=\emptyset$, this implies that $M(a_\ell^i) \in \{ b_\ell^i,b_\ell^{i+1}\}$ for each $i \in [6n_\ell]$. 
Recall that we already know $(a_\ell^{-2},b_\ell^{-2}) \in M$, which yields $M_\ell^0 \subseteq M$.

\paragraph{\bf Edge and vertex gadget: a binary choice for $M$.}
Consider some edge gadget $G_e$.
Since $M$ does not contain any inter-gadget edges, we can argue that $M$ does not contain the edge~$(a_e^3,b_e^0)$:
indeed, 
supposing $(a_e^3,b_e^0) \in M$ would imply $\{(a_e^4,b_e^4),(a_e^5,b_e^5)\} \subseteq M$; however, 
replacing these edges by $\{(a_e^i,b_e^{i+1}):i =3,4,5\}$
yields a matching more popular than~$M$, a contradiction. 
Hence, $M$ contains either $M_e^0$ or $M_e^1$.

Analogously, we can show that $M$ contains no edge of the form~$(a_x^i,a_x^{i-2})$ in a vertex gadget~$G_x$:
indeed, $(a_x^i,b_x^{i-2}) \in M$ would imply $\{(a_x^{i-2},b_x^{i-1}),(a_x^{i-1},b_x^i)\} \subseteq M$; however, 
replacing these edges by $\{(a_e^j,b_e^j):j =i-2,i-1,i\}$ yields a matching more popular than~$M$, a contradiction. 
Hence, $M$ contains either $M_x^0$ or $M_x^1$.

\paragraph{\bf Properties of a dual certificate.}
Let $M$ admit a dual certificate  $\vec{\alpha}$, and consider some gadget~$G_\gamma$ in~$G$ of size~$s$. 
Recall that for each $(a,b) \in M$ complementary slackness for \ref{LP2} implies that $\alpha_a+\alpha_b=\wt_M(a,b)=0$.
Thus, either $\alpha_{a_\gamma^i}=-\alpha_{b_\gamma^i}$ for each $i \in [s]$, 
or $\alpha_{a_\gamma^i}=-\alpha_{b_\gamma^{i+1}}$ for each $i  \in [s]$.
Furthermore, let $\overline{M}_\gamma$ denote the matching $(M_\gamma^0 \cup M_\gamma^1) \setminus M$ 
(that is, $\overline{M}_\gamma=M_\gamma^1$ if $M_\gamma^0 \subseteq M$ and $\overline{M}_\gamma=M_\gamma^0$ otherwise).
Observe that half of the edges in $\overline{M}_\gamma$ have weight $+1$ and half of them have weight~$-1$ according to $\wt_M(\cdot)$, 
since exactly half of the agents in $G_\gamma$ prefer $M_\gamma^0$ to $M_\gamma^1$, while the other half prefer $M_\gamma^1$ to $M_\gamma^0$. 
Hence, using the feasibility of the dual certificate $\vec{\alpha}$, we get
\begin{align*}
0=& \sum_{a \in A_\gamma} \alpha_a + \alpha_{M(a)} = \sum_{a \in A_\gamma} \alpha_a + \sum_{b \in B_\gamma} \alpha_b \\
= &
\sum_{a \in A_\gamma} \alpha_a + \alpha_{\overline{M}_\gamma(a)}\geq |A_\gamma|/2 \cdot 1+|A_\gamma|/2 \cdot (-1)=0
\end{align*}
where the first equality holds because $\alpha_a+\alpha_b=0$ for each edge $(a,b) \in M$.
Thus, it follows that $\alpha_a+\alpha_b=\wt_M(a,b)$ must hold for each edge $(a,b) \in \overline{M}_\gamma$ as well. 
From this we get 
\begin{eqnarray}
\label{eqn-HA-dualvalues}
\notag
\alpha_{b_\gamma^{i+1}}&=&\alpha_{b_\gamma^{i+1}}+\alpha_{a_\gamma^i}-\alpha_{a_\gamma^i}-\alpha_{b_\gamma^i}+\alpha_{b_\gamma^i}=
\wt_{M}(a_\gamma^i,b_\gamma^{i+1})-\wt_{M}(a_\gamma^i,b_\gamma^i)+\alpha_{b_\gamma^i} \\ 
&=& \left\{ 
\begin{array}{ll}
\alpha_{b_\gamma^i} -1, & \textrm{\hspace{20pt}if $i \in \{0,1, \dots, 3s-1\}$,} \\
\alpha_{b_\gamma^i} +1, & \textrm{\hspace{20pt}if $i \in [6s-1] \setminus [3s-1]$,} \\
\end{array}
\right.
\end{eqnarray}
irrespectively of the type of gadget~$G_\gamma$. In particular, this implies
\begin{equation}
\label{eqn-HA-opposite-dualvalues}
\alpha_{b_\gamma^i}=\alpha_{b_\gamma^{-i}} 
\end{equation}
for each $i \in [3s]$.

Note that w.l.o.g. we can assume that $\alpha_{b_\ell^0}=0$, as otherwise we can decrease the value of~$\alpha_b$ for all objects $b$ by~$\alpha_{b_\ell^0}$ and increase $\alpha_a$ by the same amount for each agent~$a$. 
Therefore, (\ref{eqn-HA-dualvalues}) yields
\begin{equation}
\label{eqn-HA-levelgadget-values}
\alpha_{b_\ell^{-3}}=-3, \quad \alpha_{b_\ell^{-2}}=-2, \quad \alpha_{a_\ell^{-3}}=3
\end{equation}
where the last equality follows from $(a_\ell^{-3},b_\ell^{-3}) \in M$.

\paragraph{\bf Defining a vertex cover.}
We are going to define a set $S \subseteq V$, and show that $S$ is a vertex cover in~$H$ of size at most~$k$. 
Namely, let $S$ contain exactly those vertices~$x \in V$ for which \mbox{$M_x^0 \subseteq M$}. 
Note that since $G_x$ is a type-1 gadget of size~$n_v$ for each $x \in V$, 
the cost of all edges of~$M$ within~$G_x$ is either~$3n_v$ (if $x \in S$) or~$2n_v$ (if $x \notin S$).
Since the cost of $M_\ell^0 \subseteq M$ is $2n_\ell$, and 
by our choice of~$n_v$ we have
$\beta+n_v-3|E|=2n_\ell+(k+1)3n_v+(|V|-k-1)2n_v<\beta$,  
our budget implies $|S| \leq k$.
It remains to show that $S$ is indeed a vertex cover.

To this end, consider now the edges in $F_{\textup{bnd}}$.
Observe that $\wt_M(a_e^2,b_\ell^{-3})=1$, $\wt_M(a_e^4,b_\ell^{-2})=1$ and $\wt_M(a_\ell^{-3},b_e^4)=-1$ for any $e \in  E$.
Using (\ref{eqn-HA-levelgadget-values}) this implies
\begin{equation}
\label{eqn-HA-edgegadget-bounds}
\alpha_{a_e^2} \geq 4, \quad \alpha_{a_e^4} \geq 3, \quad \alpha_{b_e^4} \geq -4.
\end{equation}
By~(\ref{eqn-HA-opposite-dualvalues}) 
we also have  $\alpha_{b_e^2}=\alpha_{b_e^4}$.
Let us distinguish between two cases:
(i) first, if $M_e^0 \subseteq M$, then $\alpha_{a_e^2}+\alpha_{b_e^2} = \wt_M(a_e^2,b_e^2)=0$, 
which by (\ref{eqn-HA-edgegadget-bounds}) implies $\alpha_{b_e^2}\leq -4$;
(ii) second, if $M_e^1 \subseteq M$, then $\alpha_{a_e^4}+\alpha_{b_e^4} = \wt_M(a_e^4,b_e^4)=-1$, 
which by (\ref{eqn-HA-edgegadget-bounds}) implies $\alpha_{b_e^4}\leq -4$.
Hence, in either case we obtain $\alpha_{b_e^2}=\alpha_{b_e^4}=-4$ by~(\ref{eqn-HA-edgegadget-bounds}). 
This leads us to the following important facts: 
\begin{eqnarray}
\label{eqn-HA-edgegadget-value-M0}
\textrm{if $M_e^0 \subseteq M$, then }&& \!\! \ \alpha_{a_e^2}=4; \\
\label{eqn-HA-edgegadget-value-M1}
\textrm{if $M_e^1 \subseteq M$, then }&& \!\! \ \alpha_{a_e^3}=4. 
\end{eqnarray}

Now, consider the edges incident to some vertex gadget $G_x$. 
First, $\wt_M(a_x^{-3},b_\ell^{-2})=1$ implies $\alpha_{a_x^{-3}} \geq 3$ by the equations in (\ref{eqn-HA-levelgadget-values}).
From (\ref{eqn-HA-dualvalues}) it follows that
\begin{equation}
\label{eqn-HA-vertexgadget-value-M1}
\textrm{if $M_x^1 \subseteq M$, then } \alpha_{b_x^{-3}} = \alpha_{b_x^{-2}}-1 = \alpha_{M(a_x^{-3})}-1 = -\alpha_{a_x^{-3}}-1 \leq -4.
\end{equation}

We are now ready to show that $S$ is indeed a vertex cover in~$H$.
Let $e=(x,y) \in E$ be an edge in~$H$. 
First, if $M_e^0 \subseteq M$, 
then $\wt_M(a_e^2,b_x^{-3})=1$ together with~(\ref{eqn-HA-edgegadget-value-M0}) implies $\alpha_{b_x^{-3}} \geq -3$. 
Hence, by~(\ref{eqn-HA-vertexgadget-value-M1}) we must have $M_x^1 \not\subseteq M$, yielding $x \in S$. 
Second, if $M_e^1 \subseteq M$, 
then $\wt_M(a_e^3,b_y^{-3})=1$ together with~(\ref{eqn-HA-edgegadget-value-M1}) implies $\alpha_{b_y^{-3}} \geq -3$. 
By (\ref{eqn-HA-vertexgadget-value-M1}) applied to gadget~$G_y$ we obtain $M_y^1 \not\subseteq M$, yielding $y \in S$. 
Thus, we can conclude that either $x \in S$ or $y \in S$, proving our claim. 

\paragraph{\bf Direction ``$\Leftarrow$''.}
For the other direction, given a vertex cover $S \subseteq V$ of size at most $k$ in $H$, we will show that 
a popular assignment $M$ of total cost at most~$\beta$ exists in $G$. 
For each edge $e \in E$, let us fix one of its endpoints in $S$, 
and denote it by $\tau(e)$. 
We may define $M$ as follows:

\[
M=M_\ell^0 \cup \left( \bigcup_{x \in S} M_x^0 \right) \cup \left(\bigcup_{x \in V \setminus S} M_x^1 \right)
	\cup \left( \bigcup_{\substack{e=(x,y) \in E, \\ \tau(e)=x}} M_e^0 \right)
	\cup \left( \bigcup_{\substack{e=(x,y) \in E, \\ \tau(e)=y}} M_e^1 \right).
\]

\begin{figure}[t]
\begin{center}	
\includegraphics[width=\linewidth]{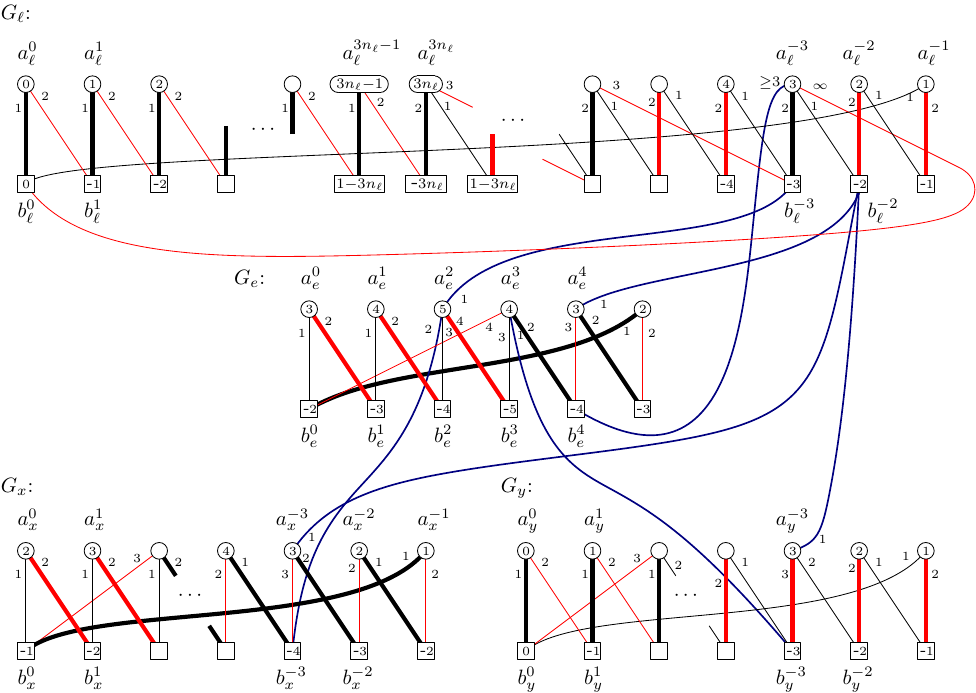}
\caption{An illustration of a popular assignment for the instance $G$ constructed in the proof of Theorem~\ref{thm:housealloc-nphard}, 
given a vertex cover $S$.
The figure assumes $e=(x,y) \in E$ with $x \notin S$ and $y \in S$. 
Edges contained in $M$ are depicted in bold. The values of a dual certificate $\vec{\alpha}$ for $M$ 
are indicated within the corresponding agent or object.
}
\label{fig:HA-allocation}
\end{center}
\end{figure}

It is clear that $M$ indeed has total cost at most~$\beta$: the cost of $M_\ell^0$ is $2n_\ell$, 
the cost of $\bigcup_{x \in S} M_x^0$ is~$3kn_v$,  
the cost of $\bigcup_{x \in V\setminus S} M_x^1$ is $2(|V|-k)n_v$, 
while the cost of all edges of $M$ within edge gadgets is at most $3|E|$.
To show that $M$ is popular, we define a dual certificate~$\vec{\alpha}$ for~$M$ by defining $\alpha_b$ for each object $b \in B$ as follows; 
we set $\alpha_a = - \alpha_{M(a)}$ for each agent~$a \in A$.
See Fig.~\ref{fig:HA-allocation} for an illustration.

\medskip
\begin{tabular}{ll}
$\alpha_{b_\ell^i}=-|i|$ & for each $i \in \{-3n_\ell+1, \dots, -1, 0,1,\dots, 3n_\ell\}$; \\
$\alpha_{b_x^i}=-|i|$ & for each $x \in S$ and $i \in \{-3n_v+1, \dots, -1, 0,1,\dots, 3n_v\}$; \\
$\alpha_{b_x^i}=-|i|-1$ & for each $x \in V \setminus S$ and $i \in \{-3n_v+1, \dots, -1, 0,1,\dots, 3n_v\}$; \\
$\alpha_{b_e^i}=-|i|-2$ & for each $e \in E$ and $i \in \{-2,-1,0,1,2, 3\}$; 
\end{tabular}
\medskip

This finishes the proof of the theorem. \qed
\end{proof}

\section{Open problems}\label{sec:open-problems}
We proposed a polynomial-time algorithm for computing a popular assignment in an instance $G = (A\cup B,E)$ with one-sided preferences,
if one exists.
Our algorithm solves $O(n^2)$ instances of the maximum matching problem in certain subgraphs of~$G$ (where $|A| = |B| = n$) and
it is easy to show instances where our algorithm indeed solves $\Theta(n^2)$ instances of the maximum matching problem.
The running time of our algorithm is $O(m\cdot n^{5/2})$ (where $|E| = m$) and 
by using the almost-linear time algorithm for maximum flow~\cite{CKLPPS22} to solve the bipartite maximum matching problem, we can 
get a faster randomized algorithm for the \myproblem{popular assignment} problem. Can we 
show faster deterministic or randomized algorithms for the \myproblem{popular assignment} problem 
by solving $o(n^2)$ instances of the maximum matching problem or via some other approach?

Another open problem is the computational complexity of the \myproblem{minimum-cost popular matching} problem when agents' preferences are partial orders.
For weak rankings, a polynomial-time algorithm for this problem follows from the combinatorial characterization of popular matchings given in \cite{AIKM07}. However as illustrated in an example in Section~\ref{app:characterization-of-pop-matchings}, this characterization does not hold for partial order preferences. 
So the computational complexity of the \myproblem{minimum-cost popular matching} problem for partial order preferences is open.

One more open problem is to show a short {\em witness} that a given instance $G$ does not admit a popular assignment. Rather than run our
algorithm and discover that $G$ has no popular assignment, is there a forbidden structure that causes $G$ not to admit a popular assignment?
Can we characterize instances that admit popular assignments? Interestingly, such a result is known for the stable roommates problem~\cite{Tan91}; 
recall our discussion in Section~\ref{sec:intro} on the similarity between results for the \myproblem{popular assignment} problem and the 
\myproblem{stable roommates} problem.

Our $\mathsf{W}_l[1]$-hardness proof for the \myproblem{$k$-unpopularity margin}  problem with parameter~$k$ needs weak rankings. We are able to show that this problem remains $\mathsf{NP}$-hard for strict rankings (Lemma~\ref{lem:reduction-weak-to-strict}). 
However, the parameterized complexity of this case is still open: is the \myproblem{$k$-unpopularity margin} problem in FPT with parameter~$k$, if agents' preferences are strict rankings?

\section*{Acknowledgments}
Telikepalli Kavitha is supported by the DAE, Government of India, under project no. RTI4001. 
Tam\'{a}s Kir\'{a}ly is supported by the Ministry of Innovation and Technology of Hungary from the National Research, Development and Innovation Fund, financed under the ELTE TKP 2021-NKTA-62 funding scheme. 
Jannik Matuschke is supported by internal funds of KU Leuven. 
Ildik\'o Schlotter is supported by the Hungarian Academy of Sciences under its Momentum Programme (\hbox{LP2021-2}) and its J\'anos Bolyai Research Scholarship, 
and by the Hungarian Scientific Research Fund (OTKA grant  K124171). Ulrike Schmidt-Kraepelin was supported by the Deutsche For\-schungs\-gemeinschaft (DFG) under grant BR~4744/2-1 while affiliated with TU Berlin, and the \emph{National Science Foundation (NSF)} under Grant No. DMS-1928930 and by the Alfred P. Sloan Foundation under grant G-2021-16778, while she was in residence at the Simons Laufer Mathematical Sciences Institute (formerly MSRI) in Berkeley, California, during the Fall 2023 semester.

\bibliographystyle{abbrv}
\bibliography{popassign}

\section*{Appendix} The proofs of Proposition~\ref{prop:penalty-truncation} and Lemma~\ref{lem:reduction-weak-to-strict} are given below;
for convenience, we restate their statements.

\proppenalty*
\begin{proof}
     Define $\k'=\max \{ |\alpha_b|:b \in B'\}$ which also equals $
     \max \{ \alpha_a:a \in A'\}$ by complementary slackness. 
     Suppose for contradiction that $\k'>\k$, and let $B_{\max}$ denote the set of all objects~$b$ with~$\alpha_b=-\k'$. We choose $\vec{\alpha}$ so that $\k'$ is minimized, and subject to that, $|B_{\max}|$ is as small as possible. Let us call an edge~$(a,b)$ in~$G$ \emph{tight}, if $\alpha_a+\alpha_b=\wt_M(a,b)$; clearly, all edges of~$M$ are tight by complementary slackness. For each object~$b \in B_{\max}$, let us define the subgraph $H_b$ of $G'$ spanned by all $M'$-alternating paths in $G'$ starting at~$b$ with an $M'$-edge (if any) that only contain tight edges. 
     By the same arguments as in the proof of Theorem~\ref{thm:pop-w-penalty}, we know that $H_b$ must contain an agent whose $\alpha$-value is zero, otherwise we could increase the $\alpha$-value of each object in $H_b$ by~$1$ and decrease the $\alpha$-value of each agent in~$H_b$ by~1, contradicting our choice of~$\vec{\alpha}$.

    First, suppose that $l_{i}(a) \in B_{\max}$ 
    for some $a \in A$ and $i \in [\k]$. We claim $i=\k$.
    On the one hand, if $i<\k$ and $(p_i(a),l_{i+1}(a)) \in M'$, then by $\wt_{M'}(p_i(a),l_i(a))=1$ we know $\alpha_{p_i(a)} \geq \k'+1$, a contradiction to the definition of~$\k'$.
    On the other hand, if $i<\k$ and  $(p_i(a),l_i(a)) \in M'$, let $a'$ be an agent in~$H_{l_i(a)}$ with $\alpha_{a'} = 0$.
    Note that $a' \neq p_i(a)$ because $\alpha_{p_i(a)} = - \alpha_{l_i(a)} = \k' > 0$.
    Hence the $M'$-alternating path certifying that~$a'$ is in~$H_{l_i(a)}$ must begin with the edges $(l_i(a), p_i(a))$ and $(p_i(a), l_{i+1}(a))$.
    In particular,
    these two edges must be tight, yielding $\alpha_{p_{i}(a)}=\k'$ and $\alpha_{l_{i+1}(a)}=-\k'-1$ due to $\wt_{M'}(p_i(a),l_{i+1}(a))=-1$, again a contradiction. Thus $l_{i}(a) \in B_{\max}$ implies $i = \k$.
    
    Consequently, we know $B_{\max} \subseteq L_\k \cup B$. Thus, for each dummy $d \in D$ we must have $\alpha_d \geq \k'$, since $d$ is adjacent to all objects in~$L_\k \cup B$ and indifferent between them. By the definition of~$\k'$ we get $\alpha_d = \k'$. 

    Consider any agent~$a \in A$ for which $M'(a) \in B$. Then we have $M'(l_\k(a)) \in D$, which in turn yields $\alpha_{l_\k(a)}=-\k'$. Since each edge in $P(a) \setminus M'$ has $\wt_{M'}$-weight $-1$, for each $i=1,\dots, \k-1$ we iteratively obtain that $\alpha_{p_{\k-i}(a)} \geq \k'-i$ and $\alpha_{l_{\k-i}(a)} \leq -\k'+i$, yielding also 
    $\alpha_a \geq \k'-\k \geq 1$.
    Hence, all agents on~$P(a)$ have positive $\alpha$-values.

    Now let $b \in B_{\max}$ and let $a_0$  be an agent in~$H_{b}$ with $\alpha_{a_0}=0$. We know that $a_0$ is not a dummy (since all dummies have $\alpha$-value~$\k'$), and by the previous paragraph, $a_0$ does not lie on any path~$P(a')$ with $M'(a') \in B$. Hence, $a'$ must lie on a path~$P(a')$ for which $M'(a')=l_1(a)$.
    Let $Q$ be an $M'$-alternating path in~ $H_{b}$ from $b$ to~$a_0$. Since $Q$ starts with an edge of~$M'$ and is $M'$-alternating,  the fact $M'(a') \notin B$ implies that $Q$ must enter~$P(a')$ from some dummy~$d$, proceeding through the subpath of $P(a')$ from $l_\k(a')$ to~$a_0$. However, by $\alpha_d=-\k'$ and the tightness of the edges of~$Q$ we obtain that all agents on~$P(a')$ have positive $\alpha$-value, a contradiction to the definition of~$a_0$. This proves the proposition. \qed
\end{proof}

\lemweaktostrict*
\begin{proof}
  Let $G$ be an instance with weak rankings. We divide the proof into two parts. In the first part we show that we can assume without loss of generality that the set of agents in $G$ can be partitioned into two groups, $A_{\succ}$ and $A_{\sim}$, where agents in $A_{\succ}$ have strict preferences over the objects in their neighborhood and agents in $A_{\sim}$ are indifferent among all their neighboring objects. Starting from such an instance, we then prove the lemma.   

\paragraph{\bf Part I.} Starting from a graph $G$ with weak rankings, we create a graph $\hat{G}$ and preferences as follows: For every agent $v$ in $G$, let $B_v^{(1)} \succ \dots \succ B_v^{(r_v)}, r_v \in \{1,\dots,m\}$ be the weak ranking over its neighboring objects. We add agent~$v$ to~$\hat{G}$, and for each $i \in \{1,\dots,r_v\}$ we further add a new agent~$a_{v}^{(i)}$ and a new object~$b_{v}^{(i)}$ to $\hat{G}$. The new agent~$a_{v}^{(i)}$ is connected to---and defined to be indifferent among---all objects in $B_v^{(i)} \cup \{b_{v}^{(i)}\}$. Lastly, we introduce edges from~$v$ to all objects in~$\bigcup_{i=1}^{r_v} b_v^{(i)}$, with preferences $b_{v}^{(1)} \succ b_{v}^{(2)} \succ \dots \succ b_{v}^{(r_v)}$. In $\hat{G}$ we can partition the set of nodes into~$A_{\succ}$, containing copies of agents in $G$, and $A_{\sim}$, containing the newly introduced agents. Agents in~$A_{\succ}$ have strict preferences and all agents in $A_{\sim}$ are indifferent among all their neighbors. Below, we show that the two instances are essentially equivalent.

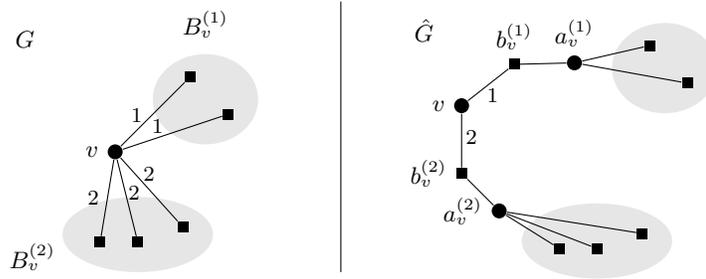
\begin{figure}
    \centering
\begin{tikzpicture}
    \draw[fill=black!10,draw=none] (0.3,-1.1) ellipse (1cm and .5cm);
    \draw[fill=black!10,draw=none] (1.2,.7) ellipse (.7cm and .6cm);
    \node[circle,fill=black,label=left:$v$,inner sep = 2pt] (v) at (0,0){};
    \node[rectangle, fill=black,inner sep = 2pt] (b1) at (1,1){};
    \node[rectangle, fill=black,inner sep = 2pt] (b2) at (1.5,.5){};
    \node[rectangle, fill=black,inner sep = 2pt] (b3) at (-.2,-1.2){};
    \node[rectangle, fill=black,inner sep = 2pt] (b4) at (.3,-1.2){};
    \node[rectangle, fill=black,inner sep = 2pt] (b5) at (.9,-1){};
    \draw (v) -- (b1) node [near start, above] {\scriptsize $1$};
    \draw (v) -- (b2) node [shift={(190:0.95)}] {\scriptsize $1$};
    \draw (v) -- (b3) node [midway, left] {\scriptsize $2$};
    \draw (v) -- (b4) node [shift={(110:0.7)}, right] {\scriptsize $2$};
    \draw (v) -- (b5) node [near start, right] {\scriptsize $2$};
    \node at (-1.1,-1.4) {$B_v^{(2)}$};
    \node at (1.2,1.7) {$B_v^{(1)}$};
    \draw (3,-1.6) -- (3,2); 
    \node at (-1.2,1.5) {$G$};
    \end{tikzpicture}
    \begin{tikzpicture}\hspace{.7cm}
    \draw[fill=black!10,draw=none] (1.8,-1.6) ellipse (1cm and .5cm);
    \draw[fill=black!10,draw=none] (2.7,.7) ellipse (.7cm and .6cm);
    \node[circle,fill=black,label=left:$v$,inner sep = 2pt] (v) at (0,0.2){};
    \node[rectangle, fill=black,inner sep = 2pt,label=above:$b_v^{(1)}$] (c1) at (.7,.75){};
    \node[circle, fill=black,inner sep = 2pt,label=above:$a_v^{(1)}$] (d1) at (1.5,.77){};
    \node[rectangle,label=left:$b_v^{(2)}$, fill=black,inner sep = 2pt] (c2) at (0,-.7){};
    \node[circle, fill=black,label=left:$a_v^{(2)}$, inner sep = 2pt] (d2) at (0.5,-1.2){};
    \node[rectangle, fill=black,inner sep = 2pt] (b1) at (2.5,1){};
    \node[rectangle, fill=black,inner sep = 2pt] (b2) at (3,.5){};
    \node[rectangle, fill=black,inner sep = 2pt] (b3) at (1.3,-1.7){};
    \node[rectangle, fill=black,inner sep = 2pt] (b4) at (1.8,-1.7){};
    \node[rectangle, fill=black,inner sep = 2pt] (b5) at (2.4,-1.5){};
    \draw (v) -- (c1) node [shift={(235:0.5)}] {\scriptsize $1$};
    \draw (v) -- (c2) node [shift={(75:0.5)}] {\scriptsize $2$};
    \draw (c1) -- (d1);
    \draw (c2) -- (d2);
    \draw (d1) -- (b1);
    \draw (d1) -- (b2);
    \draw (d2) -- (b3);
    \draw (d2) -- (b4);
    \draw (d2) -- (b5);
    \node at (-.5,1.2) {$\hat{G}$};
    \end{tikzpicture}
    \caption{Illustration of the first part of the proof of Lemma~\ref{lem:reduction-weak-to-strict}. The left side illustrates the neighborhood of a fixed node $v$ within the original graph $G$. The right side illustrates the corresponding situation within the graph~$\hat{G}$. Agents are depicted by circles and objects by squares. Labels of the edges indicate the rank of the edge within the ranking of the incident agent. Agents $a_{v}^{(1)}$ and $a_{v}^{(2)}$ are indifferent among all their neighbors, hence, their labels are omitted. }
    \label{fig:reduction-weak-to-structuredWeak}
\end{figure}

More precisely, we show that there exists a bijection $f$ mapping assignments in $G$ to assignments in $\hat{G}$ such that $\Delta(N,M) = \Delta(f(N),f(M))$ for any two assignments $M$ and $N$ in $G$. Let $M$ be an assignment in $G$.
We start with $f(M) = \emptyset$. For every edge $(v,w) \in M$ we do the following: Let the preferences of~$v$ be~$B_v^{(1)} \succ \dots \succ B_v^{(r_v)}$, and let $i \in \{1,\dots,r_v\}$ be such that $w \in B_v^{(i)}$. Then, we add the edges $(v,b_v^{(i)})$ and $(a_v^{(i)}, w )$ to $f(M)$ (see Fig.~\ref{fig:reduction-weak-to-structuredWeak}). Moreover, for all indices $j \in \{1, \dots, r_v\} \setminus \{i\}$ we add the edge $(a_v^{(j)},b_v^{(j)})$ to $f(M)$. It is easy to see that this is a bijection. 

Since for every agent in $A_{\succ}$ the rank of its partner in~$M$ equals the rank of its partner in~$f(M)$ and agents in~$A_{\sim}$ are indifferent among all their neighbors, we get $\Delta(N,M) = \Delta(f(N),f(M))$.  

\paragraph{\bf Part II.} Due to part I, we can assume that the agents in $G$ are partitioned into sets~$A_{\succ}$ and~$A_{\sim}$ such that agents in $A_{\succ}$ have strict preferences over objects and the agents in $A_{\sim}$ are indifferent among all their neighboring objects. Starting from the graph~$G$, we create a graph~$G'$ with strict rankings as follows. We first copy all agents and objects to $G'$. Then for each $a \in A_{\sim}$ we introduce two new agents $a'$ and $a''$ and two new objects $b'_a$ and~$b''_a$. We add all possible edges from $\{a,a',a''\}$ to $\Nbr_{G}(a) \cup \{b'_{a},b''_{a}\}$ (see Fig.~\ref{fig:my_label}). The preferences of agents $a,a'$, and $a''$ are identical, namely, $b'_{a}$ is their first choice, $b''_{a}$ their second choice, followed by all objects in $\Nbr_G(a)$ in arbitrary order. 

We define a function $f$ which maps assignments in $G'$ to assignments in $G$. Let $M'$ be an assignment in $G'$. For each $(a,b) \in M'$ where $a \in A_{\succ}$, we add the edge $(a,b)$ to $f(M')$. Now, observe that for every $a \in A_{\sim}$, exactly one of the nodes $a,a',a''$ is matched to a node of the original neighborhood, i.e., $\Nbr_{G}(a)$. Let $b$ be the assigned object. Then we add $(a,b)$ to $f(M')$. It is easy to see that $f(M')$ is then an assignment in $G$. 
We continue by observing that 
\begin{enumerate}[leftmargin=24pt]
    \myitem[(i)]\label{item:fsurjective} $f$ is surjective, and
    \myitem[(ii)]\label{item:equalvotes} for all $ a \in A_{\succ}$ and two assignment $M'$, $N'$ in $G'$: $M' \succ_a N'$ if and only if $f(M') \succ_a f(N')$.     
\end{enumerate}

For \ref{item:fsurjective}, note that for every assignment $M$ in $G$ we can create an assignment $M'$ in $G'$ such that $f(M')=M$ holds, as follows: Copy all edges from $M$ to $M'$ and then add for every $a \in A_{\sim}$ the edges $(a',b_a')$ and $(a'',b_a'')$. Clearly, $f(M') = M$. For \ref{item:equalvotes} observe that for all assignments $M'$ in $G'$, agents in $A_{\succ}$ have the same assigned object in $M'$ as in $f(M')$. 

We define $q= |A_{\sim}|$ and show that $G$ admits an assignment with unpopularity margin at most~$k$ if and only if $G'$ admits an assignment with unpopularity margin at most~$k+q$,
for any $k \in \mathbb{N}$.  

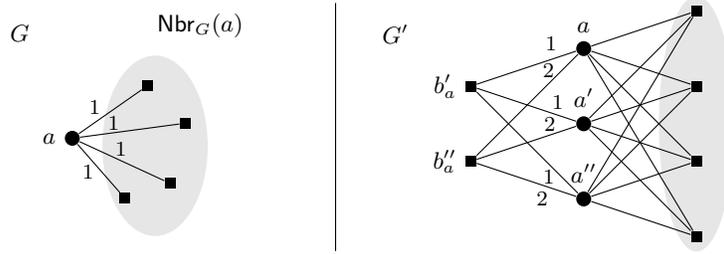
\begin{figure}
    \centering
    \begin{tikzpicture}
    \draw[fill=black!10,draw=none] (1.1,.2) ellipse (.7cm and 1.2cm);
    \node at (1.7,1.8) {$\Nbr_G(a)$}; 
    \node[circle,fill=black,label=left:$a$,inner sep = 2pt] (a) at (0,.3){};
    \node[rectangle, fill=black,inner sep = 2pt] (b1) at (1,1){};
    \node[rectangle, fill=black,inner sep = 2pt] (b2) at (1.5,.5){};
    \node[rectangle, fill=black,inner sep = 2pt] (b3) at (1.3,-.3){};
    \node[rectangle, fill=black,inner sep = 2pt] (b4) at (.7,-.5){};
     \draw (a) -- (b1) node [near start, above] {\scriptsize $1$};
    \draw (a) -- (b2) node [shift={(180:0.95)}] {\scriptsize $1$};
    \draw (a) -- (b3) node [shift={(145:0.8)}] {\scriptsize $1$};
    \draw (a) -- (b4) node [near start, below] {\scriptsize $1$};
     \draw (3.5,-1.2) -- (3.5,2.1); 
     \node at (-.7,1.7){$G$};
    \end{tikzpicture}
    \begin{tikzpicture}\hspace*{.4cm}
    \node at (-2.5,1.2){$G'$};
    \draw[fill=black!10,draw=none] (1.5,0) ellipse (.5cm and 1.7cm);
    \node[circle,fill=black,label=above:$a$,inner sep = 2pt] (a) at (0,1){};
    \node[circle, fill=black,label=above:$a'$,inner sep = 2pt] (ap) at (0,0){};
    \node[circle, fill=black,label=above:$a''$,inner sep = 2pt] (app) at (0,-1){};
    \node[rectangle, fill=black,label=left:$b'_a$,inner sep = 2pt] (bp) at (-1.5,.5){};
    \node[rectangle, fill=black,label=left:$b''_a$,inner sep = 2pt] (bpp) at (-1.5,-.5){};
    \node[rectangle, fill=black,inner sep = 2pt] (b1) at (1.5,1.5){};
    \node[rectangle, fill=black,inner sep = 2pt] (b2) at (1.5,.5){};
    \node[rectangle, fill=black,inner sep = 2pt] (b3) at (1.5,-.5){};
    \node[rectangle, fill=black,inner sep = 2pt] (b4) at (1.5,-1.5){};
    \draw (a) -- (bp) node [near start, above] {\scriptsize $1$};
    \draw (bp) -- (ap) node [shift={(140:0.45)}] {\scriptsize $1$};
    \draw (bp) -- (app) node [shift={(147:0.55)}] {\scriptsize $1$};
    \draw (bpp) -- (a) node [shift={(212:0.55)}] {\scriptsize $2$};
    \draw (bpp) -- (ap) node [shift={(180:0.45)}] {\scriptsize $2$};
    \draw (bpp) -- (app) node [shift={(180:0.55)}] {\scriptsize $2$};
    \draw (a) -- (b1);
    \draw (a) -- (b2);
    \draw (a) -- (b3);
    \draw (a) -- (b4);
    \draw (ap) -- (b1);
    \draw (ap) -- (b2);
    \draw (ap) -- (b3);
    \draw (ap) -- (b4);
    \draw (app) -- (b1);
    \draw (app) -- (b2);
    \draw (app) -- (b3);
    \draw (app) -- (b4);
    \end{tikzpicture}
            \caption{Illustration of the second part of the proof of Lemma~\ref{lem:reduction-weak-to-strict}. The left side illustrates the neighborhood of an agent~$a$, indifferent among all its neighbors, within the graph~$G$. The right side captures the corresponding gadget in the graph~$G'$. Labels on the edges indicate the preferences of agents. The ranks of edges between $\{a,a',a''\}$ and $\Nbr_G(a)$ can be chosen arbitrarily (but need to be larger than $2$), hence, these labels are omitted.}
    \label{fig:my_label}
\end{figure}

\paragraph{\bf Direction ``$\Rightarrow$''.} Assume that $G$ admits an assignment~$M$ with unpopularity margin at most~$k$. Choose an assignment~$M'$ from~$G'$ such that $f(M')=M$ holds (such an $M'$ is guaranteed to exist by \ref{item:fsurjective}). Let $N'$ be an assignment in~$G'$ maximizing $\Delta(N',M')$, so $\mu(M')=\Delta(N',M')$. Since $\Delta(f(N'),f(M')) \leq \max_N \Delta(N,M) = k$, we know by \ref{item:equalvotes} that agents in $A_{\succ}$ contribute at most~$k$ to $\Delta(N',M')$. Moreover, we claim that agents not in $A_{\succ}$ contribute at most $q$ to $\Delta(N',M')$. To see this, consider the gadget for some agent~$a \in A_{\sim}$. We distinguish two cases. First, assume that the agent from $\{a,a',a''\}$ which is assigned an object in $\Nbr_G(a)$ is the same in assignments $M'$ and~$N'$; w.l.o.g. we assume it is agent~$a$. Then $a'$ and $a''$ together contribute $0$, and $a$ at most~$1$ to~$\Delta(N',M')$. Second, assume w.l.o.g. that $a$ is assigned some object in $\Nbr_G(a)$ by~$M'$, while $a'$ is assigned some object in~$\Nbr_G(a)$ by~$N'$. Then $a$ and $a'$ together contribute~$0$,and $a''$ at most~$1$. As this holds for every gadget belonging to agents in $A_{\sim}$, this proves $\Delta(N',M') \leq k+q$. 

\paragraph{\bf Direction ``$\Leftarrow$''.}
For the other direction, assume that $G'$ admits an assignment $M'$ with unpopularity margin at most $k+q$. We claim that $f(M')$ has unpopularity at most $k$. Assume for contradiction that there exists some assignment $N$ in $G$ with $\Delta(N,f(M')) > k$. We construct $N'$ as follows. For every agent~$a \in A_{\succ}$, we let $N'(a)=N(a)$. Next, for every agent~$a \in A_{\sim}$, let $b_{M}$ be its assigned object in $f(M')$ and $b_N$ be its assigned object in $N$. We can assume w.l.o.g. that $a$ is matched to~$b_M$ in~$M'$, $a'$ is matched to~$b'_a$, and $a''$ is matched to~$b''_a$. In~$N'$ we match $a'$ to~$b_N$, $a''$ to~$b'_a$, and $a$ to~$b''_a$. Then $a$, $a'$, and $a''$ together contribute exactly~$1$ to~$\Delta(N',M')$. Using the same argument for all $a \in A_{\sim}$ and \ref{item:equalvotes} yields that $\mu(M') \geq \Delta(N',M') > k + q$, a contradiction. \qed
\end{proof}
\end{document}